\DeclareMathOperator*{\argmax}{arg\,max}
\DeclareMathOperator*{\argmin}{arg\,min}
\newlength{\defbaselineskip}
\newcommand{\setlinespacing}[1]%
           {\setlength{\baselineskip}{#1 \defbaselineskip}}
\theoremstyle{plain}
\newtheorem{thm}{Theorem}[section]
\newtheorem{lem}[thm]{Lemma}
\newtheorem{prop}[thm]{Proposition}
\theoremstyle{definition}
\newtheorem{rmk}{Remark}[section]
\newcommand{\eps}{\varepsilon}
\DeclareMathOperator*{\esssup}{esssup}
\newcommand{\cU}{\mathcal{U}}
\newcommand{\bP}{\mathbb{P}}
\newcommand{\bR}{\mathbb{R}}
\newcommand{\bN}{\mathbb{N}}
\newcommand{\sF}{\mathscr{F}}
\newcommand{\sL}{\mathscr{L}}
\makeatletter\@addtoreset{equation}{section} \makeatother
\begin{document}

\title{Stochastic Path-Dependent Volatility Models for Price-Storage Dynamics in Natural Gas Markets and Discrete-Time Swing Option Pricing\footnotemark[1] 
}

\author{Jinniao Qiu\footnotemark[2]  \and Antony Ware\footnotemark[2]  \and Yang Yang\footnotemark[2]
}
\footnotetext[1]{This work was partially supported by the National Science and Engineering Research Council of Canada (NSERC). Yang was partially supported by a graduate scholarship through the NSERC-CREATE Program on Machine Learning in Quantitative Finance and Business Analytics (Fin-ML CREATE). Yang gratefully acknowledges support from DFG CRC/TRR 388 ``Rough Analysis, Stochastic Dynamics and Related Fields", Project B02. The authors also acknowledge the support of the Banff International Research Station (BIRS) for the Workshop [23w2004] “Stochastic Modelling of Big Data in Finance, Insurance and Energy Markets”, May 19-21, 2023, where part of this work was done.}
\footnotetext[2]{Department of Mathematics \& Statistics, University of Calgary, 2500 University Drive NW, Calgary, AB T2N 1N4, Canada. \textit{E-mail}: \texttt{jinniao.qiu@ucalgary.ca} (J. Qiu),  \texttt{aware@ucalgary.ca} (T. Ware),  \texttt{yang.yang1@ucalgary.ca} (Y. Yang)
.}

\maketitle

\begin{abstract}
This paper is devoted to the price-storage dynamics in natural gas markets. A novel stochastic path-dependent volatility model is introduced with path-dependence in both price volatility and storage increments. Model calibrations are conducted for both the price and storage dynamics. Further, we discuss the pricing problem of discrete-time swing options using the dynamic programming principle, and a deep learning-based method is proposed for numerical approximations. A numerical algorithm is provided, followed by a convergence analysis result for the deep-learning approach.
\end{abstract}

{\bf Mathematics Subject Classification (2010):}  49L20, 49L25, 93E20, 60H15

{\bf Keywords:} stochastic path-dependent volatility model, dynamic programming principle, deep learning, swing option, rough volatility 

\section{Introduction}


Let $(\Omega,\sF,\{\sF_t\}_{t\ge 0},\bP)$ be a complete filtered probability space. The filtration $\{\sF_t\}_{t\ge 0}$ is generated by a one dimensional Wiener process $W = \{W(t): t\in [0,\infty]\}$ together with all the $\bP$-null sets in $\sF$ and it satisfies the usual conditions. 
We consider a day-ahead market for natural gas with price $S(t)$ for each $t\ge 0$, and define $\overline S(t)=\ln S(t)$ with the following stochastic path-dependent model:
\begin{equation}\label{ori_pro}
\begin{cases}
&d\overline S(t) = (\mu(t)-\lambda \overline S(t))dt + \sigma(t)dW(t), \quad S(0) \text{ is given,}
\\
&\sigma(t) = f(t,X(t),h((\overline S)_t)+\overline S(t)-\overline S(0)).
\end{cases}
\end{equation}
Here $(\mu(t))_{t\geq 0}$ is a stochastic process associated with the \emph{mean-revertion} level of the log-price, and the process $(X(t))_{t\geq 0}$ is a real-valued stochastic process tracking the deviation of the storage level from its periodic curve. $f$ and $h$ are, in general, real-valued continuous functions. Moreover, $(\overline S)_t:= (\overline S(u))_{0\le u\le t}$ denotes the log-price path up to time $t$, while $\sigma:=(\sigma(t))_{t\geq 0}$ characterizes the path-dependent volatility process. 


Understanding the drivers of natural gas prices is of significant interest to many economic agents. The growing role of natural gas as an energy source is reflected by an increasing interest in trying to explain the dynamics of its prices. For many years, fuel switching between natural gas and residual fuel oil kept natural gas prices closely aligned with those for crude oil. As shown by Y\"ucel and Guo in \cite{vucel1994fuel}, crude oil prices were shaped by world oil market conditions, and U.S. natural gas prices adjusted to oil prices. However, in recent times, as shown by \color{black}\cite{brown2008drives}\color{black}, U.S. natural gas prices have been on an upward trend with crude oil prices but with considerably independent movements. Natural gas analysts generally rely on weather and inventories as main drivers of natural gas prices. A subsequent work \cite{brown2008drives} by Brown and Y\"ucel  demonstrated that crude oil and natural gas prices still maintain a powerful relationship, but this relationship is conditioned by weather, seasonality, natural gas storage, and shut in production in the Gulf of Mexico. Rubaszek and Uddin further investigated the role of storage for the dynamics of the U.S. natural gas market in \cite{rubaszek2020role}. They concluded that spot prices were more responsive to economic fundamentals when storage was low, and that the level of underground natural gas storage had a significant impact on the relationship between spot and future prices. More work on the price-storage dynamics may be found in \cite{gouel2017estimating,mu2007weather,thompson2009natural}, to name a few. 

In the field of volatility modeling, the financial industry has predominantly employed four main categories of models: constant volatility, for instance the renowned Black-Scholes model \color{black} \cite{capinski2012black}\color{black} ; local volatility model (LV) (Dupire \cite{dupire1994pricing}); stochastic volatility (SV) (Hull and White \cite{hull1987pricing}, Heston \cite{heston1993closed}, Bergomi \cite{bergomi2005smile}, among many others); path-dependent volatility (PDV) (Guyon \cite{guyon2014path}, Guyon and Lekeufack \cite{guyon2021volatility}). In comparison to the classic constant volatility models, the LV models are superior in their flexibility to fit any abrbitrage-free implied volatility surface, i.e. the `smile', however no more flexibility is left. Meanwhile the SV models have an even richer dynamics capable of capturing key risk factors like forward skew and volatility of volatility. The PDV models combine benefits from both LV models and SV models. Not only can they fit market smile and produce rich implied volatility dynamics, but also generate a much broader range of spot-volatility, volatility-volatility of volatility dynamics. Hence the PDV models have better performance in capturing historical volatility patterns and preventing large mis-pricings. Notably in \cite{guyon2021volatility}, Guyon and Lekeufack uncovered a remarkably simple path-dependency: a linear combination of a weighted sum of past daily returns and the square root of a weighted sum of past daily square returns, employing different time-shifted power-law weights (which could be approximated by superpositions of exponential kernels to produce Markovian models), capturing both short and long memory. The proposed 4-factor Markovian PDV model captures crucial stylized facts of volatility, produces very realistic price and potentially rough volatility paths, and effectively fits both the SPX and VIX smiles. \color{black}Further, it is worthwhile to mention that Zumbach introduced a specific feedback of price returns of volatility in \cite{zumbach2010volatility}, where he proposed that the past trends in price convey significant information on future volatility. This property is later referred to as the \emph{Zumbach effect}. A follow-up work by Dandapani, Jusselin and Rosenbaum in \cite{dandapani2021quadratic} introduced a microscopic path-dependent model encoding the Zumbach effect under a quadratic Hawkes based price process. They showed that after suitable rescaling, the long term limits of these processes are refined versions of rough volatility models. In contrast to our work, their path-dependent volatility takes a quadratic form in which a storage factor is absent. \color{black}

In the realm of volatility models that encompass the concept of roughness, Gatheral, Jaisson and Rosenbaum delved into this subject in their work \cite{gatheral2022volatility}. They examine the smoothness of the volatility process by employing high-frequency S\&P data. Their focus centered on a rough fractional stochastic volatility (RFSV) model, and their findings revealed that the log-volatility behaves akin to a fractional Brownian motion with Hurst exponent $H$ of order 0.1, thereby leading to the conclusion that volatility inherently displays rough characteristics. Building up on this, Bayer, Friz and Gatheral extended the investigation in a subsequent paper \cite{bayer2016pricing} where they demonstrated that the RFSV model introduced by Gatheral et al. in their early work exhibited remarkable consistence with actual financial time series data. Additionally, their exploration extended to a rough Bergomi model, revealing its superior fit to the SPX volatility model in comparison to conventional Markovian stochastic volatility models. Notably, this enhanced fitting was achieved with fewer parameters. For further insights into the world of rough volatility models and the corresponding challenges in option pricing, refer to \cite{bayer2022pricing,chevalier2022american,fukasawa2019volatility}.

Our work focuses on a novel stochastic path-dependent volatility model and associated discrete-time swing option pricing in the natural gas market. Inspired by the early contributions of Guyon, we introduce the path-dependence via a power-kernel-based price moving average, thus endowing the price dynamics with non-Markovian properties. By adjusting \color{black}the parameters\color{black}, our model has the potential to encompass both the classic and rough volatility scenarios. The introduced price moving average not only serves as an indicator of the corresponding storage dynamics, capturing the essence of `buy low sell high' investment strategy, but also it captures key information of the ``roughness" to the price trajectories. This is different from Guyon's work in \cite{guyon2023volatility}. In particular, we differentiate between storage and volatility dynamics due to their distinct behaviours in financial contexts. Incorporating path-dependency into both price and storage dynamics empowers our model to not only capture intricate price-storage dynamics in the natural gas market, but also to effectively elucidate the interplay between storage and price volatility which is particularly evident when the storage nears full capacity or approaches depletion; this, to the best of our knowledge, is new.  Further, we address the pricing problems for discrete-time swing options with dynamic programming approaches. For approximating the involved conditional expectations, we utilize a deep learning-based method with neutral networks. A numerical algorithm is introduced and a convergence analysis conducted.

Before proceeding to a more detailed natural gas price-storage model, let us first consider some characteristics of natural gas prices and storage, as shown in Figure 1.   From the plots, we observe that
\begin{enumerate} [label=(\roman*)]
\item Natural gas prices are in general quite volatile and potentially contain "roughness".
\item There is strong seasonality in market storage in the natural gas market.
\item There is little visible periodicity in the evolution of the natural gas spot price or its increments.
\item In contrast to the highly volatile price process, the evolution of storage levels for natural gas is relatively smooth. However, the increments  do show more volatility.
\end{enumerate}
  These observations motivate us to consider a model allowing for possible ``roughness" in both the price volatility and storage increment processes. Our model recognizes different behaviours of market storage and price volatility since, for instance, they exhibit different regularity properties. Further, as the influence of periodicity on price volatilities is not directly observed,  a decomposition of the storage data is performed to better distinguish between the (log-)price-storage dynamics and the periodic behaviour of gas market storage. \color{black}It is worthwhile to mention that since seasonality is evident in many future curves of the gas markets as shown in \cite[Fig.3]{borovkova2006seasonal} and \cite{moreno2019long,chen2022stochastic}, among many others, it is natural to study the price-storage-volatility dynamics in the sense of future prices, for instance, the forward price. In this work we study one-day ahead prices, in line with \eqref{ori_pro}, although the model could be adapted to accommodate futures prices.

\begin{figure}
\begin{center}
\includegraphics[width=0.49\linewidth]{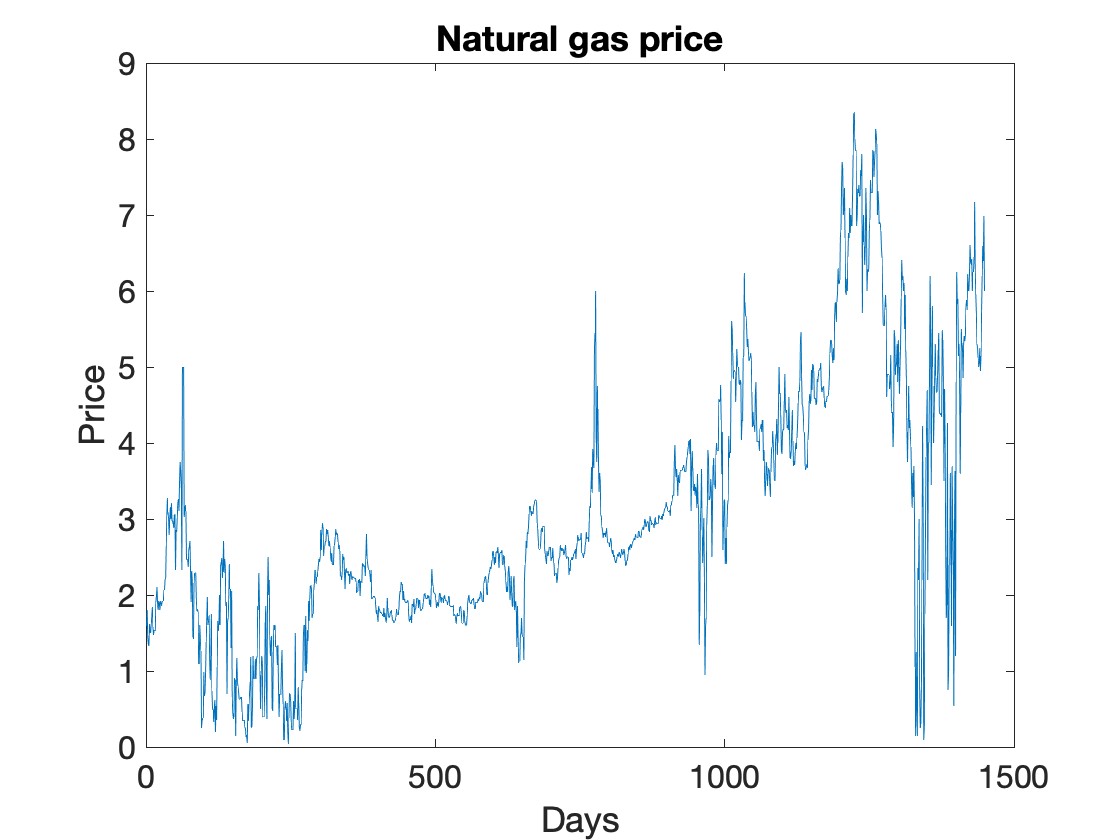}\label{price}
\includegraphics[width=0.49\linewidth]{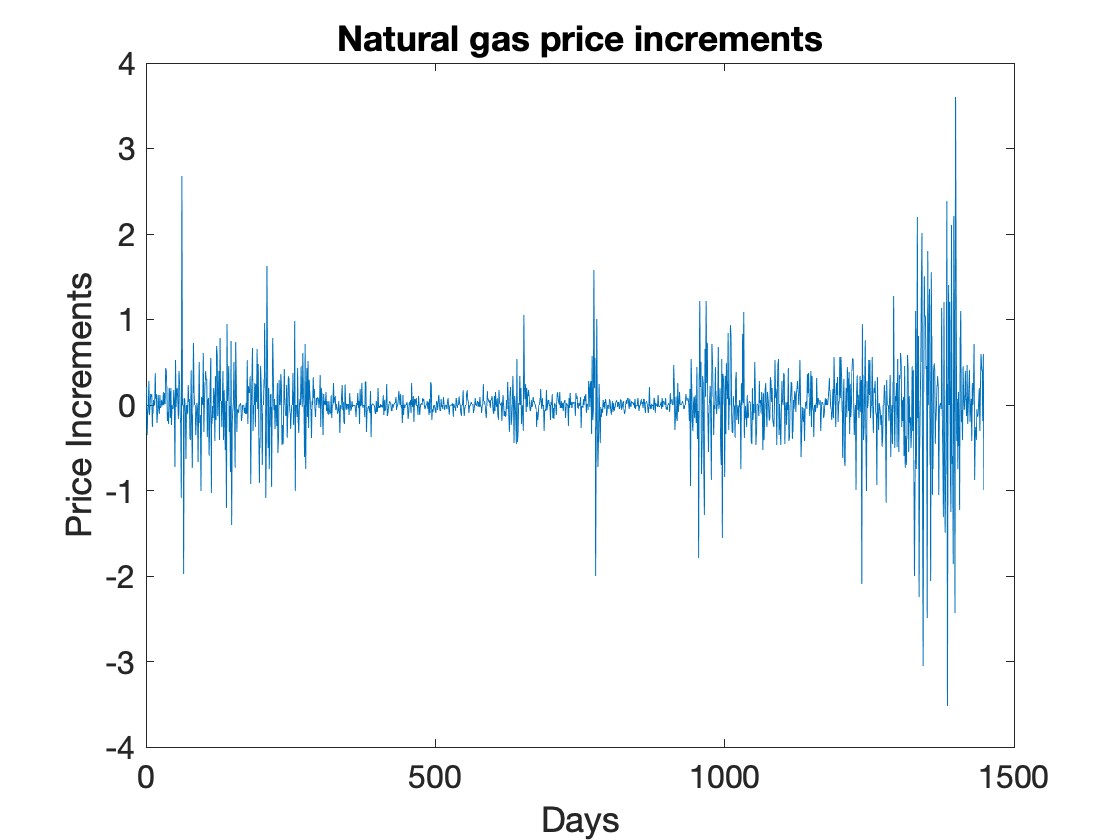}\label{price_inc}
\end{center}
\caption{NGX Phys FP [CA/GJ] TCPL - Empress - Daily and its Increments (Jan, 2019 - Dec, 2022)}
\label{fig:price_increment}
\end{figure}

\begin{figure}
\begin{center}
\includegraphics[width=0.49\linewidth]{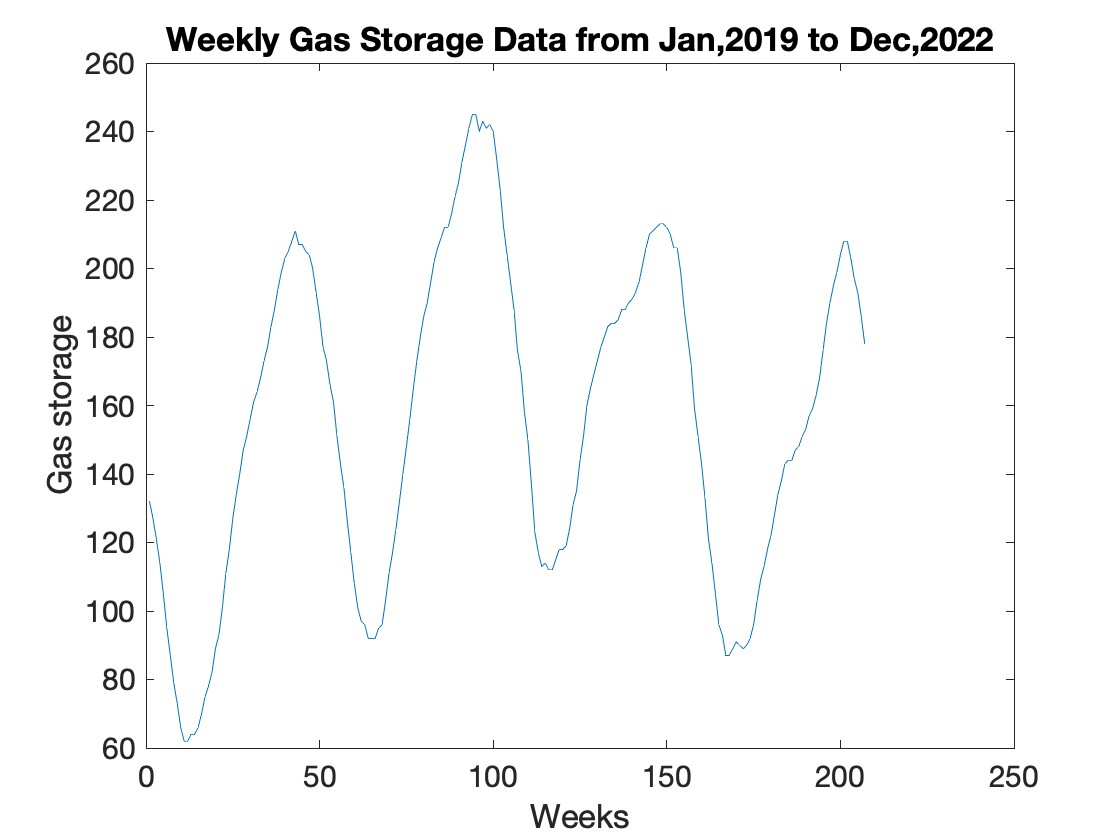}
\includegraphics[width=0.49\linewidth]{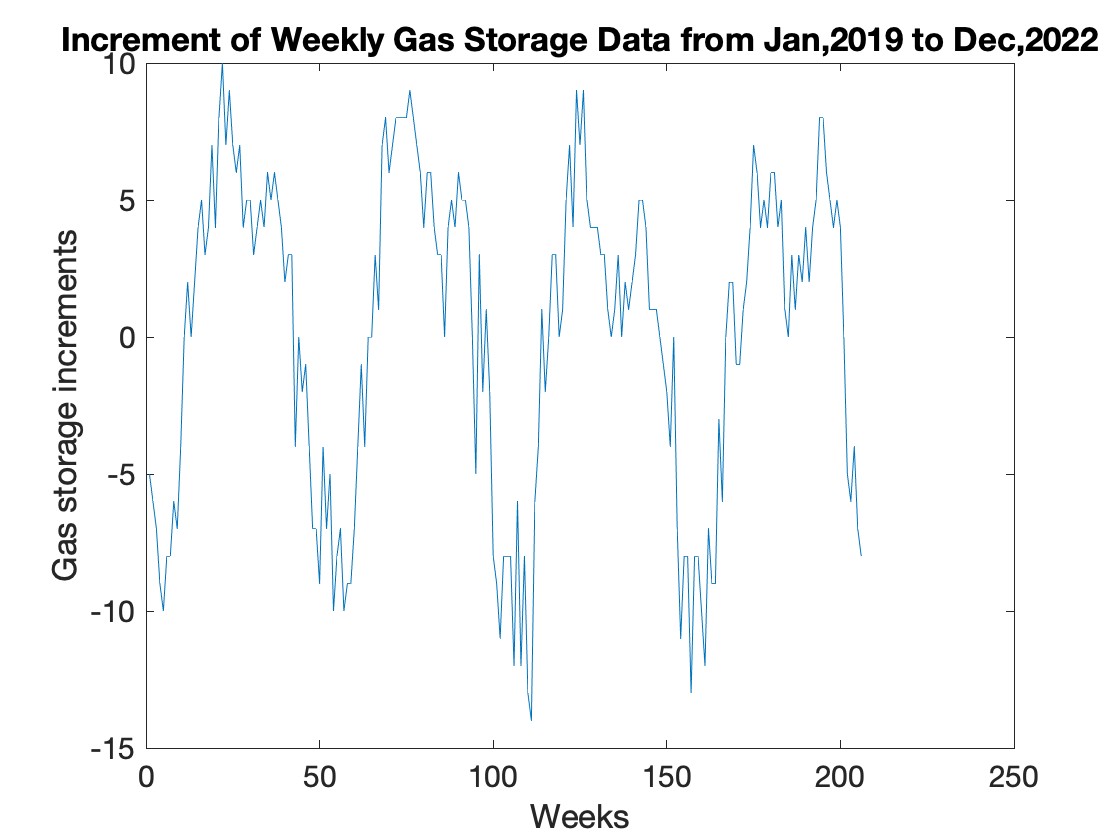}
\end{center}
\caption{Weekly Average Mountain Region Natural Gas Working Underground Storage and its Increments (Jan, 2019 - Dec, 2022)}
\label{fig:storage_increment}
\end{figure}

\color{black}
Our paper is organized as follows. In Section~2, we show some preliminary notations and introduce a specific kind of path-dependence. Then, incorporating a storage factor, we introduce a novel stochastic path-dependent volatility model, in which the path-dependence is embedded into both price volatility and storage dynamics. A discussion of the well-posedness of the model is included. In Section~3, we conduct a detailed calibration of both price and storage dynamics using a forward Euler-Maruyama approximation. We adopt log-likelihood function and mean square error to calibrate the log-price and storage dynamics, respectively. The optimal parameters are computed using a consensus-based optimization (CBO) method. Section~4 is devoted to discrete-time swing option contracts and associated pricing problems. A dynamic programming approach is adopted, with a neural network approximation scheme, and a numerical algorithm is presented along with a convergence analysis. \color{black}


\section{General Motivations}
\subsection{Source of path-dependence}
The model we work with in this paper is a special case of \eqref{ori_pro}, where we specify 
\[ \mu(t)=r-\frac{|\sigma(t)|^2}{2},\; t\geq 0,\]
so that the log-price process $\{\overline{S}(t)\}_{t\in [0,T]}$ satisfies 
\begin{align}
\overline S(t)=\overline S(0)+\int_0^t \left(r-\frac{|\sigma(u)|^2}{2} - \lambda\overline S(t)\right)du
+\int_0^t \sigma(u) dW(u), \label{eq-logPrice}
\end{align}
with $r,\lambda \ge 0$ as some constants. A possible mean-reverting behaviour is embedded into the log-price if $\lambda>0$. We note that we can also write
\begin{equation}
\label{eq-logPrice-exp}
    e^{\lambda t}\overline S(t)=\overline S(0)+\int_0^t e^{\lambda u}\left(r-\frac{|\sigma(u)|^2}{2}\right)du+\int_0^t e^{\lambda u}\sigma(u) dW(u).
\end{equation}

When $\lambda=0$, this coincides with the price dynamic satisfying a linear stochastic differential equation:
\begin{align}\label{linear_no_mr}
S(t) = S(0)+\int_0^trS(u)du + \int_0^t S(u)\sigma(u)dW(u).
\end{align}
To explore more about the source of path-dependence implied by $h(\overline{S}_t)$ as in \eqref{ori_pro}, let us, in the spirit of \cite{guyon2021volatility}, introduce a weighted moving average of $\{\overline{S}(t)\}_{t\in [0,T]}$
\begin{equation}
\overline{S}^{\alpha}_t := \int_0^t \! k^{\alpha}(t,u)\overline{S}(u)du \,\,\textbf{1}_{\{t>0\}} + \overline{S}(0)\textbf{1}_{\{t=0\}} ,\quad t\geq 0,\label{mov_aver}
\end{equation}
  where $\alpha\in (\frac{1}{2},1]$ is a constant parameter. For each $t>0$, the function $k^{\alpha}(t,\cdot): [0,t]\rightarrow [0,\infty]$ is an $\alpha$-parametrized weight kernel. 
Here, we choose
\begin{align}\label{kernel}
k^1(t,u)=2\delta_t(u),\quad
k^{\alpha}(t,u) = \frac{1-\alpha}{t^{1-\alpha}(t-u)^{\alpha}}, \text{ }\text{ }\alpha\in (\frac{1}{2},1), \quad t>0,
\end{align}
where for $t>0$, $\delta_t(\cdot)$ is the Dirac delta function with $\int_0^t 2\delta_t(u) g(u)\,du=g(t)$ for each continuous function $g$. Later (see \eqref{S-delta}), we will introduce an approximation to this kernel that allows the definition of $\overline{S}^{\alpha}_t$ to be extended to $\alpha>1$.

{\color{black}Note that, while the moving average $\overline{S}^\alpha(t)$ is only defined in \eqref{mov_aver} for $\alpha\leq 1$ when $k^\alpha$ is given by \eqref{kernel}, we will be interested in the \emph{difference} $\overline{S}_t^{\alpha} - \overline{S}(t)$, and this can a.s.~be well defined for a wider range of values of $\alpha$. We have the following lemma. 

}  
\begin{lem}\label{kernel1}
Suppose that the volatility process $\sigma$ in \eqref{eq-logPrice} has continuous trajectories and is square-integrable over $\Omega\times [0,T]$ for any $T>0$. Then for all $\alpha\in (\frac{1}{2},\frac{3}{2})$, the integral 
\begin{align}
h(\overline{S}_t):=(1-\alpha) \int_0^t \! \frac{\overline{S}(u) - \overline{S}(t)}{t^{1-\alpha}(t-u)^{\alpha}}du
\label{h}
\end{align}
is a.s. well defined for each $t\geq 0$. Moreover, for $\alpha\in (\frac12,1]$, $h(\overline{S}_t)$ can be identified with the relative log price level $\overline{S}^{\alpha}_t  -\overline S(t)$.
\end{lem}
 \begin{proof}
 When $\alpha\in (\frac{1}{2}, 1)$, the integral \eqref{h} is a.s. well defined for each $t\geq 0$, owing to the path-continuity of $\overline{S} $. If $\alpha=1$, we have $h\equiv 0$. Thus, our task narrows down to demonstrating the well-definedness of the integral \eqref{h}  when $\alpha\in(1, \frac{3}{2})$ as specified for the remainder of this proof.

 If the volatility process $\sigma$ has continuous trajectories and is square-integrable over $\Omega\times [0,T]$ for any $T>0$, one may check that the log-price process $\overline S$ admits $\beta$-H\"older continuous paths for any $\beta\in (0,\frac{1}{2})$ and thus the integral \eqref{h} is well defined. Indeed, recalling  the stochastic differential equation \eqref{eq-logPrice} for the log-price, and putting
 $$M(t)=\int_0^t e^{\lambda s}\sigma(s) dW(s),$$
we conclude from the continuity of $\sigma$ and from \eqref{eq-logPrice-exp} that almost surely
\begin{align}
|\overline S(t) - \overline S(u) | 
&=O(|t-u|) + |e^{-\lambda t}M(t)-e^{-\lambda t} M(u)| \nonumber\\
&\leq O(|t-u|) + e^{-\lambda t}|M(t)-M(u)| 
+ \sup_{s\in [0, t\vee u]} |M(s)| \cdot O(|t-u|). \label{rmk-kern-1}
\end{align}
The square-integrability of $\sigma$ implies that $(M(t))_{t\geq 0}$ is square-integrable continuous martingale, and by the representation of continuous martingales via time-changed Brownian motions (see \cite[Chapter V, Theorem (1.6)]{revuz1999continuous}), we have $M(t)=B(L(t))$ with $B(\cdot)$ being a standard Brownian motion and
$$L(t)=\int_0^t|e^{\lambda (s-t)}\sigma(s)|^2ds <\infty \quad \text{a.s.}$$
Therefore, for each $\beta\in (0,\frac{1}{2})$
\begin{align}
\sup_{0\leq u<t}\frac{|M(t)-M(u)|}{|t-u|^{\beta}}
&= \sup_{0\leq u<t} \frac{|B(L(t))-B(L(u))|}{\left(L(t)-L(u)\right)^{\beta}}\cdot \frac{\left(L(t)-L(u)\right)^{\beta}} {|t-u|^{\beta}}
\nonumber \\
&\leq \sup_{ 0\leq \tau< L(t)} \frac{|B(L(t))-B(\tau)|}{|L(t)-\tau|^{\beta}}\cdot \max_{0\leq u \leq t} |e^{\lambda u}\sigma(u)|^{2\beta}
\nonumber\\
&<\infty, \quad \text{a.s.,} \label{rmk-kern-2}
\end{align}
where 
the last inequality follows from the continuity of $\sigma$ and the local
$\beta$-H\"older continuity of Brownian motion paths for all $\beta\in(0,\frac{1}{2})$ (see \cite[Chapter I, Theorem (2.2)]{revuz1999continuous})).
We may take an arbitrary $\beta\in (\alpha-1,\frac{1}{2})$.
 Combining \eqref{rmk-kern-1} and \eqref{rmk-kern-2} yields that 
\begin{align}
\frac{|\overline S(t) - \overline S(u) |}{|t-u|^{\alpha}}
 =O(|t-u|^{\beta-\alpha}),  \label{holder-est-logS}
\end{align}
which implies that the integral of \eqref{h} is well defined since $\beta-\alpha \color{black}>\color{black} -1$.  
\end{proof}

{\color{black}With the help of Lemma \ref{kernel1}, we introduce a mechanism by which the history of $\overline{S}$ can have an influence on the volatility.}
It is natural to increase the amount of stored gas at any time $t$, if the storage is not already full yet, when the current price level of the underlying asset is comparatively lower, for instance $\overline{S}(t)\le \overline{S}_{threshold}$ where $\overline{S}_{threshold}$ is some price threshold, whereas, if the current price level gets too high, investors tend to free up the storage by selling the underlying assets at hand. Such threshold, for instance, can be captured by the above defined log moving average. Thus in view of Lemma \ref{kernel1},  we introduce the functional on path $h(\overline{S}_t)$ as defined in \eqref{h}. In fact, by \eqref{holder-est-logS}, integrating by parts gives that
 \begin{align*}
h (\overline{S}_t) 
&= (1-\alpha) \int_0^t \! \frac{\overline{S}(u) - \overline{S}(t)}{t^{1-\alpha}(t-u)^{\alpha}}du \nonumber
\\
&=   - \lim_{u\rightarrow t^-}(\overline{S}(u) - \overline{S}(t))t^{\alpha - 1} \cdot (t-u)^{1-\alpha} +  (\overline{S}(0) - \overline{S}(t))
+ \lim_{\tau \rightarrow t^-}\int_0^{\tau} \! t^{\alpha - 1} (t-u)^{1-\alpha}d\overline{S}(u) \nonumber
\\
&=  \overline{S}(0) - \overline{S}(t) + \int_0^t \!  \left(\frac{t-u}{t}\right)^{1-\alpha}d\overline{S}(u), 
\end{align*}
so that
\begin{equation}
h (\overline{S}_t)+\overline{S}(t)-\overline{S}(0)=\int_0^t \!  \left(\frac{t-u}{t}\right)^{1-\alpha}d\overline{S}(u). \label{h_delta_ibp}
\end{equation}
In Section~3 we will use this quantity to help define our rough volatility $\sigma(t)$. To motivate this approach, we consider the special case $\lambda=0$.
Inserting \eqref{eq-logPrice} into \eqref{h_delta_ibp}, we  obtain 
\begin{align}
h (\overline{S}_t) -\overline{S}(0) + \overline{S}(t) 
&=\int_0^t \!  \left(\frac{t-u}{t}\right)^{1-\alpha}\left[ \Big(r-\frac{\sigma(u)^2}{2}\color{black}\Big)du+ \sigma(u)dW(u)\right]
\nonumber\\
&
=\int_0^t \!  \left(\frac{t-u}{t}\right)^{1-\alpha}  \Big(r-\frac{\sigma(u)^2}{2}\Big)du+ \int_0^t \!  \left(\frac{t-u}{t}\right)^{1-\alpha}  \sigma(u)dW(u). 
\label{eq-rV1}
\end{align}
If we were developing a nonlinear model as in \eqref{ori_pro} by  setting  specifically
\begin{equation}
|\sigma(t)|^2-|\sigma(0)|^2= \left(h (\overline{S}_t) -\overline{S}(0) + \overline{S}(t) \right) \cdot t^{1-\alpha}, \label{rH_V}
\end{equation}
putting $Y(t)=|\sigma(t)|^2$, we could  have obtained  by \eqref{eq-rV1} that 
\begin{align}
Y(t)=Y(0) + \int_0^t \!  \frac{(t-u )^{1-\alpha} }{2} \Big(2r- Y(u)\Big)du+ \int_0^t \!   (t-u)^{1-\alpha}  \sqrt{Y(u)}dW(u),
\label{rough-Heston}
\end{align}
which yields a rough Heston model (see \cite{abi2019affine,chevalier2022american} for instance). Here, $\frac{3}{2} - \alpha$ measures the roughness analogous to the Hurst parameter.
\begin{prop}
When $\lambda=0$, the stochastic path-dependent differential system \eqref{eq-logPrice}, \eqref{rH_V} and \eqref{h} produces a rough Heston volatility model \eqref{rough-Heston} when $\alpha\in (1,\frac{3}{2})$, with Hurst parameter $H:=\frac{3}{2}-\alpha$.
\end{prop}

Thus, this approach allows us to establish a connection between the introduced path dependence and rough volatility models. As rough volatility emerges as a novel paradigm in finance, it prompts us to integrate path-dependence into the modeling of the volatility process $\sigma$.  However, in this study, we opt not to solely rely on rough volatility models, recognizing the necessity for a more nuanced approach to modeling natural gas prices. For instance, the volatility term $\sigma$ is also subject to market storage levels. Observations suggest that when storage is nearly full or depleted, (log-)price volatility (measured by $\sigma$) escalates. This is because the market's capacity to stabilize diminishes, either by requiring more storage for the underlying asset or by having limited storage space available. Therefore, we introduce both the path-dependence and storage-coupling by setting $\sigma(t) = f(X(t),h(\overline S_t)+\overline S(t) - \overline S(0))$ in \eqref{ori_pro} where the stochastic process $(X(t))_{t\geq 0}$ tracks the deviation of the storage level from its periodic curve as defined in next subsection.  Note that if $\alpha \in (\frac{1}{2}, 1]$, it holds obviously that $h(\overline S_t)-\overline S(0)+\overline S(t)=\overline{S}^{\alpha}_t -\overline S(0)$. 
In fact, empirical studies (see \cite{guyon2021volatility} for instance) indicate a positive correlation between  the average growth in (log-)prices (represented by $\overline{S}^{\alpha}_t -\overline S (0)$ herein) and  the volatility process $\sigma$. \color{black}This is different from the case in \cite{dandapani2021quadratic}, in which although the Zumbach effect remains explicit in the limiting rough Heston type volatility models, the main results are under a Hawkes based price process and the price volatility does not involve a storage factor.\color{black}

\subsection{Storage-volatility interaction}

Throughout this work, we apply the daily natural gas data \emph{NGX Phys FP [CA/GJ] AB-NIT-D} from January 5, 2019 to December 16, 2022, with a total of 1442 consecutive \emph{daily} one-day ahead prices and the associated \emph{weekly average} Mountain Region Natural Gas Working Underground Storage (Billion Cubic Feet) data from January 5, 2019 to December 9, 2022, with a total of 206 consecutive weekly storage data, to the calibration of the proposed model \eqref{model_new} and the associated discrete swing option pricing problem. 

Here we present Figure \ref{fig:storage_volatility} as a peek into the storage-volatility interactions in natural gas markets. The concept of ``price increment" is chosen to qualitatively describe the price volatility observed in the gas market data. This figure clearly shows when gas storage level is near depletion (reaching 0) or its full capacity (reaching 1), the gas price becomes more fluctuating, supporting our expectation that market storage should play the role of a \emph{calming factor} in natural gas markets,  at least to some extent. Meanwhile when the gas price is more volatile, gas storage level is not necessary at its peaks or troughs. This indicates that for one thing, the gas price, or its volatility, has a limited impact on the market storage level, and price volatility could also be driven by other factors, besides the gas storage level; for another, such "calming factor" might have a capacity.
\begin{figure}[!htbp]
\begin{center}
\includegraphics[width=0.49\linewidth]{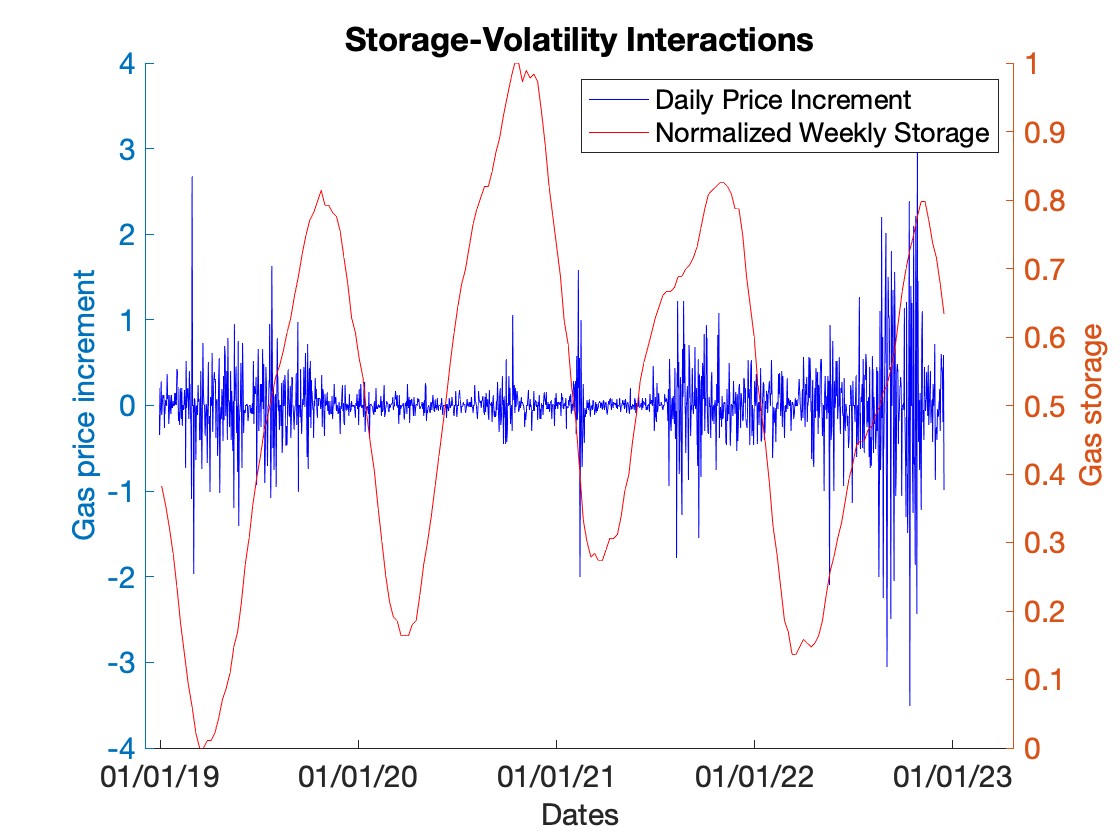}
\end{center}
\caption{Natural Gas Storage-Volatility Interactions (Jan, 2019 - Dec, 2022)}
\label{fig:storage_volatility}
\end{figure}

\subsection{Storage drivers}
In previous subsections, we have already established that the price volatility could be driven by past history of price and the market storage level in natural gas markets. \color{black}Such path-dependence could reasonably be extended to the gas storage modeling, as it inherently requires a proper assessment of gas prices at each point in time. The financial principle of \emph{“buy low, sell high”} naturally necessitates a comparison between current prices and their historical trends. \color{black} In this subsection, we take a deeper dive into the dynamics of the gas storage level and its drivers. Before we proceed further, there are two major issues we need to deal with:
\begin{enumerate}[label = (\Alph*)]
\item Deseasonalizing the storage data;
\item Mis-match of time step sizes between the price and storage data.
\end{enumerate}
\begin{rmk}
The strong periodic behaviour observed in historical gas storage data in Figure \ref{fig:storage_increment} is mainly driven by the seasonality of energy demand, for instance in heating, which heavily depends on other factors such as weather instead of the (log-)-price process itself. This renders the periodicity in gas storage highly predictable. Thus any influence of gas price levels can only be expected to be on the difference between storage levels and their usual seasonal levels. Because of this, we deseasonalize the storage data and incorporate this difference explicitly in our models. 
\end{rmk}
\begin{rmk}
The mis-match of time step sizes between the price and storage data is caused by our choice of gas price and storage datasets, which leads to inconsistency in the time evolution of log-price and storage processes in our proposed model \eqref{model_new}. Ideally this issue can be avoided provided that gas price and storage data with consistent time step sizes are available.
\end{rmk}

For the first issue, we adopt the following two-step decomposition scheme.

\noindent\textbf{Step 1}: Let $ST_{\text{weekly}}$ and $^1{ST_{\text{weekly}}}$ denote the original and normalized natural gas storage process, with $^1{ST_{\text{weekly}}}$ defined as
\begin{equation*}
^1{ST_{\text{weekly}}}(t) = \frac{ST_{\text{weekly}}(t)}{ST_{cap}},
\end{equation*}
for each $t\in [0,T]$ where the quantity $ST_{cap}>0$ is a known constant denoting regional gas storage capacity. For simplicity, we choose $ST_{cap} = \max_{t\in [0,T]}ST_{\text{weekly}}(t)$ throughout this work. Thus for any $t\in [0,T]$, $^1{ST_{\text{weekly}}}(t)\in (0,1]$. 

\noindent\textbf{Step 2}: We apply the Fourier transform method on the storage data to obtain the global periodic approximation $P_{\text{weekly}}(t)$ of the normalized storage data $^1{ST}_{\text{weekly}}(t)$. We take the period $L = \frac{365}{7}$, and the Fourier transform approximation result $\{P_{\text{weekly}}(t)\}_{t\in [0,T]}$ is shown below.
\begin{figure}[h]
\begin{center}
\includegraphics[width=0.45\linewidth]{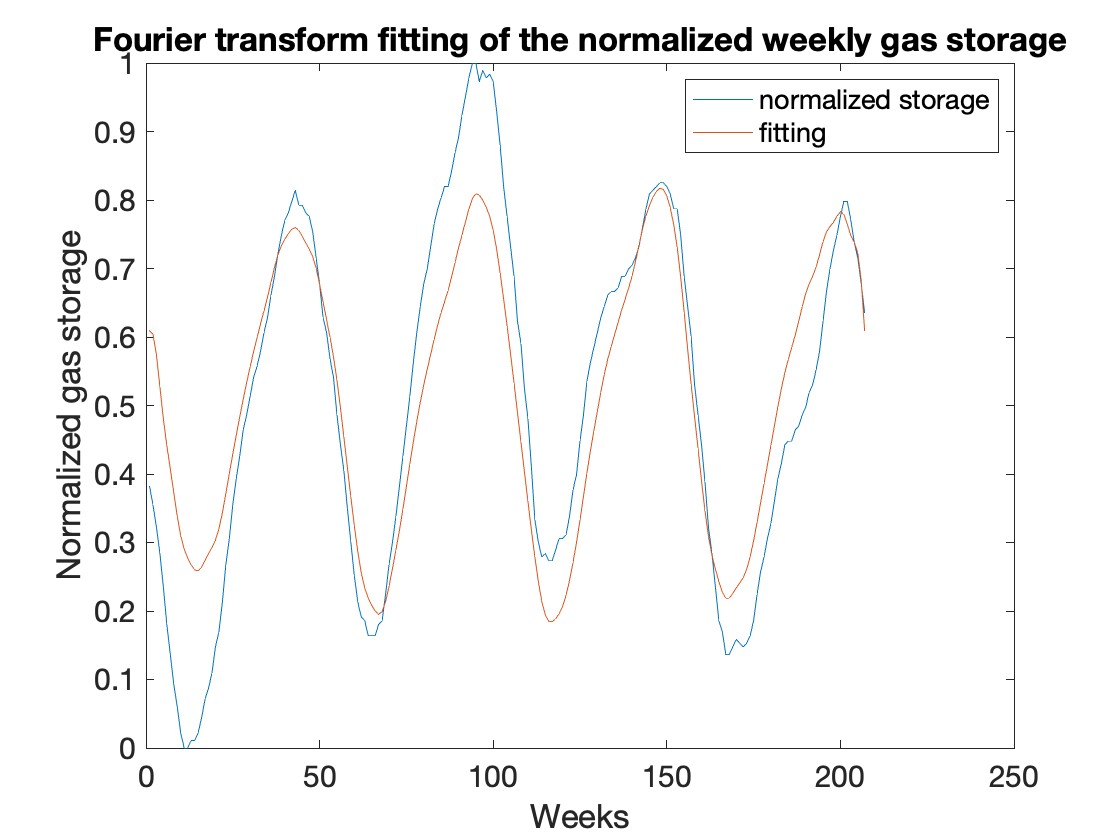}
\end{center}
\caption{Fourier transform of normalized storage data from Jan 4, 2019 to Dec 16, 2022.}
\label{fig:fourier}
\end{figure}
Note here the \emph{deseasonalized weekly average storage} is then defined by
\begin{equation}\label{realX_decomp}
X_{\text{weekly}}(t) := {^1{ST}}_{\text{weekly}}(t) - P_{\text{weekly}}(t).
\end{equation}

In order to solve the second issue, we apply step functions to convert weekly data
$P_{\text{weekly}},X_{\text{weekly}}$ to daily processes $P,X$, respectively, as introduced in \eqref{V_new} and \eqref{X_new}. \color{black}Alternatively, a linear interpolation approach could be applied to convert weekly data to daily ones. 

\section{A Novel Path-Dependent Volatility Model}
\subsection{Price-storage model}
\color{black}Consider the following log-price-volatility dynamics
\begin{align}
d\overline S(t) = &\left(r-\frac{1}{2}|\sigma(t)|^2-\lambda\overline S(t)\right)dt + \sigma(t)dW(t),\label{lin_S}
\\
\sigma(t) = &\Psi(t, (\overline S)_t, X(t) + P(t)),\label{price_vol}
\end{align}\color{black}
where $\Psi$ is, in general, an adapted, time-continuous, and non-negative function, and for each $t\in [0,T]$, $\Psi(t,\cdot,X(t)+P(t))$ is assumed to be nondecreasing and Lipschitz continuous with respect to the path. The drift coefficient in \eqref{lin_S} is possibly mean-reverting in general, with $\lambda\ge 0$; however, the diffusion coefficient is nonlinear due to the path-dependence introduced by the moving average. The well-posedness of the solution to \eqref{lin_S} may then be guaranteed. 

The following is the specific price-storage model that we adopt in this paper. Given a time-periodic function $(P(t))_{t\geq 0}$,  $\alpha\in (\frac{1}{2}, \frac{3}{2})$,  $\delta>0$, $\lambda,V_0,V_1,V_2\ge 0$, and $  r,\gamma_1,\gamma_2 \in \bR$, the couple $(\overline S,X)$ satisfies:
\begin{subequations}
\label{model_new}
\begin{align}
d\overline{S}(t) &= \left( r - \frac{1}{2} \sigma(t)^2 - \lambda\overline S(t) \right)dt +  \sigma(t)dW(t), \quad  \overline{S}(0) = \overline{S}_0 \text{ is given,}
 \label{S_new} \\ 
\sigma(t) &= V_0 + \frac{V_1}{(X(t)+P(t))(1- X(t) - P(t))+\delta} + V_2 \sqrt{|\overline{S}_t^{\alpha,\delta}-\overline{S}(0)| + \delta} ,\label{V_new}
\\
dX(t) &= \left[ \gamma_1 R^{+}(t) (1-X(t)- P(t)) - \gamma_2 R^{-}(t)(X(t) + P(t))\right]dt,\quad X(0) \text{ is given,} \label{X_new}
\end{align}
\end{subequations}
with
\begin{align}
\overline{S}^{\alpha,\delta}_t &=
(1-\alpha) \int_0^t \! \frac{\overline{S}(u) }{(t+\delta)^{1-\alpha}(t-u+\delta)^{\alpha}}du  \color{black}+\color{black} \overline S(t)1_{\{\alpha= 1\}}, \label{S-delta}
\\
R(t) &= (1-\alpha)\int_0^t \! \frac{\overline{S}(u)-\overline{S}(t)}{(t+\delta)^{1-\alpha}(t-u+\delta)^{\alpha}} du.\label{R_new}
\end{align}
Here, $X(t)+P(t)$ represents the daily time $t$ normalized market storage, with the ``weekly-daily" transformation accomplished by a step function. $R^+,R^-$ are non-negative and non-positive parts of $R$, respectively, defined in the standard way
\begin{align*}
R^{+}(t) &= {R}(t)\cdot 1_{\tilde{R}(t)\ge 0}(t),
\\
R^{-}(t) &= (-{R}(t))\cdot 1_{\tilde{R}(t)< 0}(t).
\end{align*}
To ensure well-posedness and computational stability without resorting to cumbersome arguments, we introduce a very small $\delta>0$. The model is to be calibrated for the parameters of the set $\Theta =\{\alpha,r,\lambda,V_0,V_1,V_2,\gamma_1,\gamma_2\}$.
\begin{rmk}
When $\lambda=0$, the stochastic differential equation (SDE) \eqref{S_new} is equivalent to the associated SDE with respect to the price process in \eqref{linear_no_mr}. Throughout this work, we stick to the log-price dynamics since it provides more convenience for us to assess the wellposedness of the solution to system \eqref{model_new}.
\end{rmk}
\begin{rmk}
The volatility process in \eqref{V_new} comprises three components. When $V_1=V_2=0$ and $V_0>0$, our model reduces to the conventional constant volatility scenario; when $V_0=V_1=0$ and $V_2>0$, it aligns with a rough Heston model as discussed in \eqref{eq-rV1}-\eqref{rough-Heston}; when $V_1>0$, the volatility process becomes dependent on storage levels.
Notably, the second term in \eqref{V_new} indicates that volatility tends to rise as the storage level approaches its boundaries, signifying moments when the market storage is near full capacity or depletion, thereby diminishing the market's stabilizing influence and consequently amplifying volatility in natural gas prices.
The inclusion of the third term in \eqref{V_new} is inspired by rough volatility models as discussed in \eqref{eq-rV1}-\eqref{rough-Heston}. To facilitate the analysis of well-posedness and subsequent computational approximations, we adopt a $\delta$-approximation technique to mitigate the lack of Lipschitz continuity in square root functions around 0. 
\end{rmk}

\begin{rmk}
\color{black} The normalized storage level $X+P\in (0,1]$. According to empirical studies, the deseasonalized storage $X$ is more likely to decrease or increase, respectively, when the price level is relatively high or low. Meanwhile this behaviour aligns with the tendency when the normalized storage level is reaching fullness or depletion, which is consistent with real market behaviour of natural gas storage. \color{black}This is consistent with the calibration results in Table \ref{st_14} and \ref{st_30}, where we have more parameters and most entries of $\vec{\gamma_1},\vec{\gamma_2}>0$\color{black}. Indeed, this is a sufficient condition for solutions of \eqref{X_new} to satisfy $X+P\in (0,1]$. Nevertheless, in the model we allow $\vec \gamma_1,\vec \gamma_2 \in\bR$ to cover all the possible range, since  the storage dynamics in real markets may also be affected by some complicated factors other than price and inventories, for instance weather, supply and demand (\cite{brown2008drives,hulshof2016market}). 
\end{rmk}

\subsection{Wellposedness of the path-dependent stochastic differential equation \eqref{model_new}}

This subsection is dedicated to establishing the existence and uniqueness of the strong solution to the path-dependent stochastic differential equation \eqref{model_new}. The methodology employed combines localization techniques and a priori estimates. 
Throughout the remainder of this paper, we will denote a constant by $C$, whose specific value may vary from line to line. Additionally, when we use $C_{a_1,a_2,\dots}$, it implies that the constant depends on the parameters $a_1,a_2,\dots$.

First, we present two priori estimates. 
 
\begin{lem}\label{lemma_wp}
  Let $\tau$ be an arbitrary stopping time. Suppose that the continuous process  $(\overline S,\sigma,X)$ satisfies the stochastic system \eqref{model_new} on $[0,\tau]$. It holds that for any $T>0$,
\begin{enumerate}[label=(\roman*)]
\item there exists a constant $C_{\alpha,\delta,\overline{S}(0),T,V_0,V_1,V_2}>0$ such that for each $t\in [0,T]$,
\begin{equation*}
\sup_{0\leq s\leq t}|\sigma(s\wedge \tau)|^2 \le C_{\alpha,\delta,\overline{S}(0),T,V_0,V_1,V_2} \left( 1+ \sup_{s\in [0,t]}|\overline{S}(s\land\tau)| \right), \text{ a.s.,}
\end{equation*}
\item there exists $0<C<\infty$ such that
\begin{equation*}
 \color{black}E\color{black} \left[ \color{black}\sup_{0\le t\le T}\color{black} \left| \overline{S}(t\land\tau) \right|^2 \right] \le \color{black}C,
\end{equation*}
where $C$ is independent of $\tau$.
\end{enumerate}
\end{lem}
For simplicity, the proof will be postponed to Appendix A. Subsequently, we are ready to present the well-posedness result. 
\begin{thm}\label{thm_wp}
For each $T>0$, the stochastic differential equation \eqref{model_new}  admits a unique strong solution $(\overline S, X)$ over the time interval $[0,T]$ with
\begin{align}
E\left[
\sup_{t\in[0,T]} \left(|\overline S(t)|^2+ |X(t)|^2\right)
\right] <\infty. \label{est-thm}
\end{align}
\end{thm}
For similar reasons, the proof of this theorem will also be postponed to Appendix A.

\section{Calibration}
\color{black}Our proposed storage dynamic $X$ is modeled using an ordinary differential equation (ODE). As illustrated in Figures \ref{fig:price_increment} and \ref{fig:storage_increment}, the weekly storage trajectory exhibits pronounced regularity compared to the highly volatile daily price dynamics. The \emph{``smooth"} evolution of weekly storage contrasts sharply with the erratic behavior of prices, which motivates the exclusion of diffusion processes (e.g., Brownian motion) from its characterization. This structural distinction allows for a two-step calibration procedure: first, we estimate the parameters of the price model \eqref{S_new}, and subsequently, we calibrate the unknown coefficients $\gamma_1,\gamma_2$ in the storage dynamic 
$X$. By decoupling these calibrations, we ensure robustness while maintaining mathematical tractability in our proposed framework. The two-step calibration is similar in \cite{chen2010implications}, in which the calibration of the volatility parameter pair is separated from that of other parameters.\color{black}
\color{black}
\subsection{Log-likelihood function}
For simplicity, let $T$ be divisible by $\Delta t=\frac{1}{365}$, i.e. $T=\frac{\mathcal{N}}{365}$ for some $\mathcal{N}\in\bN^+$. Further, let $\Pi:=\{t_0,\cdots,t_{\mathcal{N}}: 0=t_0<\cdots <t_{\mathcal{N}}=T\}$ be a partition of $[0,T]$ with time step $\Delta t$. Under \eqref{S_new} - \eqref{X_new} and \eqref{S-delta} - \eqref{R_new}, forward Euler scheme gives
\begin{subequations}
\label{euler_approx}
\begin{align}
\overline{S}(t_{i+1}) \approx &\overline{S}(t_i) + \left(r-\frac{\sigma(t_i)^2}{2} - \lambda \overline{S}(t_i)\right)\Delta t + \sigma(t_i)(W(t_{i+1})-W(t_{i}))
\\
\sigma(t_i) \approx &V_0 + \frac{V_1}{(X(t_i)+P(t_i))(1-X(t_i)-P(t_i)) + \delta} + V_2\sqrt{|\overline{S}_{t_i}^{\alpha,\delta} - \overline{S}(0)|+\delta}
\\
X(t_{i+1}) \approx &X(t_i) + \Delta t [\gamma_1R^+(t_i)(1 - X(t_i) - P(t_i)) - \gamma_2R^-(t_i)(X(t_i)+P(t_i))], \label{eulerX}
\end{align}
\end{subequations}
with
\begin{align*}
\overline{S}_{t_i}^{\alpha,\delta} \approx &(1-\alpha)\sum_{m=0}^{i} \frac{\overline{S}(t_m)\Delta t}{(t_i + \delta)^{1-\alpha}(t_i - t_m + \delta)^{\alpha}} + \overline{S}(t_i)1_{\{\alpha = 1\}}
\\
R(t_i) \approx &(1-\alpha)\sum_{m=0}^{i} \frac{(\overline{S}(t_m) - \overline{S}(t_i) )\Delta t}{(t_i + \delta)^{1-\alpha}(t_i - t_m + \delta)^{\alpha}}.
\end{align*}

\color{black}
 
 
Then let $\overrightarrow{\theta}:=(\alpha,r,\lambda,V_0,V_1,V_2)$ denote a parameter set that awaits optimization. For any $i\in \{1,\cdots,\mathcal{N}\}$, the approximated likelihood function under daily log-price framework takes the form
\begin{align}
\sL\{\overrightarrow{\theta}|\overline{S}_{t_i}\} \sim &\sL \{ \overrightarrow{\theta}|\overline{S}(t_0),\cdots, \overline{S}(t_i)\} \nonumber
\\
= &\text{ }p^{\overrightarrow{\theta}}(\overline{S}(t_0),\cdots,\overline{S}(t_i)) \nonumber
\\
= &\text{ }p^{\overrightarrow{\theta}}(\overline{S}(t_1),\cdots,\overline{S}(t_i)|\overline{S}(t_0))\cdot p^{\overrightarrow{\theta}}(\overline{S}(t_0)) \nonumber \nonumber
\\
= &\prod_{j=1}^i p^{\overrightarrow{\theta}}(\overline{S}(t_j)|\overline{S}(0),\cdots,\overline{S}(t_{j-1})) \nonumber
\\
= &\prod_{j=1}^i\frac{1}{ \sqrt{2\pi \Delta t} \left| \sigma(t_{j-1}) \right|} \exp\left\{ -\frac{( \overline{S}(t_j) - \overline{S}(t_{j-1}) - (r-\frac{1}{2}| \sigma(t_{j-1})|^2 - \lambda \overline{S}(t_{j-1}))\Delta t )^2}{2\Delta t| \sigma(t_{j-1})|^2} \right\}. 
\end{align}
Hence the corresponding log-likelihood function takes the form
\begin{align*}
\tilde{l}(\overrightarrow{\theta}) = -\frac{i}{2}\ln{(2\pi \Delta t)}  - \sum_{j=1}^i \ln{\left| \sigma(t_{j-1}) \right|} -\sum_{j=1}^i \frac{\left[ \overline{S}(t_j) - \overline{S}(t_{j-1}) - (r - \frac{1}{2}| \sigma(t_{j-1})|^2 - \lambda \overline{S}(t_{j-1}))\Delta t \right]^2}{2\Delta t | \sigma(t_{j-1})|^2}.
\end{align*}
For computational simplicity, we omit the terms that are not dependent on any of the parameters and obtain the following \emph{rescaled log-likelihood function} $l(\overrightarrow{\theta})$
\begin{equation}\label{rellf}
l(\overrightarrow{\theta}) =  - \sum_{j=1}^i \ln{\left| \sigma(t_{j-1}) \right|} -\sum_{j=1}^i \frac{\left[ \overline{S}(t_j) - \overline{S}(t_{j-1}) - (r - \frac{1}{2}| \sigma(t_{j-1})|^2 - \lambda \overline{S}(t_{j-1}))\Delta t \right]^2}{2\Delta t | \sigma(t_{j-1})|^2}.
\end{equation}

\color{black}
\subsection{Mean square error}

\noindent When calibrating the proposed model \eqref{model_new}, although it is natural for investors to pick their time of interest according to global or regional events that could have an impact on the natural gas market, the analysis of storage dynamics requires strategies that are more subtle in choosing proper time intervals. For general purposes, $\gamma_1$ and $\gamma_2$ could be time-dependent, as factors that impact on gas storages change through time in general. However, for practical reasons such as the delay in transportation and demand, it is reasonable to assume that these parameters remain constant, at least over a short time period. Consequently, we update $\gamma_1$ and $\gamma_2$ regularly within a time framework $\mathcal{T} \leq T$. Starting from the initial time $t_0$, the parameters $\gamma_1(t)$ and $\gamma_2(t)$ behave as step functions and are updated after each interval of $\mathcal{T}$ time units. In this case, calibrating $\gamma_1(t)$ and $\gamma_2(t)$ on the interval $[t_0,t_{\mathcal{N}}]$ requires obtaining the optimal parameter vectors $\vec{\gamma_1}$ and $\vec{\gamma_2} \in \mathbb{R}^k$, where $k = \frac{T}{\mathcal{T}} $ if $T$ is divisible by $\mathcal{T}$ or $k = \lfloor \frac{T}{\mathcal{T}} \rfloor +1$ otherwise. The choice of $\mathcal{T}$ can be flexible depending on specific market behaviour and investor demand. When $\mathcal{T} = T$, $\vec{\gamma_1},\vec{\gamma_2}\in\bR$ are instead constant numbers throughout the whole time interval.

Let $m = \frac{\mathcal{N}+1}{7} $ if $\mathcal{N}+1$ is divisible by $7$ or $m = \lfloor \frac{\mathcal{N}+1}{7} \rfloor+1$ otherwise. Under the time partition $\Pi$, following \eqref{eulerX} we use real data of $S$ and $P$ to simulate the daily fitting of storage factor $\hat{X}_{\text{daily}}(t_i)$ for $i = 1,\cdots,\mathcal{N}$ with $\hat{X}_{\text{daily}}(t_0) = X(t_0)$. The fitted weekly storage factor $\hat{X}_{\text{weekly}}$ is then obtained by computing weekly average of $\hat{X}_{\text{daily}}$ for every seven days, i.e., for each $i\in\{1,\cdots,m\}$,
\begin{equation*}
\hat{X}_{\text{weekly}}(i) = \frac{ \hat{X}_{\text{daily}}(t_{7(i-1)}) + \cdots + \hat{X}_{\text{daily}}(t_{(7(i-1)+6)\land\mathcal{N}})  }{7\land (\mathcal{N}-7(i-1)+1)}.
\end{equation*}

The optimized parameters $\vec{\gamma_1}$ and $\vec{\gamma_2}$ are obtained by minimizing the following mean square error
\begin{align}
\text{MSE} := \sum_{i=1}^{m} \left| X_{\text{weekly}}(i) - \hat{X}_{\text{weekly}}(i) \right|^2,
\end{align}
where $X_{\text{weekly}}(i)$ is the real weekly data at time step $i$ in \eqref{realX_decomp}, i.e., we have
\begin{equation}\label{min_mse}
\left( \vec{\gamma_1}, \vec{\gamma_2} \right) = \argmin_{\vec{\gamma_1}, \vec{\gamma_2}} \sum_{i=1}^{m} \left| X_{\text{weekly}}(i) - \hat{X}_{\text{weekly}}(i) \right|^2.
\end{equation}
Note that \eqref{min_mse} can be obtained piecewisely, i.e., on each time period $\gamma_1,\gamma_2$ are just two constants with 
\begin{align}
\vec{\gamma_1} &= \left( \gamma_{1,1}, \cdots, \gamma_{1,k} \right),
\\
\vec{\gamma_2} &= \left( \gamma_{2,1}, \cdots, \gamma_{2,k} \right),
\end{align}
where for each $i = 1,2$ and $j = 1, \cdots, k$, $\gamma_{i,j}\in\bR$.
\color{black}

\subsection{A two-step calibration algorithm}
\noindent\textbf{Step 1}. In order to find the optimized parameters maximizing \emph{the rescaled log-likelihood function} \eqref{rellf}, we apply a Consensus-Based Optimization (CBO) method with particle simulations for the parameters $\overrightarrow{\theta} = \theta_1 = \{\alpha,r,\lambda,V_0,V_1,V_2\}$. Note that in the formula \eqref{rellf} the volatility $\sigma$ is function of parameters $\overrightarrow{\theta}$. To avoid possible singularities when computing derivatives, we adopt the \textit{derivative-free} CBO method. In fact, it has been proven that CBO can guarantee global convergence under suitable assumptions, and it is a powerful and robust method for many high-dimensional non-convex optimization problems. A short introduction of CBO method and the associated references are included in the Appendix.

\noindent\textbf{Step 2}. \color{black}We apply the forward Euler approximation of $X$ introduced in \eqref{euler_approx} with the optimized parameter $\alpha$ that is obtained via Step 1 to numerically compute the storage dynamic. Then with CBO method on $\overrightarrow{\theta} = \theta_2=\{\vec{\gamma_1},\vec{\gamma_2}\}$, we are able to calibrate the model by minimizing the mean square error (MSE) \eqref{min_mse} between fitted storage factor $X$ via \eqref{eulerX} and its associated normalized real storage component. \color{black} 
\\
The calibrations are conducted piecewisely with respect to each time interval. The calibration results for log-price are shown in Table \ref{price_parameters}, with CBO parameters of \eqref{cbo} as in Table \ref{tab:CBO parameters}.

\begin{table}[!ht]
\centering
\begin{tabular}{|c|c|c|c|c|c|c|c|c|c|c|c|}
\hline
Step 1 & a    & b                     & $\sigma$ & M   & N    & Step 2 & a   & b                       & $\sigma$ & M   & N   \\ \hline
       & 1200 & 400 & 20       & 100 & 3000 &        & 1500 & 1500 & 30 & 500 & 4000 \\ \hline
\end{tabular}
\caption{CBO Parameters for price and storage model calibration}
\label{tab:CBO parameters}


\begin{center}
\begin{tabular}{|c|c|c|c|c|c|c|}
\hline
Time interval   & $\alpha$ & $r$    & $\lambda$ & $V_0$  & $V_1$  & $V_2$  \\ \hline
01/2019-10/2019 & 1.4561   & 5.2536 & 4.2638    & 2.1268 & 0.1361 & 4.0786 \\ \hline
11/2019-03/2020 & 0.8734   & 2.2244 & 4.8764    & 0.7193 & 0.0341 & 0.1893 \\ \hline
03/2020-12/2020 & 1.4555   & 5.0995 & 6.7183    & 0.4703 & 0.0331 & 0.5034 \\ \hline
01/2021-06/2021 & 1.0387   & 3.5386 & 1.4187    & 0.0501 & 0.0118 & 1.7770 \\ \hline
06/2021-02/2022 & 1.0440   & 1.3144 & 0.2145    & 0.4443 & 0.0758 & 3.4641 \\ \hline
02/2022-12/2022 & 1.4961   & 5.5812 & 0.9507    & 0.0803 & 0.0186 & 5.2523 \\ \hline
\end{tabular}
\end{center}
\caption{CBO Results for log-price model calibration} \label{price_parameters}

\begin{center}
\begin{tabular}{|c|c|c|}
\hline
Time  Interval          & $\gamma_1$ & $\gamma_2$ \\ \hline
01/2019-10/2019 & 0.1040     & -0.3616    \\ \hline
11/2019-03/2020 & -40.2865    & 59.8992     \\ \hline
03/2020-12/2020 & 14.2155     & 0.8597     \\ \hline
01/2021-06/2021 & 10.3480     & 56.8883     \\ \hline
06/2021-02/2022 & 10.1753    & 34.7471    \\ \hline
02/2022-12/2022 & 0.5443    & 0.1756    \\ \hline
\end{tabular}
\end{center}
\caption{CBO Results for storage calibration($\mathcal{T}=T$)} \label{st_T}
\end{table}

\begin{rmk}\color{black}
For each time interval of interest $[0,T]$, we provide storage calibration results for three cases: $\mathcal{T} = T$, $\mathcal{T} = 14$(biweekly) and $\mathcal{T} = 30$(monthly), corresponding to Table \ref{st_T}, Table \ref{st_14} and Table \ref{st_30} respectively. Please check Table \ref{st_14} and \ref{st_30} in Appendix B for formatting reasons. It is easy to see that most of the entries of $\vec{\gamma_1},\vec{\gamma_2}$ in Table \ref{st_14} and \ref{st_30} are in fact positive. The associated weekly $X$ fittings for each time interval are also included in Appendix B by Figure \ref{X_fit}, Figure \ref{X_fit_14} and Figure \ref{X_fit_30} respectively. The fitting performance depends not only on the choice of $\mathcal{T}$, but also on the choices of $T$ among many other factors. In reality, investors can choose their interval of interests according to market behaviours and better fitting performance.
\end{rmk}\color{black}

The optimal value $\alpha$ fits our observation of Figure~\ref{price_inc} very well. As we can see in the time interval from Jan, 2019 to Oct, 2019, we have a strong roughness in the volatility of the price process. During that period, there was extreme tension between US and Iran, with the latter happening to be a huge producer of natural gas in the global market. Due to the pressure posed by US, Iran terminated several items in the Iran Denuclearization Agreement to counter US sanctions on May 8th. The tension in middle-east continued to escalate throughout the year, making the energy market more chaotic and fluctuating, until December, near the end of 2019, when altogether six European countries including Belgium, joined in the trade settlement previously established by Iran, France, Britain, and Germany, in order to promote normalization of economical activities with Iran.

\color{black}Starting Nov, 2019, the initial onset of the COVID-19 pandemic had a significant global impact. Due to the numerous and irregular lock-downs and further economic shut-downs, the natural gas market stayed quiet. However, in 2020, the natural gas market in North America experienced significant volatility, driven by a combination of demand shocks caused by the on-going pandemic, saturated storage level due to over-supply, and the impact of extreme weather in late 2020. U.S. shale production remained high because of falling demand, leading to a surplus. Storage levels reached record highs by November (see Figure \ref{fig:storage_volatility}), pressuring prices downward. Meanwhile, the global temperature set a new record for hottest September. The expectation of a warm winter temporarily drove the gas market volatility up with increased uncertainty in demand but quickly faded out with the pass of winter. After winter, the gas market continued on in a \emph{``quiet"} state for the following year for pandemic issues, less volatile due to the past of \emph{initial shock} and continuous global economic depression.\color{black}

From June, 2021 to Feb, 2022, price volatility roughness continues to stay low due to the lasting effect of the pandemic. However tension starts to emerge on the border of Russia and Ukraine, which leads to the outbreak of war between the two countries at Feb, 24th, 2022. This explains the extreme roughness (with $\alpha$ close to $1.5$) in price volatility from Feb, 2022 to Dec, 2022, corresponding to our last time interval.

\begin{rmk}\color{black}
The time intervals in Table \ref{price_parameters} are carefully selected to align with significant real-life global events that influence the natural gas market. While there may be a theoretical distinction between the choice of time intervals for storage calibration and those for price analysis, this paper maintains consistency by adopting the same intervals for the calibration of $\vec{\gamma}_1$ and $\vec{\gamma}_2$. This decision is motivated by several factors. Firstly, achieving an optimized $\alpha$ is crucial for refining the storage model. Secondly, with only two parameters in the storage dynamics, employing smaller time intervals leads to simpler storage behavior, thereby enhancing calibration accuracy(see Figure \ref{X_fit}, \ref{X_fit_14}, \ref{X_fit_30}). In practice, investors can dynamically apply calibration over time, considering adjustments to the starting time in response to global or regional events impacting the natural gas market. The decision to switch time intervals may be affected by the investor's market experience and their awareness of prevailing global trends.
\end{rmk}\color{black}

\section{Discrete-Time Swing Option Pricing}
\color{black}It is natural to study the applications of our proposed stochastic path-dependent volatility model in the context of option pricing problems. Discrete-time swing options, in particular, are popular in natural gas markets due to the unique characteristics of natural gas as a commodity and the specific needs of market participants.

First, natural gas demand is highly seasonal and predictable. Discrete swing options allow buyers to adjust their gas consumption or delivery volumes within predefined limits, providing the flexibility needed to match supply with demand.

Second, natural gas prices are highly volatile, influenced by factors such as weather, storage levels, and geopolitical events. Swing options enable market participants to manage price risk by allowing them to purchase more gas when prices are low and reduce purchases when prices are high.

Third, natural gas infrastructure, including pipelines and storage facilities, has operational limits on flow rates and capacity. The limitations, unfortunately, lead to time delays in real life practice, rendering continuous options less realistic and favorable. Discrete-time swing options allow buyers to adjust their gas intake within these constraints, ensuring smooth operations.

In the following, we introduce a put-type discrete-time swing option and the associated pricing problem.\color{black}

\subsection{A put-type swing contract}

Let us consider the pricing problem of a discrete \emph{put-type} swing option. The corresponding \emph{call-type} swing option may be priced in a similar approach. Applying the updated stochastic differential equation system \eqref{model_new}, the pricing problem of the swing option above can be transformed into a path-dependent stochastic optimal control problem. 

Denote by $K$ the strike price. Let $L,\tilde{L}$ denote the global and local constraints specified by the swing put contract, i.e. the investor is allowed to exercise at most $L$ rights during the contract lifetime and each time with an upper threshold of $\tilde{L}$ rights. Denote $\mathbb{I}:=\{\tau_0,\tau_1,\cdots,\tau_M\}$ for some $M\in\mathbb{N}^+$ as the set of $M+1$ increasing possible choices of exercising time during the swing contract lifetime, where $\tau_i\in [0,T)$ for each $i\in \{0,1,2,\cdots,M\}$. In what follows, we set $\tau_{M+1}=T$. Without any loss of generality, for each $i$ let us make $\tau_i$ an integer multiple of $\frac{1}{365}$ and similarly for $T$ since throughout this work, we are dealing with daily price data. It is obvious that $\tilde{L}\le L\le (M+1)\tilde{L}$. Let $\{q(t)\}_{t\in \mathbb{I}}$ be the discrete control process expressing the exercise strategy of the swing option, where for each $t\in \mathbb{I}$, \color{black}$q(t)\in U:=\{0,1,\cdots,\tilde{L}\}$\color{black}. Moreover, we denote by $\cU$ the set of all the $U$-valued and $\{\sF_t^S\}_{t\in [0,T]}$-adapted processes, \color{black}where $\{\sF_t^S\}_{t\in [0,T]}$ is the filtration generated by $S$. $S$ could be the forward Euler approximation of the price process as introduced by \eqref{euler_approx}, but in the following we consider it as the continuous price process for generality purpose\color{black}. 

Naturally, $q(t)=0$ for any $t\in [0,T]\setminus\mathbb{I}$. Then for each time $t\in\mathbb{I}$, the remaining exercise right can be expressed by
\begin{equation}\label{Q}
Q(t) := L - \sum_{i\in\mathbb{I},i< t} q(i)\cdot 1_{\{Q(i)\geq q(i)\}}(i),
\end{equation}
where $Q(t)\ge 0$ for each $t\in [0,T]$.

When global constraints are not reached at maturity, some penalties may be applied. These might be proportional to the terminal price $S(T)$ and to the under-consumption of a positive $Q(T)$. The penalty function takes a general form of $G(S(T),Q(T))$.

Then for each $i\in\{ 0,\cdots,M\}$, the discrete dynamical reward functional is defined as
\begin{align}
&\mathcal{J}(\tau_i,Q(\tau_i);q(\tau_i)) \nonumber
\\
= &E_{\sF^S_{\tau_i}} \left[ \sum_{j\in [i,M]}e^{-\mu (\tau_j-\tau_i)}q(\tau_j)(K-S(\tau_j))^+\cdot 1_{\{Q(\tau_j)\geq q(\tau_j)\}}(\tau_j) + e^{-\mu(T-\tau_i)}G(S(T),Q(T)) \right], \label{reward}
\end{align}
where $\mu\ge 0$ is the risk free interest rate. And the corresponding value functional is then defined as
\begin{equation}\label{value}
\mathcal{V}(\tau_i,Q(\tau_i)) := \esssup_{q\in \cU} \mathcal{J}(\tau_i,Q(\tau_i);q).
\end{equation}

\begin{rmk}
A standard choice of the penalty function (see \cite{bardou2009optimal,barrera2006numerical,basei2014optimal,jaillet2004valuation,lari2001discrete}) is proportional to the terminal price. For instance, we may adopt the following one:
\begin{equation}\label{penalty}
{\color{black} G(S(T),Q(T)) = - A(K - S(T))^+Q(T),}
\end{equation}
where $A>0$ is a real constant.
\end{rmk}

We introduce the following dynamic programming principle for the pricing problem.
\begin{thm}\label{thm_dpp}
For any $i\in\{0,\cdots,M\}$, we have
\begin{equation}\label{DPP}
\mathcal{V}(\tau_i,Q(\tau_i)) = \esssup_{q\in U,q\le Q(\tau_i)} E_{\sF^S_{\tau_i}} \left\{  q(K-S(\tau_i))^+ + e^{-\mu(\tau_{i+1}-\tau_{i})}\mathcal{V}(\tau_{i+1},Q(\tau_{i+1})) \right\}.
\end{equation}
\end{thm}
The proof is postponed to Appendix A. 

The dynamic programming principle is the key component of the swing option pricing problem, although it redirects the problem to the computation of the conditional expectation, which is hard to compute numerically via conventional Monte-Carlo method \color{black}or tree-based methods. Classical Longstaff-Schwartz method in \cite{longstaff2001valuing} offers a robust linear unbiased estimator for such conditional expectations under a Markovian framework with different choices of state-dependent basis functions. The convergence of the approximation is in mean square error and in probability (see \cite{white1984nonlinear}). However, in this work we are dealing with a path-dependent and thus non-Markovian system \eqref{model_new} with the path-dependence embedded in log-price volatility. \color{black}For this we introduce a deep learning-based method.

\subsection{Deep-learning based method}
\subsubsection{Feed forward neural network}\label{NNSection}\label{section_NN}
Without any loss of generality, we take $0=\tau_0<\tau_1<\cdots<\tau_M<\tau_{M+1}=T$. And for each time interval $[\tau_i,\tau_{i+1}]$ with $i\in {0,1,\cdots,M}$, let us introduce a partition of $N_i+1$ points $\tau_i = t_0^i<t_1^i<\cdots<t_{N_i}^i=\tau_{i+1}$, with time step size, for instance, $dt = \frac{1}{365}$. Then $N_i=365(\tau_{i+1}-\tau_i)=365\Delta \tau_i$ and by \eqref{DPP} and forward Euler scheme, we have
\begin{align*}
\mathcal{V}(\tau_m,Q(\tau_m)) = \sup_{q} E_{\sF^S_{\tau_m}} \big\{ &q(\tau_m)(K-S(\tau_m))^+\cdot 1_{\{Q(\tau_m)\geq q(\tau_m)\}}(\tau_m) 
\\
& + e^{-\mu(\Delta \tau_m)}\mathcal{V}(\tau_{m+1},Q(\tau_{m+1})) \big\}.
\end{align*}
Note that unlike $\tau_i$ for $i\in \{0,\cdots,M\}$, time $\tau_{M+1}=T$ is not a choice of exercise rights.
\begin{rmk}
Some previous work (see \cite{barrera2006numerical}) adopt a penalty function similar to
\begin{equation*}
    G(S(T),Q(T)) = -A\cdot S(T)Q(T).
\end{equation*}
If the penalty function $G$ is proportional to the terminal price $S(T)$, then it is crucial for an optimal trading strategy to ensure $Q(T)=0$, as failure to do so would result in the investor incurring a penalty fee.
However, this may not hold true if the penalty function is proportional to the terminal intrinsic value rather than the terminal price itself, as proposed by \eqref{penalty}. In such cases, even if there are unexercised rights, a penalty of zero could still be possibly incurred if the terminal price exceeds the strike.
\end{rmk}
Starting with the interval $[\tau_{M},\tau_{M+1}]$, then by the terminal condition, we have
\begin{equation*}
\mathcal{V}(\tau_{M+1},Q(\tau_{M+1})) = G(S(T),Q(T)).
\end{equation*}
For each $i\in\{0,\cdots,M\}$ and $q\in\{0,\cdots,\tilde{L}\}$, let us denote
\begin{equation*}
\Phi(\tau_i,S(\tau_i),Q(\tau_i),q) =  q\left(K-S(\tau_i)\right)^+ \cdot 1_{\{Q(\tau_i)\ge q\}}(\tau_i),
\end{equation*}
and further
\begin{equation*}
1_i = 1_{\{ Q(\tau_i) \ge q \}}(\tau_i).
\end{equation*}
Clearly $\Phi(\tau_i,S(\tau_i),Q(\tau_i),q)$ is \color{black}$\sF_{\tau_i}^S$\color{black}-adapted for any $i\in\{0,\cdots,M\}$. Then for $Q(\tau_M) \in \{1,\cdots,L\}$,
\begin{equation*}
\mathcal{V}(\tau_{M},Q(\tau_{M})) = \sup_{q\in \{0,\cdots,\tilde{L}\}} E_{\sF^S_{\tau_M}} \big\{ \Phi(\tau_{M}, S(\tau_M), Q(\tau_M),q)+e^{-\mu\Delta \tau_M}\mathcal{V}(\tau_{M+1},Q(\tau_M)-q\cdot 1_M)\big\}\label{M}.
\end{equation*}
So recursively, for any $i\in\{0,\cdots,M\}$, we have
\begin{align}
\mathcal{V}(\tau_{i},Q(\tau_{i})) = &\sup_{q\in \{0,\cdots,\tilde{L}\}} E_{\sF^S_{\tau_i}} \left\{ \Phi(\tau_{i}, S(\tau_{i}), Q(\tau_{i}),q)+e^{-\mu\Delta \tau_{i}}\mathcal{V}(\tau_{i+1},Q(\tau_{i})-q\cdot 1_{i}) \right\} \nonumber
\\
= &\sup_{q\in \{0,\cdots,\tilde{L}\}} \left\{ \Phi(\tau_{i}, S(\tau_{i}), Q(\tau_{i}),q)+e^{-\mu\Delta \tau_{i}}E_{\sF^S_{\tau_i}}\mathcal{V}(\tau_{i+1},Q(\tau_{i})-q\cdot 1_i) \right\} \label{i}.
\end{align}

\subsubsection{Neural network approximation of random functions} \label{NNARF}

Note that in \eqref{i}, the randomness of $\mathcal{V}(\tau_{i+1},Q(\tau_{i})-q\cdot 1_i)$ is introduced by the $S_{\tau_{i+1}}$-dependence only, which, given $S_{\tau_i}$, follows a specific distribution that is complicated to assess due to the path-dependence. Even though we could apply Monte-Carlo method to obtain its numerical expectation and variance, it is hard to obtain the distribution itself. 
The problem then becomes how to compute the conditional expectation in \eqref{i}. Due to the path-dependence, it is also difficult to apply the conventional tree-based dynamic programming approach. In this work, we use the following neural network approach to approximate the above conditional expectation
\begin{align}
\mathcal{NN}^{\Theta^{i,l-q}}(S(0),S(dt),\cdots,S(\tau_{i})) \approx E_{\sF^S_{\tau_i}} \mathcal{V}(\tau_{i+1},S(\tau_{i+1}),l-q),
\label{approx-nn}
\end{align}
where $l\in\{1,\cdots,L\}$ denotes the remaining number of rights at time $\tau_i$ and $q\in\{1\cdots,\tilde{L}\land l\}$ denotes the number of rights exercised at time $\tau_i$.

For each time interval $[\tau_i,\tau_{i+1})$, the $l$ and $q$ mentioned above, we introduce a feedforward neural network with input dimension $d_0^{i}$ and output dimension $d_1^{i}$. Here, for instance, we take time step size $dt=\frac{1}{365}$, $d_1^i=1$, and thus, the dimension $d_0^i=365\tau_i +1$ is changing with respect to $i$ and may be high, which prompts us to use deep neural networks for approximations. The following result can be easily extended to the cases with more hidden layers. Here for simplicity, we suppose that it has three layers with $m_n^{i}$ neurons, $n=0,1,2$. Let us consider the case with only one hidden layer and $m_0^{i}=d_0^{i}$, $m_2^{i} = d_1^{i}$. This neural network can be considered as a function from $\bR^{d_0^{i}}$ to $\bR^{d_1^{i}}$ defined by the composition of functions as
\begin{equation}\label{NNi}
x\in\bR^{d_0^{i}} \to A_2^{i,l-q} \circ \varrho \circ A_1^{i,l-q}(x) \in\bR^{d_1^{i}}.
\end{equation}
Here $A_1^{i,l-q}:\bR^{d_0^i}\to\bR^{m_1^i}$, $A_2^{i,l-q}:\bR^{m_1^i}\to\bR^{d_1^i}$ are affine transformations and can be defined by
\begin{equation*}
A_n^{i,l-q}(x) := \mathcal{W}_n^{i,l-q} x+ \beta_n^{i,l-q},
\end{equation*}
with $n=1,2$, where the matrix $\mathcal{W}_n^{i,l-q}$ and the vector $\beta_n^{i,l-q}$ are called weight and bias, respectively, for the $n$th layer of the network applied for the time interval $[\tau_i,\tau_{i+1})$ with $l$ rights remained and $q$ rights exercised at time $\tau_i$. For the last layer, we apply the identity function as the activation function, and the activation function $\varrho$, which is identical regarding time intervals and exercising strategies, is applied component-wise on the output of $A_1^{i,l-q}$.

For each time interval $[\tau_i,\tau_{i+1})$, remaining number of rights $l$ and exercise strategy $q$ at time $\tau_i$, the parameters for the neural network may be denoted by $\Theta^{i,l-q}:=\left(\mathcal{W}_n^{i,l-q},\beta_n^{i,l-q}\right)_{n=1,2}$. Given $d_0^i$, $d_1^i$, and $m_1^i$, the total number of parameters in a network is \color{black}$M^i = m_1^i*(d_0^i + 1)+(m_1^i+1)*d_1^i$\color{black}. Hence $\Theta^{i,l-q}\in\bR^{M^i}$. For each $i\in \{0,\cdots,M\}$, by $\mathcal{NN}^{\varrho,l-q}_{d_0^i,d_1^i;m_1^i}(\bR^{M^i})$, we denote the set of all neural network functions defined in \eqref{NNi}.

The above-defined deep learning neural networks may be utilized to approximate large class of unknown functions. Fundamental approximation results in finite-dimensional cases may be found in \cite{hornik1989multilayer, hornik1990universal} by Hornik, Stichcombe, and White. 
However, in \eqref{approx-nn}, random variables that may be thought of as functions defined on infinite-dimensional underlying spaces are to be approximated via functions in finite-dimensional spaces. As inspired in \cite{bayer2022pricing}, we may use the following approximating result for random variables.
 
\begin{prop}\label{nMar_nn_approx}
For each $l\in\{0,\cdots,L\}$, $q\in\{0,\cdots,\tilde{L}\}$ and any $T_0\in(0,T]$, and $d_1 \in\bN^+$, the function space
\begin{align*}
\{ \Psi(S(t_0),\cdots,S(t_{K})): &\Psi(\cdot)\in\mathcal{NN}^{\varrho,l-q}_{K+1,d_1;m_1}(\bR^{m_1*(K+2)+(m_1+1)*d_1}),\,m_1,\,K\in\mathbb N^+,\,\ 
\\
& \left. 0\le t_0 < t_1 < \cdots <t_K \le T_0\right\}
\end{align*}
is dense in $L^2(\Omega,\sF_{T_0}^S,\mathbb{P};\bR^{d_1})$ where $\varrho$ is continuous and nonconstant.
\end{prop}
Obviously, for neural networks with $\geq 2$ hidden layers, the above approximation results are still holding. The proof is analogous to that of \cite[Proposition 4.2]{bayer2022pricing} and thus omitted.

\begin{rmk}
In our work, for each $t\in [0,T]$ the value function $\mathcal{V}(t,Q(t))$ is $\sF_{t}^S$-measurable, with the $\sigma$-algebra $\sF_t^S:=\sigma\{S_u:u\le t\}$. Hence we need the above approximation for measurable functions. 
The process $S$ in Proposition \ref{nMar_nn_approx} may be an arbitrary adapted continuous process.
\end{rmk}

\subsubsection{Algorithm}
Denote by $\mathcal{\hat{V}}$ the simulated price for the swing option achieved by numerical approximations of the conditional expectation \color{black}under forward Euler method\color{black}. Here is a four-step algorithm for the discrete swing option pricing.

\textbf{Step 1}. Simulate $\mathcal{D}$ price paths via \eqref{euler_approx}. Each simulated path is denoted by $\hat{S}_T^k$ where $k\in\{1,\cdots,\mathcal{D}\}$. The time step size we choose is $dt: = \frac{1}{365}$.
\begin{rmk}
Each path is piecewise-simulated since the coefficients we obtained in Table \ref{price_parameters} varies regarding different time intervals.
\end{rmk}

\textbf{Step 2}. For the time interval $[\tau_M,\tau_{M+1}]$, with $j\in\{1,\cdots,L\}$ we have
\begin{align}
\Theta^{M,j} = \argmin_{\Theta^{M,j}} \left\{\sum_{k=1}^{\mathcal{D}} \left| \mathcal{NN}^{\Theta^{M,j}}\left(\hat{S}^k(t^0_{0}),\cdots,\hat{S}^k(t^M_{N_M})\right)-G(\hat{S}^k(T),j) \right|^2\right\}.
\end{align}
Then for each $l\in\{1,\cdots,L\}$,
\begin{equation*}
\mathcal{\hat{V}}(\tau_M,\hat{S}^k(\tau_M),l) = \sup_{q\in\{0,\cdots,l\land\tilde{L}\}} \left\{ \Phi(\tau_M,\hat{S}^k(\tau_M),l,q)+e^{-\mu\Delta \tau_M}\mathcal{NN}^{\Theta^{M,l-q}}(\hat{S}^k(t_0^0),\cdots,\hat{S}^k(t_{N_M}^M)) \right\}.
\end{equation*}

\textbf{Step 3}. For the time interval $[\tau_i,\tau_{i+1}]$, with $i\in\{1,\cdots,M-1\}$ and $j\in\{1,\cdots,L\}$ we have
\begin{align}
\Theta^{i,j} = \argmin_{\Theta^{i,j}} \left\{  \sum_{k=1}^{\mathcal{D}}\left|\mathcal{NN}^{\Theta^{i,j}}\left(\hat{S}^k(t^0_{0}),\cdots,\hat{S}^k(t^i_{N_i})\right)-\mathcal{\hat V}(\tau_{i+1},\hat{S}^k(\tau_{i+1}),j) \right|^2 \right\}.
\end{align}
Then for each $i\in\{1,\cdots,M-1\}$ and $l\in\{1,\cdots,L\}$,
\begin{equation*}
\mathcal{\hat{V}}(\tau_i,\hat{S}^k(\tau_i),l) = \sup_{q\in\{0,\cdots,l\land\tilde{L}\}} \left\{ \Phi(\tau_i,\hat{S}^k(\tau_i),l,q)+e^{-\mu\Delta \tau_i}\mathcal{NN}^{\Theta^{i,l-q}}(\hat{S}^k(t_0^0),\cdots,\hat{S}^k(t_{N_i}^i)) \right\}.
\end{equation*}

\textbf{Step 4}. For the time interval $[\tau_0,\tau_{1}]$, we have
\begin{equation}
\mathcal{\hat{V}}(\tau_0,\hat{S}(\tau_0),L) = \sup_{q\in\{0,\cdots,\tilde{L}\}} \left\{ \Phi(\tau_0,\hat{S}(\tau_0),L,q)+\frac{e^{-\mu\Delta \tau_0}}{\mathcal{D}}\sum_{k=1}^{\mathcal{D}}\mathcal{\hat V}(\tau_1,\hat{S}^k(\tau_1),L-q) \right\}.
\end{equation}

\subsubsection{Convergence analysis}
This section is devoted to a convergence analysis for the deep-learning approach introduced above. The discussions are based on the model \eqref{model_new}.



Let us adopt the same notation $\hat{\mathcal{V}}$ as introduced above as the optimal pricing achieved by numerical approximations, either by a neural network approach, with $\hat{S}:=\{\hat{S} (t)\}_{t\in [0,T]}$ being the price path approximated by the forward Euler scheme or other numerical approximation schemes. Under the deep learning approximation approach, we have the following convergence result.
\begin{thm}\label{thm-convergence-analysis}
For each $p\ge 1$,
\begin{align}
& \max_{t\in \mathbb{I}\cup\{\tau_{M+1}\}}  E \left[ \max_{l\in\{0,1,\cdots,L\}} \left| \mathcal{V}(t,l)-\hat{\mathcal{V}}(t,l) \right|^p \right]  \nonumber
\\
\le &\tilde C_{p,A,M,L,\tilde{L}}\Bigg( \sum_{m=1}^{M+1} \left( E \left| S(\tau_m) - \hat{S}(\tau_m) \right|^p \right) \land (2K)^p \nonumber
\\ 
&\text{ }\text{ }\text{ }\text{ }\text{ }\text{ }\text{ }\text{ }\text{ }\text{ }\text{ }\text{ }+E \Bigg[ \sum_{m=0}^{M} \max_{l\in\{0,1,\cdots,L\} }  \left| E_{\sF^S_{\tau_{m}}}\hat{\mathcal{V}}(\tau_{m+1},l) - \mathcal{NN}^{\Theta^{m,l}}(\color{black}\hat{S}(t_0^0),\cdots,\hat{S}(t_{N_m}^m)\color{black}) \right|^p\Bigg] \Bigg). \label{eq-thm-convergence-analysis}
\end{align}
\end{thm}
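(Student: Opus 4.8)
The plan is to prove the estimate by backward induction over the exercise dates $r_{M+1}, r_M, \dots, r_0$, controlling at each step the propagation of two distinct error sources: the error in the simulated price path $\hat S$ versus $S$ (which enters through the payoff $\Phi$ and the penalty $G$), and the neural-network regression error in approximating the conditional expectation $E_{\sF^S_{r_m}}\hat{\mathcal V}(r_{m+1},\cdot)$. I would set, for each $t\in\mathbb I\cup\{r_{M+1}\}$,
\[
e(t):=E\Big[\max_{l\in\{0,\dots,L\}}\big|\mathcal V(t,l)-\hat{\mathcal V}(t,l)\big|^p\Big],
\]
and aim to show $e(r_m)$ is bounded by a constant multiple of $e(r_{m+1})$ plus the two local error terms at step $m$; iterating this $M+1$ times then yields \eqref{eq-thm-convergence-analysis}. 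The key structural fact that makes the induction close is that both $\mathcal V(r_m,\cdot)$ and $\hat{\mathcal V}(r_m,\cdot)$ are given by a $\sup$ over the \emph{finite} set $q\in\{0,\dots,\tilde L\}$ of expressions of the form $\Phi(r_m,\cdot,\cdot,q)+e^{-\mu\Delta r_m}(\text{conditional expectation of the next value})$, so that by the elementary inequality $|\sup_a f(a)-\sup_a g(a)|\le \sup_a|f(a)-g(a)|$ the error at level $m$ is dominated by a maximum over $q$ (hence over the $\le L+1$ values of $l-q$) of the per-strategy errors.

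For the base case $t=r_{M+1}=T$, I would use $\mathcal V(T,l)=\hat{\mathcal V}(T,l)$ up to the discrepancy $G(S(T),l)-G(\hat S(T),l)$; with the penalty \eqref{penalty} this is Lipschitz in $S(T)$ with constant $A L$ (and trivially bounded by $2AKL$ since $(K-S)^+\le K$ on the relevant range, giving the $\wedge(2K)^p$ truncation up to constants), producing the $m=M+1$ term of the sum. For the inductive step at a general $m$, I would insert the neural-network approximation by a triangle inequality:
\[
\big|\mathcal V(r_m,l)-\hat{\mathcal V}(r_m,l)\big|\le \max_{q}\Big(\underbrace{\big|\Phi(r_m,S(r_m),l,q)-\Phi(r_m,\hat S(r_m),l,q)\big|}_{\text{(I) payoff error}}+\underbrace{\big|E_{\sF^S_{r_m}}\mathcal V(r_{m+1},l-q)-E_{\sF^S_{r_m}}\hat{\mathcal V}(r_{m+1},l-q)\big|}_{\text{(II) value error propagated}}+\underbrace{\big|E_{\sF^S_{r_m}}\hat{\mathcal V}(r_{m+1},l-q)-\mathcal{NN}^{\Theta^{m,l-q}}(\cdots)\big|}_{\text{(III) regression error}}\Big),
\]
multiplied by $e^{-\mu\Delta r_m}\le 1$ where appropriate. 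Term (I): $\Phi(r_m,x,l,q)=q(K-x)^+\,1_{\{l\ge q\}}$ is $\tilde L$-Lipschitz in $x$ and bounded by $\tilde L K$, giving $\min(\tilde L|S(r_m)-\hat S(r_m)|,\,\tilde L K)^p$, which after raising to the $p$-th power, taking $\max_l$ and $E$, contributes to the $E|S(r_m)-\hat S(r_m)|^p\wedge(2K)^p$ term. Term (III) is exactly the regression error appearing on the right of \eqref{eq-thm-convergence-analysis}. Term (II): by the conditional Jensen inequality $E\big[|E_{\sF^S_{r_m}}(\mathcal V-\hat{\mathcal V})(r_{m+1},l-q)|^p\big]\le E\big[|\mathcal V-\hat{\mathcal V}|^p(r_{m+1},l-q)\big]\le e(r_{m+1})$, so the propagated value error is bounded by $e(r_{m+1})$. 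Collecting these, using $(a+b+c)^p\le 3^{p-1}(a^p+b^p+c^p)$ and absorbing numerical constants and the factors $M,L,\tilde L,A$ into $\tilde C_{p,A,M,L,\tilde L}$, I obtain $e(r_m)\le \tilde C\,(e(r_{m+1})+\text{local errors at }m)$; unrolling the recursion from $m=M$ down to $m=0$ and taking the maximum over $t$ gives the claimed bound.

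The main obstacle I anticipate is a bookkeeping subtlety rather than a deep analytic one: one must be careful that the remaining-rights index $Q(r_{m+1})$ is itself \emph{random} and depends on the (possibly different) exercise strategies chosen under $\mathcal V$ versus $\hat{\mathcal V}$, so "comparing $\mathcal V(r_{m+1},Q(r_{m+1}))$ with $\hat{\mathcal V}(r_{m+1},\hat Q(r_{m+1}))$" is not immediate. This is precisely why the statement passes to $\max_{l\in\{0,\dots,L\}}|\mathcal V(r_{m+1},l)-\hat{\mathcal V}(r_{m+1},l)|$ — taking the maximum over all deterministic right-levels $l$ decouples the value-function comparison from the choice of control, so that the recursion is on the $\max_l$-norm throughout and the argument above goes through verbatim with $l-q$ ranging over $\{0,\dots,L\}$. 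A secondary point requiring care is justifying the truncation $\wedge(2K)^p$: this uses that every value function in sight is bounded in absolute value by a constant times $K$ (since $0\le\mathcal V(t,l)\le L K + A L K$ uniformly, from the explicit bounds on $\Phi$ and $G$), so the per-step price error is simultaneously bounded by $\tilde L|S(r_m)-\hat S(r_m)|$ and by a constant multiple of $K$, and one keeps whichever is smaller. I would also note that no regularity of the SDE \eqref{model_new} beyond the well-posedness already established is needed here — the strong-rate estimate for $E|S(r_i)-\hat S(r_i)|^p$ under the Euler scheme is deliberately left as an input appearing on the right-hand side.
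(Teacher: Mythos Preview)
Your proposal is correct and follows essentially the same route as the paper's proof: a backward recursion from $r_{M+1}$ to $r_0$, the elementary bound $|\sup_q f(q)-\sup_q g(q)|\le\max_q|f(q)-g(q)|$, the three-term triangle decomposition into payoff error, propagated value error, and regression error, conditional Jensen for term (II), and the convexity inequality $(a+b+c)^p\le C_p(a^p+b^p+c^p)$, with the $\max_l$ taken throughout so that the recursion closes independently of the realized control. Your remarks on why the $\max_{l}$ is needed and on the $\wedge(2K)^p$ truncation are accurate and in fact make explicit points the paper leaves implicit.
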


\begin{proof}
Let us adopt the partition introduced in Section \ref{section_NN} for each $i\in\{0,1,\cdots,M+1\}$. First, by \eqref{penalty} we have
\begin{align*}
&\max_{t\in \mathbb{I}\cup\{\tau_{M+1}\}}   E\left[ \max_{l\in\{0,1,\cdots,L\} } \left| \mathcal{V}(t,l)-\hat{\mathcal{V}}(t,l) \right|^p  \right]
\\
=&\max_{i\in \{0,1,\cdots,M+1\}} E \left[ \max_{l\in\{0,1,\cdots,L\} } \left| \mathcal{V}(\tau_i,l)-\hat{\mathcal{V}}(\tau_i,l) \right|^p \right]
\\
=&\max\left\{ E \left[ \max_{l\in\{0,1,\cdots,L\} } \left| \mathcal{V}(T,l)-\hat{\mathcal{V}}(T,l) \right|^p \right] , \max_{i\in \{0,1,\cdots,M\}} E \left[ \max_{l\in\{0,1,\cdots,L\} } \left| \mathcal{V}(\tau_i,l)-\hat{\mathcal{V}}(\tau_i,l) \right|^p \right] \right\}
\\
\le &E \left[ \max_{l\in\{0,1,\cdots,L\} } \left| \mathcal{V}(T,l)-\hat{\mathcal{V}}(T,l) \right|^p \right] + \max_{i\in \{0,1,\cdots,M\}} E \left[ \max_{l\in\{0,1,\cdots,L\} } \left| \mathcal{V}(\tau_i,l)-\hat{\mathcal{V}}(\tau_i,l) \right|^p \right]
\\
= &E \left[ \max_{l\in\{0,1,\cdots,L\} } \left| G(S(T),l)-G(\hat S(T),l) \right|^p \right] + \max_{i\in \{0,1,\cdots,M\}} E \left[ \max_{l\in\{0,1,\cdots,L\} } \left| \mathcal{V}(\tau_i,l)-\hat{\mathcal{V}}(\tau_i,l) \right|^p \right]
\\
\le &\left[ (AL)^p \left( E\left| S(T) - \hat{S}(T) \right|^p \right)\land (2K)^p \right] + \max_{i\in \{0,1,\cdots,M\}} E \left[ \max_{l\in\{0,1,\cdots,L\} } \left| \mathcal{V}(\tau_i,l)-\hat{\mathcal{V}}(\tau_i,l) \right|^p \right].
\end{align*}
Also, we have
\begin{align*}
&\max_{i\in \{0,1,\cdots,M\}} E \left[ \max_{l\in\{0,1,\cdots,L\} } \left| \mathcal{V}(\tau_i,l)-\hat{\mathcal{V}}(\tau_i,l) \right|^p \right]
\\
&\leq  \max_{i\in\{0,1,\cdots,M\}} E\Bigg[ \max_{l\in\{0,1,\cdots,L\} }  \max_{q\in\{0,1,\cdots,l\land\tilde{L}\} }
\\
&\text{ }\text{ }\text{ }\text{ }\text{ }\text{ }\text{ }\text{ }\text{ }\text{ }\text{ }\text{ }\text{ }\text{ }\text{ }\text{ }\text{ }\text{ }\text{ }\Big|  \Phi(\tau_i,S(\tau_i),l,q) + E_{\sF^S_{\tau_i}}\mathcal{V}(\tau_{i+1},l-q) -  \Phi(\tau_i,\hat{S}(\tau_i),l,q) - E_{\sF^S_{\tau_i}}\hat{\mathcal{V}}(\tau_{i+1},l-q)
\\
&\text{ }\text{ }\text{ }\text{ }\text{ }\text{ }\text{ }\text{ }\text{ }\text{ }\text{ }\text{ }\text{ }\text{ }\text{ }\text{ }\text{ }\text{ }\text{ } + E_{\sF^S_{\tau_i}}\hat{\mathcal{V}}(\tau_{i+1},l-q) - \mathcal{NN}^{\Theta^{i,l-q}}(\color{black}\hat{S}(t_0^0),\cdots,\hat{S}(t_{N_i}^i)\color{black}) \Big|^p\Bigg]
\\
&\le  C_p\max_{i\in\{0,1,\cdots,M\}} E \Bigg[ \max_{l\in\{0,1,\cdots,L\} } \max_{q\in\{0,1,\cdots,l\land\tilde{L}\} } \Bigg(\Bigg|  E_{\sF^S_{\tau_i}}\left[\mathcal{V}(\tau_{i+1},l-q) -\hat{\mathcal{V}}(\tau_{i+1},l-q) \right] \Bigg|^p 
\\
&\text{ }\text{ }\text{ }\text{ }\text{ }\text{ }\text{ }\text{ }\text{ }\text{ }\text{ }\text{ }\text{ }\text{ }\text{ }\text{ }\text{ }\text{ }\text{ }\text{ }\text{ }\text{ }+\tilde{L}^p\left(\left |S(\tau_i)-\hat{S}(\tau_i)\right |^p \land (2K)^p\right)
\\
&\text{ }\text{ }\text{ }\text{ }\text{ }\text{ }\text{ }\text{ }\text{ }\text{ }\text{ }\text{ }\text{ }\text{ }\text{ }\text{ }\text{ }\text{ }\text{ }\text{ }\text{ }\text{ } + \left| E_{\sF^S_{\tau_i}}\hat{\mathcal{V}}(\tau_{i+1},l-q) - \mathcal{NN}^{\Theta^{i,l-q}}(\color{black}\hat{S}(t_0^0),\cdots,\hat{S}(t_{N_i}^i)\color{black}) \right|^p\Bigg) \Bigg]
\\
&\le C_p\max_{i\in\{0,1,\cdots,M\}} E \Bigg[ \max_{l\in\{0,1,\cdots,L\} }  \Bigg(\Bigg|  E_{\sF^S_{\tau_i}}\left[\mathcal{V}(\tau_{i+1},l) -\hat{\mathcal{V}}(\tau_{i+1},l) \right] \Bigg|^p +\tilde{L}^p\left(\left |S(\tau_i)-\hat{S}(\tau_i)\right |^p \land (2K)^p\right)
\\
&\text{ }\text{ }\text{ }\text{ }\text{ }\text{ }\text{ }\text{ }\text{ }\text{ }\text{ }\text{ }\text{ }\text{ }\text{ }\text{ }\text{ }\text{ }\text{ }\text{ }\text{ }\text{ }\text{ }\text{ }\text{ }\text{ }\text{ }\text{ }\text{ }\text{ }\text{ }\text{ }\text{ }\text{ }\text{ }\text{ }\text{ }\text{ }\text{ } + \left| E_{\sF^S_{\tau_i}}\hat{\mathcal{V}}(\tau_{i+1},l) - \mathcal{NN}^{\Theta^{i,l}}(\color{black}\hat{S}(t_0^0),\cdots,\hat{S}(t_{N_i}^i)\color{black}) \right|^p\Bigg) \Bigg]
\\
& \le C_p\max_{i\in\{0,1,\cdots,M\}} E \Bigg[ \max_{l\in\{0,1,\cdots,L\} } \Bigg( \left|  \mathcal{V} (\tau_{i+1},l) -  \hat{\mathcal{V}}(\tau_{i+1},l)  \right|^p +\tilde{L}^p\left( \left |S(\tau_i)-\hat{S}(\tau_i)\right |^p \land (2K)^p\right)
\\
&\text{ }\text{ }\text{ }\text{ }\text{ }\text{ }\text{ }\text{ }\text{ }\text{ }\text{ }\text{ }\text{ }\text{ }\text{ }\text{ }\text{ }\text{ }\text{ }\text{ }\text{ }\text{ }\text{ }\text{ }\text{ }\text{ }\text{ }\text{ }\text{ }\text{ }\text{ }\text{ }\text{ }\text{ }\text{ }\text{ }\text{ }\text{ }\text{ }+ \left| E_{\sF^S_{\tau_i}}\left[\hat{\mathcal{V}}(\tau_{i+1},l)\right] - \mathcal{NN}^{\Theta^{i,l}}(\color{black}\hat{S}(t_0^0),\cdots,\hat{S}(t_{N_i}^i)\color{black}) \right|^p \Bigg) \Bigg]
\\
&\dots \\
&\le C_{p,A,M,L,\tilde{L}}\Bigg(  \max_{i\in\{0,1,\cdots,M\}}  \sum_{m=i}^{M+1} \left( E \left| S(\tau_m) - \hat{S}(\tau_m) \right|^p \right) \land (2K)^p
\\ 
&\text{ }\text{ }\text{ }\text{ }\text{ }\text{ }\text{ }\text{ }\text{ }\text{ }\text{ }\text{ }\text{ }\text{ }\text{ }+E \Bigg[ \sum_{m=i}^{M} \max_{l\in\{0,1,\cdots,L\} } \left| E_{\sF^S_{\tau_{m}}}\hat{\mathcal{V}}(\tau_{m+1},l) - \mathcal{NN}^{\Theta^{m,l}}(\color{black}\hat{S}(t_0^0),\cdots,\hat{S}(t_{N_m}^m)\color{black}) \right|^p\Bigg] \Bigg).
\end{align*}
Combining the above calculations yields that
\begin{align*}
&\max_{t\in \mathbb{I}\cup\{\tau_{M+1}\}}   E\left[ \max_{l\in\{0,1,\cdots,L\} } \left| \mathcal{V}(t,l)-\hat{\mathcal{V}}(t,l) \right|^p  \right]
\\
\le &\left[ (AL)^p \left( E\left| S(T) - \hat{S}(T) \right|^p \right)\land (2K)^p \right]
\\
& + C_{p,A,M,L,\tilde{L}}\Bigg(  \max_{i\in\{0,1,\cdots,M\}}  \sum_{m=i}^{M+1} \left( E \left| S(\tau_m) - \hat{S}(\tau_m) \right|^p \right) \land (2K)^p
\\ 
&\text{ }\text{ }\text{ }\text{ }\text{ }\text{ }\text{ }\text{ }\text{ }\text{ }\text{ }\text{ }\text{ }\text{ }\text{ }\text{ }\text{ }\text{ }\text{ }\text{ }\text{ }+E \Bigg[ \sum_{m=i}^{M} \max_{l\in\{0,1,\cdots,L\} } \left| E_{\sF^S_{\tau_{m}}}\hat{\mathcal{V}}(\tau_{m+1},l) - \mathcal{NN}^{\Theta^{m,l}}(\color{black}\hat{S}(t_0^0),\cdots,\hat{S}(t_{N_m}^m)\color{black}) \right|^p\Bigg] \Bigg)
\\
\le &\tilde C_{p,A,M,L,\tilde{L}}\Bigg(  \max_{i\in\{0,1,\cdots,M\}}  \sum_{m=1}^{M+1} \left( E \left| S(\tau_m) - \hat{S}(\tau_m) \right|^p \right) \land (2K)^p
\\ 
&\text{ }\text{ }\text{ }\text{ }\text{ }\text{ }\text{ }\text{ }\text{ }\text{ }\text{ }\text{ }\text{ }\text{ }\text{ }\text{ }\text{ }\text{ }\text{ }\text{ }\text{ }+E \Bigg[ \sum_{m=0}^{M} \max_{l\in\{0,1,\cdots,L\} } \left| E_{\sF^S_{\tau_{m}}}\hat{\mathcal{V}}(\tau_{m+1},l) - \mathcal{NN}^{\Theta^{m,l}}(\color{black}\hat{S}(t_0^0),\cdots,\hat{S}(t_{N_m}^m)\color{black}) \right|^p\Bigg] \Bigg).
\end{align*}
\end{proof}

\begin{rmk}
In the above convergence estimate \eqref{eq-thm-convergence-analysis}, the simulated paths $\hat{S}$ may be obtained via different numerical schemes and the rate of convergence is attainable accordingly for the first term on the right hand side, whereas for the second term, the error estimate is an open question in the fields of neural network approximations.
\end{rmk}

\subsubsection{A synthetic numerical example}
Let us consider a hypothetical discrete swing contract starting from January 5, 2019 with a total lifetime 30 days. The initial condition is chosen as the price data at the starting day. We stick to the following parameters for the proposed model in \eqref{model_new} so that they are consistent with the calibration results in Table \ref{price_parameters} and Table \ref{st_30}.
\begin{table}[!htbp]
\begin{center}
\begin{tabular}{|c|c|c|c|c|c|c|c|}
\hline
$\alpha$ & $r$    & $\lambda$ & $V_0$  & $V_1$  & $V_2$  & $\gamma_1$ & $\gamma_2$ \\ \hline
1.4561   & 5.2536 & 4.2638    & 2.1268 & 0.1361 & 4.0786 & 0.1040     & -0.3616    \\ \hline
\end{tabular}
\end{center}
\caption{Parameters chosen for the proposed model \eqref{model_new}} \label{parameters_sw_exam}
\end{table}

Let us generate 12000 log-price trajectories under the proposed model with the above choice of parameters. The process $P$ is obtained by real periodic component of the storage data and the initial conditions of $S$ and $X$ are made consistent with real price and storage datasets. The neural network structure with one hidden layer is introduced in Section \ref{NNARF} and the activation function is chosen to be the sigmoid function. We apply the Levenberg-Marquardt optimization algorithm (see \cite{levenberg1944method,marquardt1963algorithm}) as the optimizer. 

Let us further specify the error function as
\begin{align}
\text{error}:=\sum_{i=1}^{M}\sum_{j=1}^{L}  \sum_{k=1}^{\mathcal{D}}\left|\mathcal{NN}^{\Theta^{i,j}}\left(\hat{S}^k(t^0_{0}),\cdots,\hat{S}^k(t^i_{N_i})\right)-\mathcal{\hat V}(\tau_{i+1},\hat{S}^k(\tau_{i+1}),j) \right|^2.
\end{align}

Then by choosing $L=3, \tilde L=2, A=5,  M=4, \mu=0, K=3, \delta = 10^{-2}$ with a chosen time step size $dt=\frac{1}{365}$, we have the following table of pricing results for the designed discrete swing contract at time $t=0$ with mean 6.7770 and variance 0.0293.
\begin{table}[!htbp]
\begin{center}
\begin{tabular}{|c|c|c|c|c|c|c|c|c|c|c|}
\hline
Run   & 1      & 2      & 3      & 4      & 5      & 6      & 7      & 8      & 9      & 10     \\ \hline
Price & 6.7115 & 6.6773 & 6.9937 & 6.7024 & 7.1251 & 6.7017 & 6.9733 & 6.9444 & 6.6411 & 6.8255 \\ \hline
Run   & 11     & 12     & 13     & 14     & 15     & 16     & 17     & 18     & 19     & 20     \\ \hline
Price & 6.7845 & 6.6647 & 6.7334 & 6.7841 & 6.6155 & 6.5445 & 6.6065 & 6.5553 & 6.8714 & 7.0845 \\ \hline
\end{tabular}
\end{center}
\caption{20 runs of swing option time 0 prices} \label{sw_price_20}
\end{table}

Performance of neural network approximations, as shown by the left plot in Figure \ref{NN_performance_errorhistogram}, is measured in terms of mean square error (MSE) and shown in log scale. As we can see it rapidly decreased as the network was trained. The performance is shown for each of the training, validation, and test sets, which eventually have similar mean square error results. The final network is the one that performed best on the validation set, highlighted by the green circle. This shows that our deep learning neural network scheme is doing a good job approximating the associated conditional expectations and hence generating reasonably accurate swing option prices.
\begin{figure}[!htbp]
\begin{center}
\includegraphics[width=0.45\linewidth]{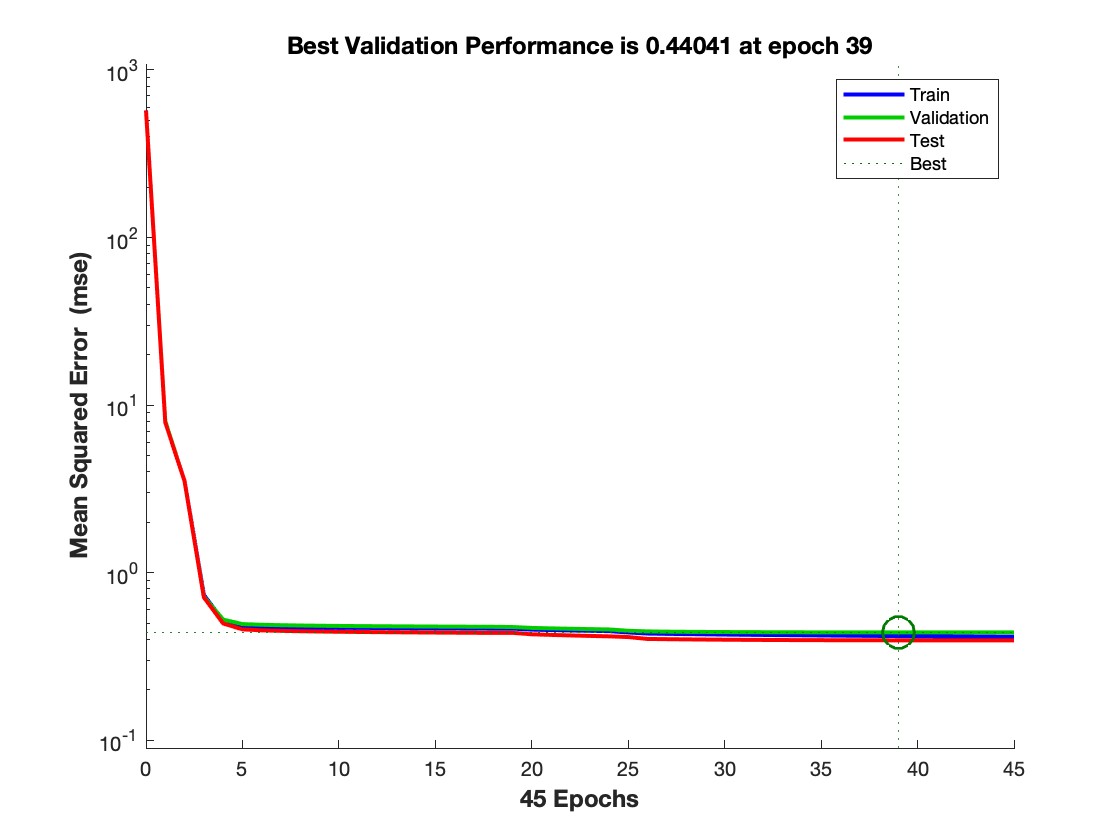}
\includegraphics[width=0.45\linewidth]{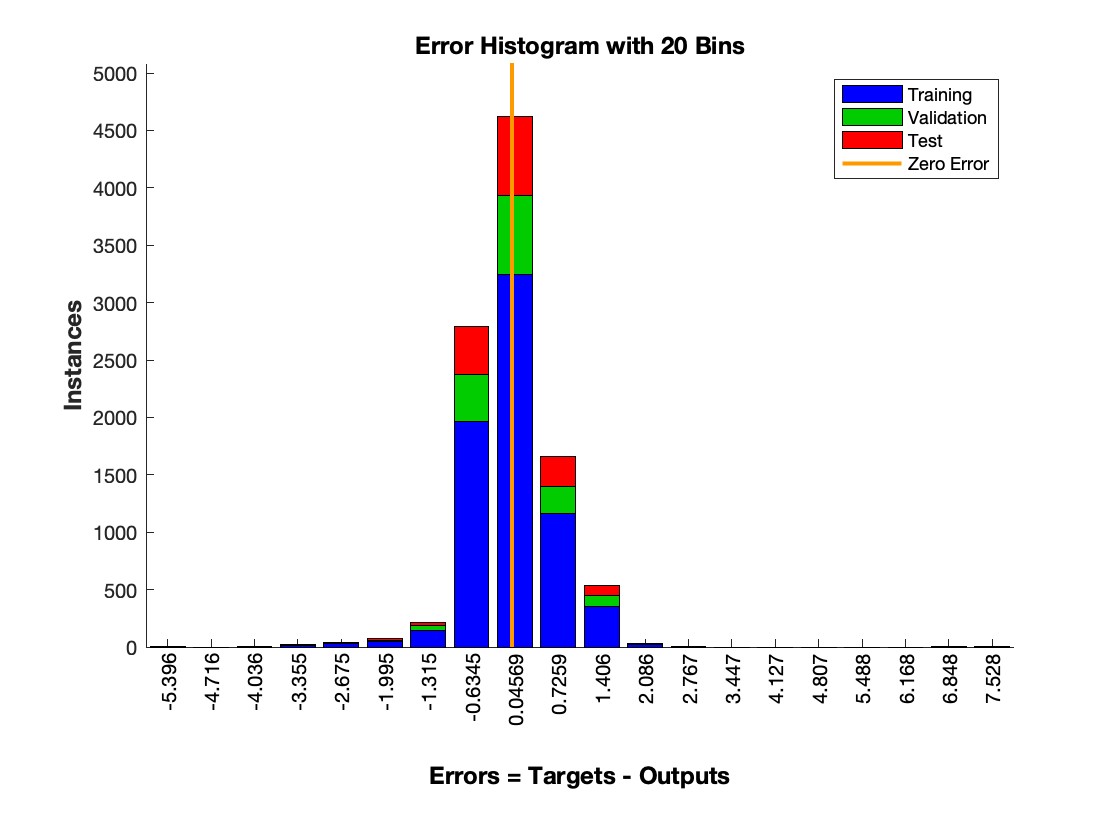}
\end{center}
\caption{Neural network performance \& error histogram with Sigmoid activation \& Levenberg-Marquardt optimizer}
\label{NN_performance_errorhistogram}
\end{figure}

A second measurement to the performance of the neural network is the error histogram as shown by the right plot of Figure \ref{NN_performance_errorhistogram}. Here ``Targets" means the conditional expectations of the value functions and ``Outputs" means the corresponding neural network approximations. This shows how the ``signed" errors are distributed. Typically most errors are near zero, with very few far away from that. Our error histogram demonstrates that point well and supports a good numerical result under the deep learning approximation scheme. As comparisons, we further attach the numerical results in Table \ref{result_comparison} with different activation functions and optimization schemes. The means and variances of the swing option prices at initial time are computed using results obtained in 20 runs of the MATLAB code, respectively. The performance and error histogram for one run in each case are included as well correspondingly. In general, the sigmoid activation function and Scaled Conjugate Gradient optimizer cases are performing slightly better considering lower errors. However, there is a slight ``gap" between the mean prices computed under the two chosen optimizers. It is still hard to precisely determine which case produces better computation result of the swing option price since we lack comparison from real-life option price data.
\begin{table}[!htbp]
\begin{center}
\begin{tabular}{|c|c|c|c|}
\hline
Activation Function & Optimizer                 & Mean   & Variance \\ \hline
Sigmoid             & Levenberg-Marquardt       & 6.7944 & 0.0267   \\ \hline
ReLU                & Levenberg-Marquardt       & 6.7770 & 0.0293   \\ \hline
Sigmoid             & Scaled Conjugate Gradient & 6.9941 & 0.0250   \\ \hline
ReLU                & Scaled Conjugate Gradient & 6.9724 & 0.0128   \\ \hline
\end{tabular}
\end{center}
\caption{Comparisons of numerical pricing results via different activations \& optimizers} \label{result_comparison}
\end{table}
\begin{figure}[!htbp]
\begin{center}
\includegraphics[width=0.45\linewidth]{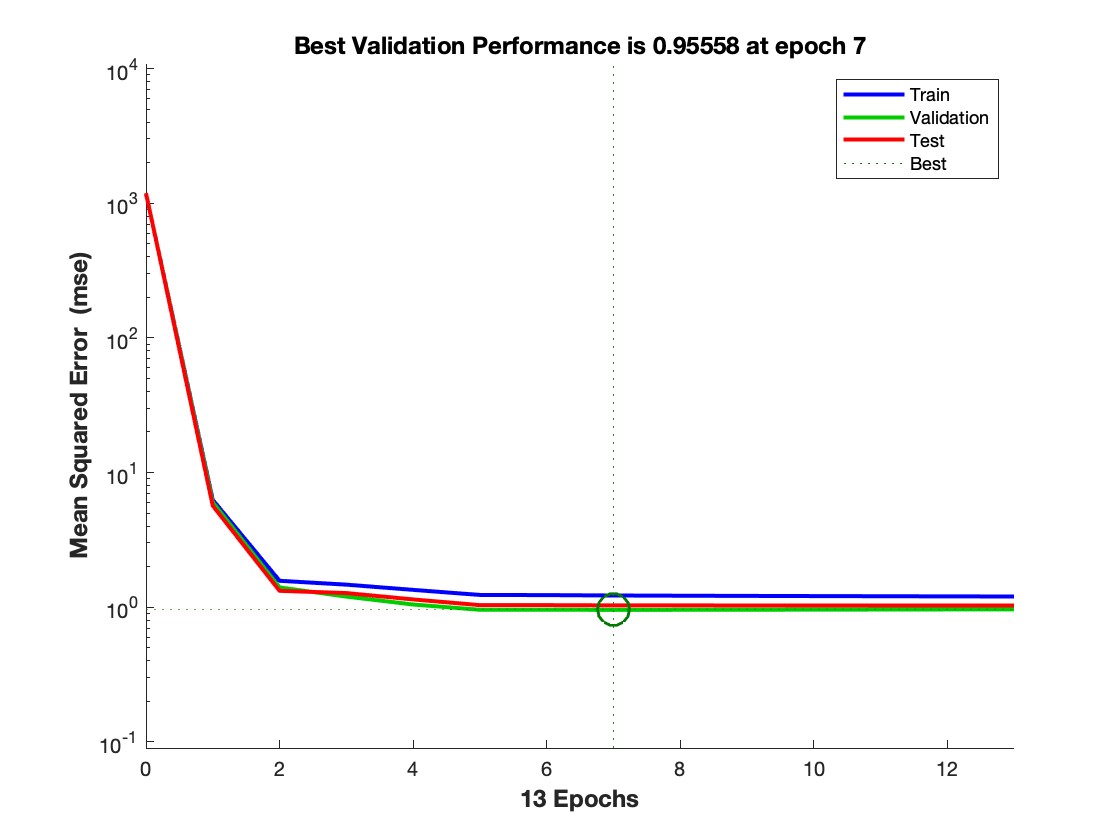}
\includegraphics[width=0.45\linewidth]{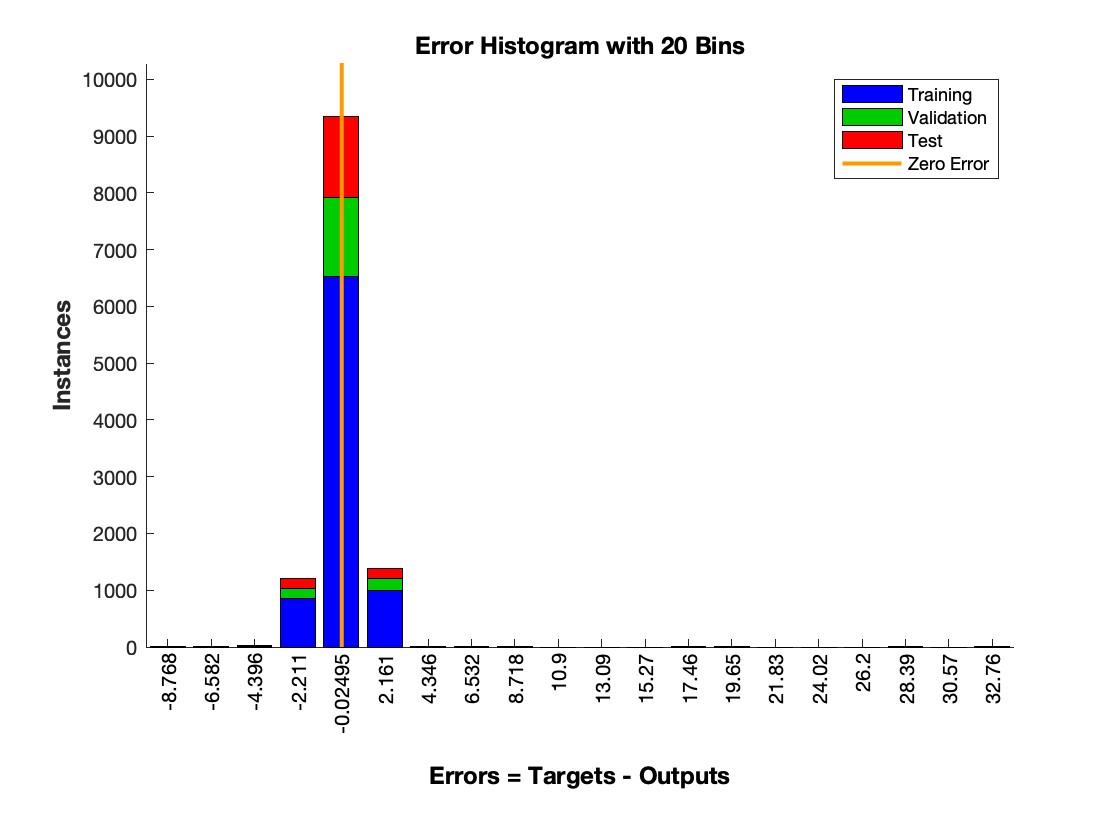}
\end{center}
\caption{Neural network performance \& error histogram with ReLU activation \& Levenberg-Marquardt optimizer}
\label{NN_performance_errorhistogram_ReLU_LM}
\end{figure}
\begin{figure}[!htbp]
\begin{center}
\includegraphics[width=0.45\linewidth]{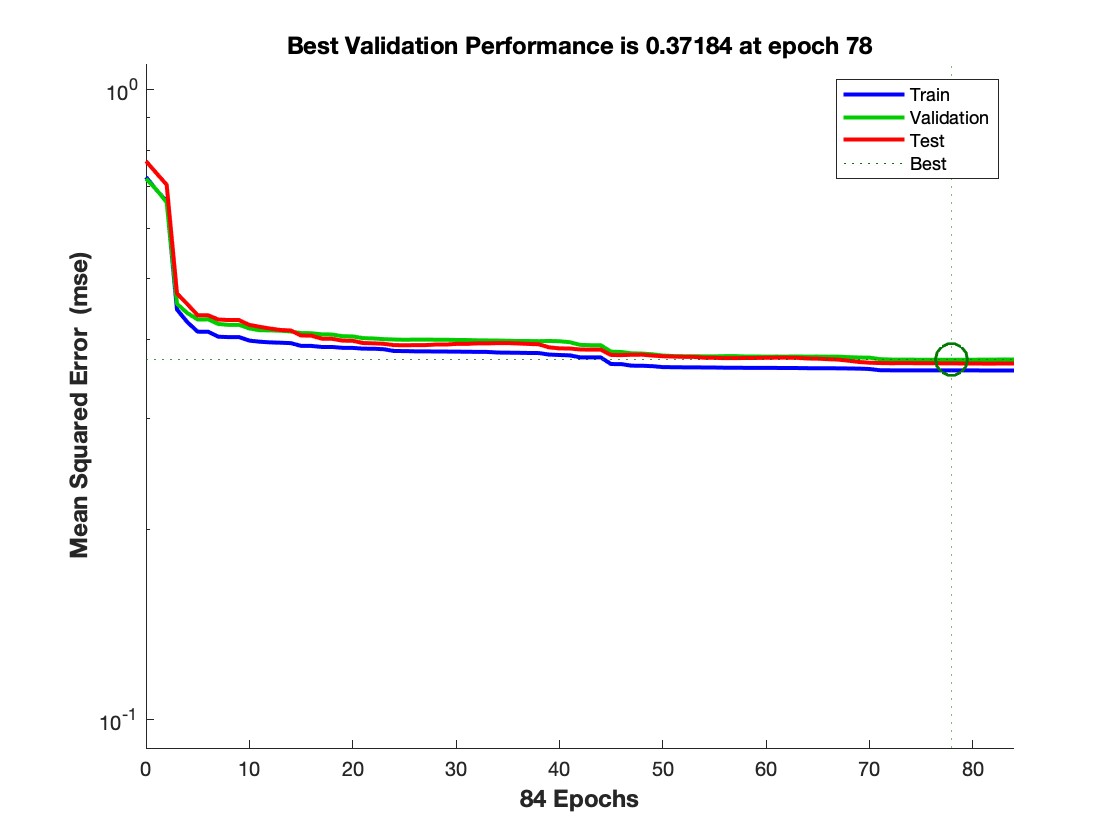}
\includegraphics[width=0.45\linewidth]{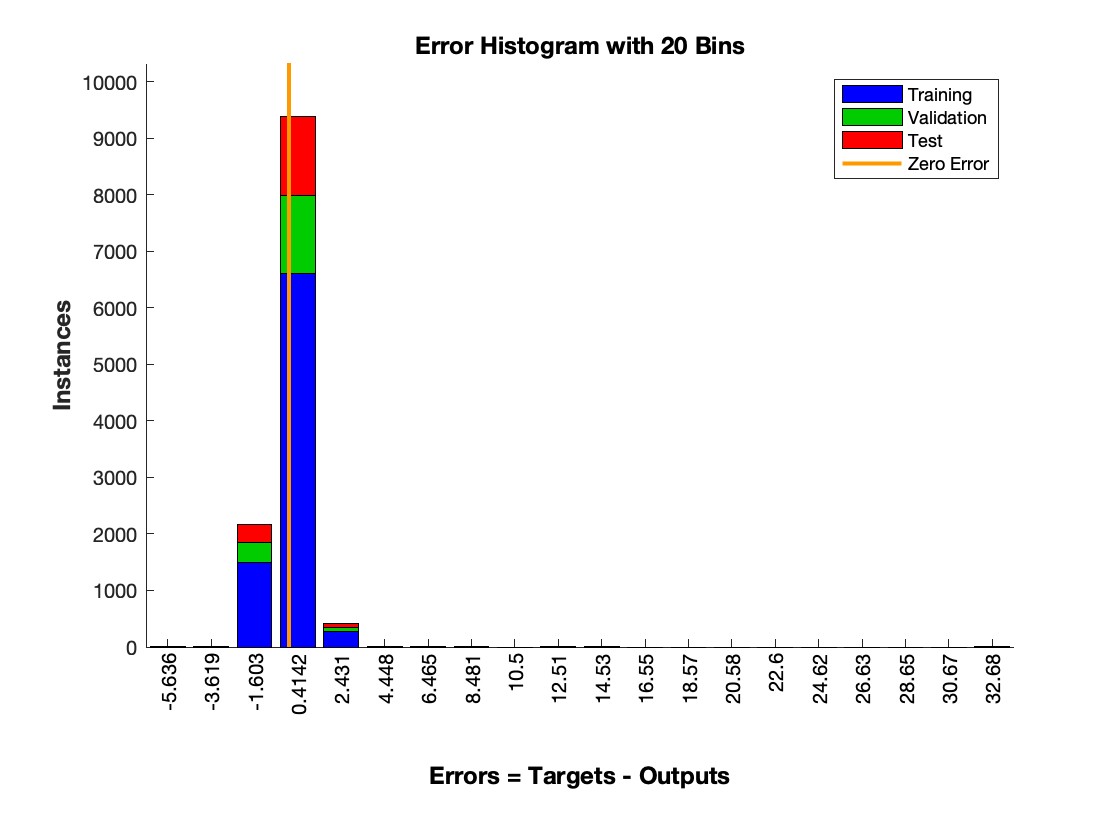}
\end{center}
\caption{Neural network performance \& error histogram with Sigmoid activation \& Scaled Conjugate Gradient optimizer}
\label{NN_performance_errorhistogram_ReLU_LM}
\end{figure}
\begin{figure}[!htbp]
\begin{center}
\includegraphics[width=0.45\linewidth]{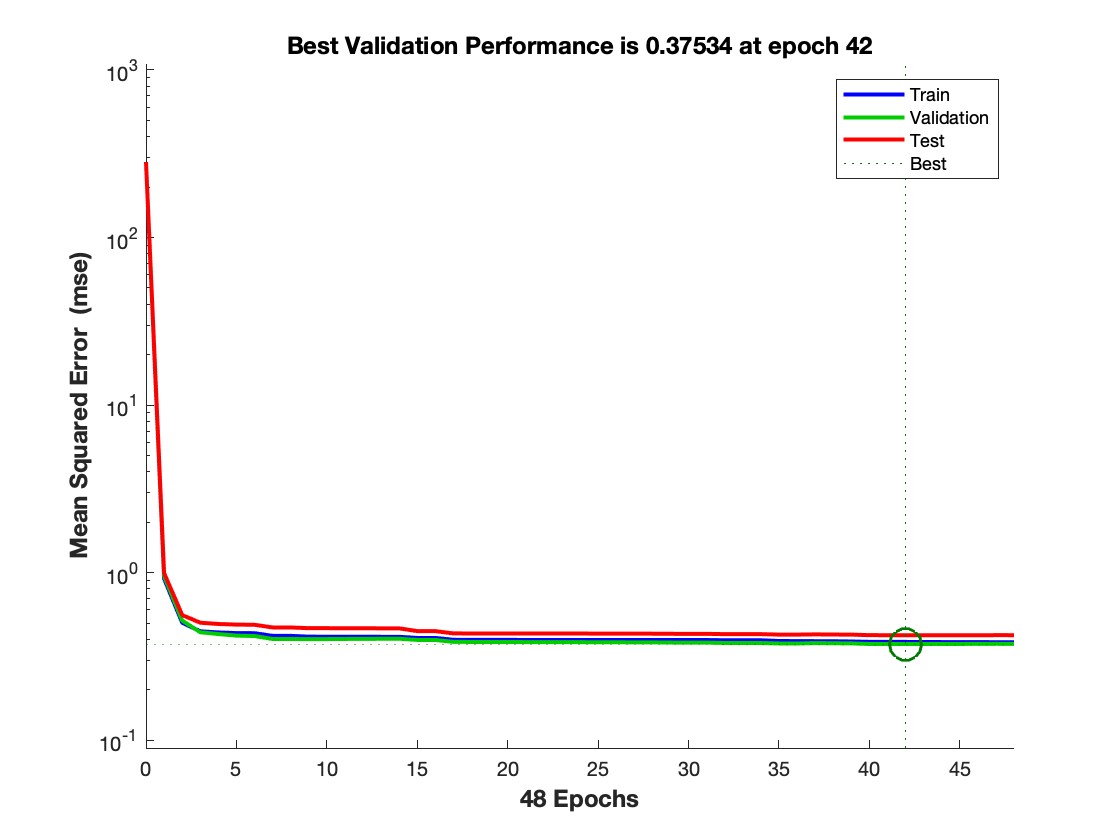}
\includegraphics[width=0.45\linewidth]{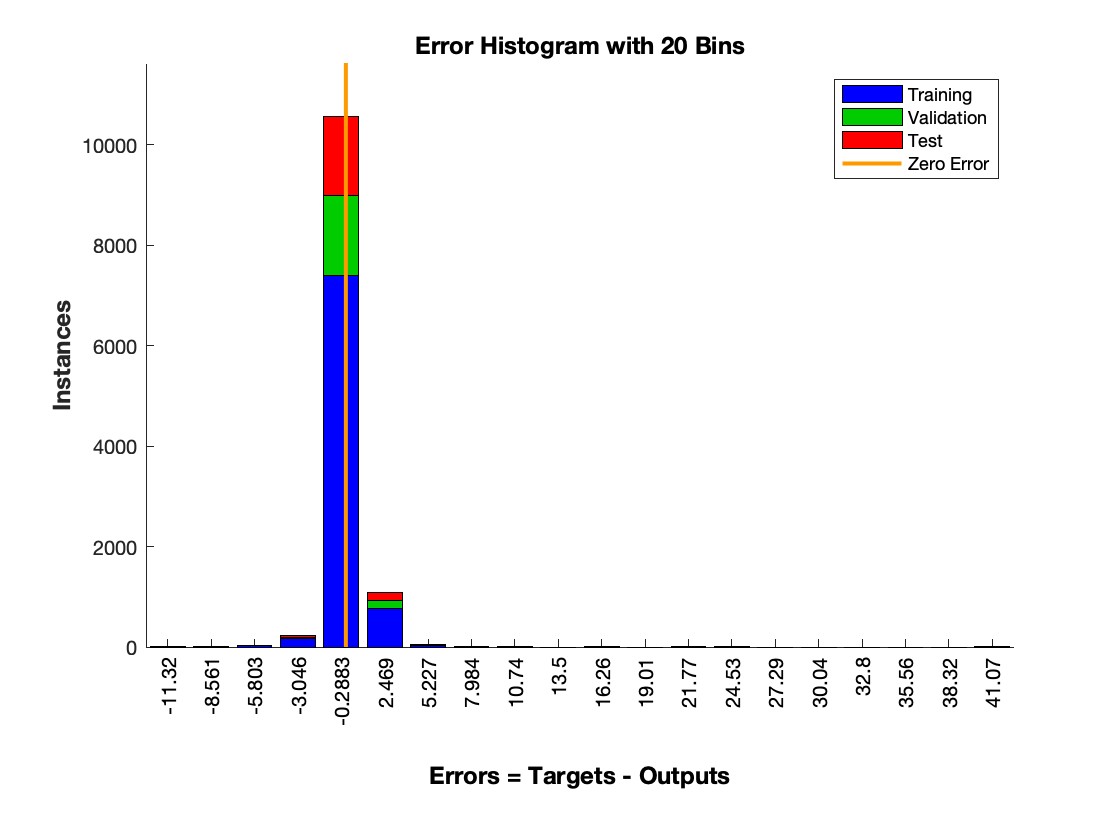}
\end{center}
\caption{Neural network performance \& error histogram with ReLU activation \& Scaled Conjugate Gradient optimizer}
\label{NN_performance_errorhistogram_ReLU_LM}
\end{figure}

\newpage
The performance of the deep-learning neural network approximations can be enhanced, in theory, by trying different optimizers and activation functions, or by increasing the complexity of the neural networks. However, this may lead to more training coefficients and thus may increase training time naturally. We will leave this discussion to future research work.


\pagebreak



 

  

\section{Appendix A}

\subsection{Consensus-Based Optimization}
For more detailed consensus-based optimization (CBO) scheme, please refer to \cite{cipriani2022zero,pinnau2017consensus}. Consider the following optimization problem:
 \begin{equation}
\argmax_{\overrightarrow{\theta}\in\bR^d}l(\overrightarrow{\theta}).
\end{equation}
Let $M\in\bN^+$ be the total number of particles in the system and $N\in\bN^+$ be the total number of CBO time steps. For each $k\in\{1,2,\cdots,M\}$, the $\bR^d$-valued stochastic process $\overrightarrow{\theta}^k_t$ denotes the position of the $k$-th particle at time $t$, with initial condition $\overrightarrow{\theta}_0^k$ following independent identical distributions for each $k$. Let $\{(W_t^k)_{t\ge 0}\}_{k=1}^M$ be $M$ independent $d$-dimensional Wiener processes. The consensus-based optimization dynamics may be given by
\begin{equation}\label{cbo}
d \overrightarrow{\theta}^k_t = a\cdot\left( \overrightarrow{\theta}^b_t(c^M) - \overrightarrow{\theta}^k_t  \right)dt + \sigma\cdot D\left( \overrightarrow{\theta}^b_t(c^M) - \overrightarrow{\theta}^k_t \right)d\overrightarrow{W}_t^k,
\end{equation}
where the weighted average is defined as
\begin{equation}\label{wa}
\overrightarrow{\theta}^b_t(c^M) := \frac{\int_{\bR^d}\! \overrightarrow{\theta}w_b^l(\overrightarrow{\theta})c^M(t,d\overrightarrow{\theta})}{\int_{\bR^d}\! w_b^l(\overrightarrow{\theta})c^M(t,d\overrightarrow{\theta})},
\end{equation}
with the empirical measure
\begin{equation}\label{measure}
c^M(t,d\overrightarrow{\theta}) := \frac{1}{M} \sum_{k=1}^M \delta_{\overrightarrow{\theta}_t^k}(d\overrightarrow{\theta}),
\end{equation}
and the weight function
\begin{equation}\label{weight}
w_b^l(\overrightarrow{\theta}) := \exp\{ b\cdot l(\overrightarrow{\theta}) \}.
\end{equation}
\begin{rmk}
Here we use the notation for the diagonal matrix
\begin{equation*}
D(\overrightarrow{\theta}_t) := \text{diag}\left\{ (\overrightarrow{\theta}_t)_1, \cdots, (\overrightarrow{\theta}_t)_d \right\}\in\bR^{d\times d},
\end{equation*}
where $(\overrightarrow{\theta}_t)_j$ is the $j$-th component of the $d$-dimensional vector $\overrightarrow{\theta}_t$. $a>0$ is the acceleration coefficient, and $\sigma>0$ is the diffusion coefficient. The choice of the weight function \eqref{weight} comes from the well-known Laplace principle \cite{dembo1998zeitouni,miller2006applied}, which states that for any probability measure $c$, there holds
\begin{equation}
\lim_{b\to\infty} \frac{1}{b} \log{\int_{\bR^d}\! e^{b\cdot l(x)}c(dx)}  = \esssup_{x\in \text{supp} (c)} l(x).
\end{equation}
Thus for $b$ large enough, when $M,N\to\infty$ one expects that 
\begin{equation}
\overrightarrow{\theta}_t^b(c^M) \approx \arg\max\{ l(\overrightarrow{\theta}^1_t),\cdots, l(\overrightarrow{\theta}^M_t) \}.
\end{equation}

\end{rmk}
 
Obviously, the CBO time step size $dt$ may be different than the price process time step  and a minimization problem can be computed similarly.
It has been proven that CBO can guarantee global convergence under suitable assumptions (see \cite{carrillo2018analytical}), and it is a powerful and robust method for many high-dimensional non-convex optimization problems in machine learning (see \cite{carrillo2021consensus,fornasier2021consensus}).

\subsection{Proof of Lemma \ref{lemma_wp}}
\begin{proof}
By \eqref{V_new}, we have for any $t\in [0,\tau\land T]$,
\begin{align*}
|\sigma(t)|^2 &
\le 3C_{{\delta},V_0,V_1}^2 + 3V_2^2 |\overline{S}_{t}^{\alpha,\delta} - \overline{S}(0)| + 3\delta V_2^2
\\
&= C_{\delta ,V_0,V_1,V_2} + 3V_2^2 |\overline{S}_{t}^{\alpha,\delta} - \overline{S}(0)|
\\
&\le C_{\delta ,V_0,V_1,V_2} + C_{\alpha,\delta,\overline{S}(0),T,V_2} \left( 1+\sup_{s\in [0,t]}|\overline{S}(s\land\tau)| \right)
\\
&\le C_{\alpha,\delta ,\overline{S}(0),T,V_0,V_1,V_2} \left( 1+ \sup_{s\in [0,t]}|\overline{S}(s\land\tau)| \right),
\end{align*}
which justifies (i).

For (ii), by \eqref{S_new} and the fact $\lambda\ge 0$, we have for any $t\in [0,\tau\land T]$,
\begin{align*}
&|\overline{S}(t)|^2 
\nonumber\\
&\le 3|\overline{S}(0)|^2 e^{-2\lambda t} + 3\left| \int_0^t \! e^{\lambda(u-t)}\Big( r-\frac{1}{2}|\sigma(u)|^2 \Big) du\right|^2  + 3\left( \int_0^t \! e^{\lambda(u-t)}\sigma(u) dW(u) \right)^2
\\
&\le 3|\overline{S}(0)|^2 + 3t\int_0^t \! \left( |r|+|\sigma(u)|^2 \right)^2 du + 3\left( \int_0^t \! e^{\lambda(u-t)}\sigma(u) dW(u) \right)^2
\\
&\le 3|\overline{S}(0)|^2 + 6r^2T^2 + 6\int_0^t \! |\sigma(u)|^4 du + 3\left( \int_0^t \! e^{\lambda(u-t)}\sigma(u) dW(u) \right)^2
\\
&\le C_{\alpha,\delta ,r,\overline{S}(0),T,V_0,V_1,V_2} \left(1+  \int_0^t \! \sup_{s\in [0,u]} |\overline{S}(s\land\tau)|^2 du\right) + 3\left( \int_0^t \! e^{\lambda(u-t)}\sigma(u) dW(u) \right)^2.
\end{align*}
By maximizing with respect to $t$, followed by taking expectations on both sides and applying the Burkholder-Davis-Gundy inequality, we obtain
\begin{align*}
&E\left[\sup_{s\in [0,t\land\tau]}|\overline{S}(s)|^2\right] \nonumber\\
&= E\left[\sup_{s\in [0,t]}|\overline{S}(s\land\tau)|^2\right]
\\
&\le  C+ C\cdot E\left[\int_0^t \! \sup_{s\in [0,u]} |\overline{S}(s\land\tau)|^2 du\right]
 + 3 E\left[\sup_{s\in [0,t\land\tau]}\left( \int_0^s \! e^{\lambda(u-s)}\sigma(u) dW(u) \right)^2\right]
\\
&\le C+ C\cdot E\left[\int_0^t \! \sup_{s\in [0,u]} |\overline{S}(s\land\tau)|^2 du\right]
	+ C \cdot E\left[ \int_0^t \! |\sigma(u\land\tau)|^2 du \right]
\\
&\le C_{\alpha,\delta, r,\overline{S}(0),T,V_0,V_1,V_2} + C\cdot E\left[\int_0^t \! \sup_{s\in [0,u]} |\overline{S}(s\land\tau)|^2 du\right]
 \\
 &\quad
+ C \cdot E\left[ \int_0^t \! C  \left( 1+ \sup_{s\in [0,u]}|\overline{S}(s\land\tau)| \right) du \right]
\\
&\le C + C\cdot E\left[\int_0^t \! \sup_{s\in [0,u]} |\overline{S}(s\land\tau)|^2 du\right],
\end{align*}
where the constant $C$ is independent of $\tau$. By Gr\"onwall's inequality, we have
\begin{align*}
E \sup_{t\in [0,\tau\land T]} | \overline{S}(t) |^2 \le &C\cdot e^{ {C}T},
\end{align*}
which yields the desired estimate.
\end{proof}

\subsection{Proof of Theorem \ref{thm_wp}}
\begin{proof}
\noindent \textbf{Step 1.}Take an arbitrary real number $B> |\overline S(0)|$. We first prove that there exists a  pair of processes $(\overline S, X)$ such that $(\overline S, X)(\cdot \wedge \tau_B)$ uniquely satisfies both \eqref{est-thm} and the stochastic differential equation \eqref{model_new}  on $[0,\tau_B]$ where we define the stopping time
 $$\tau_B := \inf\left\{ t>0: \left|\overline{S}(t)\right| \ge B \right\}.$$
    In view of the standard theories of stochastic differential equations (see \cite{oksendal2003stochastic,karatzas2012brownian,qiu2020stochastic,da2014stochastic,mohammed1984stochastic,yong1999stochastic} for instance), we only need to justify the associated uniform Lipschitz continuity of the coefficients.

Let $\hat S_1 $ and $\hat S_2$ be two pairs of continuous processes that are square-integrable on $\Omega\times[0,T]$ for any $T>0$. For $i = 1,2$, denote $\overline{S}_{i,t}^{\alpha,\delta}$ as the moving average of $\hat{S}_i(t)$ under $\delta$-approximation as defined in \eqref{S-delta}.  It follows that
\begin{align}
&\left| \overline{S}_{1,t}^{\alpha,\delta} - \overline{S}_{2,t}^{\alpha,\delta} \right| 
\nonumber \\
&\leq
  |1-\alpha|\cdot\left| \int_0^t\! \frac{\hat{S}_1(u)-\hat{S}_2(u)}{(t+\delta)^{1-\alpha}(t-u+\delta)^{\alpha}} du \right|
  + |\hat{S}_1(t)-\hat S_2(t)| \cdot 1_{\{\alpha=1\}}
\nonumber\\
&
\le   \max_{s\in[0,t]}| \hat{S}_{1}(s) - \hat{S}_{2}(s) | \cdot \left | \int_0^t\! \frac{1-\alpha}{(t+\delta)^{1-\alpha}(t-u+\delta)^{\alpha}} du \right |
+ |\hat{S}_1(t)-\hat S_2(t)|
\nonumber\\
&
= \max_{s\in[0,t]}| \hat{S}_{1}(s) - \hat{S}_{2}(s) |  \cdot \left| 1 - \left( \frac{\delta}{t+\delta} \right)^{1-\alpha} \right|
+ |\hat{S}_1(t)-\hat S_2(t)|
\nonumber\\
&
\le  \max_{s\in[0,t]}| \hat{S}_{1}(s) - \hat{S}_{2}(s) | 
  \cdot \max\left\{ \left(\frac{T+\delta}{\delta}\right)^{\alpha -1} +1, \, 2 \right\}
\nonumber\\
&
= C_{\alpha,\delta,T} \cdot \max_{s\in[0,t]}| \hat{S}_{1}(s) - \hat{S}_{2}(s) | . \label{est1-prf-thm}
\end{align}
 
 Let us introduce a truncated version of $(\tilde R,\overline S^{\alpha,\delta})$: for $i=1,2$,
 \begin{align}
 \tilde{R}_{B,i}(t) :&= (1-\alpha)\int_0^t \! \frac{(-B)\vee\hat{S}_i(u) \wedge{B}-(-B)\vee\hat{S}_i(t)\wedge B}{(t+\delta)^{1-\alpha}(t-u+\delta)^{\alpha}} du,\label{tildeRR_new}
 \\
  \overline{S}^{\alpha,\delta}_{B,i,t} 
  :&= (1-\alpha)\int_0^t \! \frac{(-B)\vee\hat{S}_i(u) \wedge{B}}{(t+\delta)^{1-\alpha}(t-u+\delta)^{\alpha}} du
  + (-B)\vee\hat{S}_i(t)\wedge B 1_{\alpha=1}
  ,\label{SSS_new}
 \end{align}
 and accordingly, we define $R^+_B$ and $R^-_B$ by replacing $R$ with $\tilde R_B$ as in \eqref{R_new}. All these truncated versions satisfy Lipschitz continuity analogous to \eqref{est1-prf-thm}.
 
 By solving the ordinary differential equation \eqref{X_new} associated to $R^+_B$ and $R^-_B$, we obtain the corresponding deseasonalized storage $X_1$ and $X_2$. Substituting $(\hat S_1, X_1,  \overline{S}^{\alpha,\delta}_{B,1})$ and $(\hat S_2, X_2,  \overline{S}^{\alpha,\delta}_{B,2})$ into \eqref{V_new} results in the volatilities $\sigma^1$ and $\sigma^2$, respectively.
 
By It\^o's formula and the Cauchy-Schwarz inequality, for each $t\in [0,T]$, we have
\begin{align}
&d(X_1(t) - X_2(t))^2 \nonumber
\\
&= 2(X_1(t)-X_2(t)) [ \gamma_1 R_{B,1}^{+}(t)(1-X_1(t)-P(t)) - \gamma_2 R_{B,1}^{-}(t)(X_1(t)+P(t)) \nonumber
\\
&\text{ }\text{ }\text{ }\text{ }\text{ }\text{ }\text{ }\text{ }\text{ }\text{ }\text{ }\text{ }\text{ }\text{ }\text{ }\text{ }\text{ }\text{ }\text{ }\text{ }\text{ }\text{ }\text{ }-\gamma_1 R_{B,2}^{+}(t)(1-X_2(t)-P(t)) + \gamma_2 R_{B,2}^{-}(t)(X_2(t)+P(t))] dt \nonumber
\\
&= 2(X_1(t)-X_2(t)) [ \gamma_1 (R_{B,1}^{+}(t)-R_{B,2}^{+}(t))(1-X_1(t)-P(t)) - \gamma_1 R_{B,2}^{+}(t)(X_1(t)-X_2(t))  \nonumber
\\
&\text{ }\text{ }\text{ }\text{ }\text{ }\text{ }\text{ }\text{ }\text{ }\text{ }\text{ }\text{ }\text{ }\text{ }\text{ }\text{ }\text{ }\text{ }\text{ }\text{ }\text{ }\text{ }\text{ }-\gamma_2 (R_{B,1}^{-}(t)-R_{B,2}^{-}(t))(X_1(t)+P(t)) - \gamma_2 R_{B,2}^{-}(t)(X_1(t)-X_2(t))] dt \nonumber
\\
&\le 2|X_1(t)-X_2(t)| [ (|\gamma_1| + |\gamma_2|) |R_{B,1}^{+}(t)-R_{B,2}^{+}(t)| + C_{\alpha,\delta,B,T} (|\gamma_1|+|\gamma_2|) |X_1(t)-X_2(t)| ]dt \nonumber
\\
&\le  [C_{\alpha,\delta,\gamma_1,\gamma_2,B,T}| X_1(t) - X_2(t) |^2 + | R_{B,1}^{+}(t) - R_{B,2}^{+}(t) |^2]dt \nonumber
\\
&\le  \left[C_{\alpha,\delta,\gamma_1,\gamma_2,B,T}| X_1(t) - X_2(t) |^2 + | \tilde{R}_{B,1}(t) - \tilde{R}_{B,2}(t) |^2\right]\, dt \nonumber
\\
&\le  \left[C_{\alpha,\delta,\gamma_1,\gamma_2,B,T}| X_1(t) - X_2(t) |^2 + C_{\alpha}\left ( | \overline{S}_{B,1,t}^{\alpha,\delta}-\overline{S}_{B,2,t}^{\alpha,\delta} | + |\hat{S}_1(t) - \hat{S}_2(t)| \right)^2\right]dt \nonumber
\\
&\le  \left[C_{\alpha,\delta,\gamma_1,\gamma_2,B,T}| X_1(t) - X_2(t) |^2 +  \overline C_{\alpha,\delta,T} \max_{s\in[0,t]} | \hat{S}_{1}(s) - \hat{S}_{2}(s) |^2\right]\, dt, \nonumber
\end{align}
which by Gr\"onwall's inequality implies that
\begin{align}
\max_{u\in [0,t]} | X_1(u) - X_2(u) |^2 
&\le T\overline{C}_{\alpha,\delta,T} 
\max_{s\in[0,t]} | \hat{S}_{1}(s) - \hat{S}_{2}(s) |^2
\cdot e^{TC_{\alpha,\delta,\gamma_1,\gamma_2,B,T}} \nonumber
\\
&= \overline{C}_{\alpha,\delta,\gamma_1,\gamma_2,B,T} 
\max_{s\in[0,t]} | \hat{S}_{1}(s) - \hat{S}_{2}(s) |^2. \label{X_Lipschitz}
\end{align} 
 Meanwhile, for each $t\in [0,T]$, it holds that
\begin{align*}
&| \sigma^1(t) - \sigma^2(t) |
\\
\le& \frac{|V_1| \cdot \left| (X_2(t)+P(t))(1- (X_2(t)+P(t))) - (X_1(t)+P(t))(1- (X_1(t)+P(t))) \right|}{\left[ (X_1(t)+P(t))(1- (X_1(t)+P(t))) +  {\delta} \right] \left[ (X_2(t)+P(t))(1- (X_2(t)+P(t))) +  {\delta} \right]}
\\
&\quad  + |V_2 C_{\delta}| \left| \overline S_{B,1,t}^{\alpha,\delta} - \overline S_{B,2,t}^{\alpha,\delta} \right|
\\
\le & 
	C_{{\delta},V_1} \left| (X_1(t)+X_2(t))-1 \right| \cdot |X_1(t) - X_2(t)| + |V_2 C_{\delta}| \left| \overline S_{B,1,t}^{\alpha,\delta} - \overline S_{B,2,t}^{\alpha,\delta} \right|
\\
\le& {C}_{ {\delta},V_1} |X_1(t) - X_2(t)| + C_{\delta,V_2} \left| \overline S_{B,1,t}^{\alpha,\delta} - \overline S_{B,2,t}^{\alpha,\delta} \right|
\\
\le& C_{\alpha,\delta ,\gamma_1,\gamma_2,B,T,V_1,V_2} \max_{s\in[0,t]} | \hat{S}_{1}(s) - \hat{S}_{2}(s) |.
\end{align*} 
 Moreover, we have
\begin{align*}
\left| r - \frac{1}{2}|\sigma^1(t)|^2 - \lambda\hat S_1(t) - r + \frac{1}{2}|\sigma^2(t)|^2 + \lambda\hat S_2(t) \right|
&\le \frac{1}{2} | \sigma^1(t) + \sigma^2(t) |\cdot | \sigma^1(t) - \sigma^2(t) | + \lambda |\hat S_1(t) - \hat S_2(t)|
\\
&\le  {C}_{\alpha,\delta ,\lambda,\gamma_1,\gamma_2,B,T,V_0,V_1,V_2} \max_{s\in[0,t]} | \hat{S}_{1}(s) - \hat{S}_{2}(s) |.
\end{align*} 
By the standard theories of stochastic differential equations with Lipschitz coefficients, we obtain the existence and uniqueness of the strong solution denoted by $(\overline S, X)$ to \eqref{S_new} - \eqref{X_new} associated to truncated coefficients $(\tilde R_B,\overline S^{\alpha,\delta}_B, R^+_B, R^-_B)$. Obviously, for all such $B> |\overline S(0)|$, the stopped joint process $(\overline S, X)(\cdot \wedge \tau_B)$ uniquely satisfies both \eqref{est-thm} and the stochastic differential equation \eqref{model_new} associated to the original coefficients $(\tilde R,\overline S^{\alpha,\delta}, R^+, R^-)$  on $[0,\tau_B]$. 

\noindent\textbf{Step 2.} It is evident that $\tau_B$ is a.s. increasing in $B$. We prove that for any $T>0$ there holds  $ \bP(\lim_{B\rightarrow \infty} \tau_B>T)=1$ through which letting $B\rightarrow \infty$ one may extend the unique solution from $[0,\tau_B]$ to any finite interval $[0,T]$ and obtain the estimate \eqref{est-thm} by the monotone convergence theorem.  
To the contrary, suppose that for some $T>0$ there exist $\eps\in(0,1)$ and an increasing sequence $\{B_i\}_{i\in\bN^+}$ with $B_1>\overline{S}(0)$ and $\lim_{i\to\infty} B_i = \infty$ such that
\begin{equation*}
\mathbb{P}(\lim_{i\to\infty}\tau_{B_i} \le T) > \eps.
\end{equation*}
Then, for each $N>0$, we must have
\begin{equation*}
\mathbb{P}\left(\lim_{i\to\infty} \sup_{0\le t\le \tau_{B_i}\land T} |\overline{S}(t)| > N\right) > \eps.
\end{equation*}
By Chebyshev's inequality, monotone convergence theorem and Lemma \ref{lemma_wp}, we have
\begin{align*}
\eps < &\text{ }\mathbb{P}\left(\lim_{i\to\infty} \sup_{0\le t\le \tau_{B_i}\land T} |\overline{S}(t)| > N\right)
\\
\le &\text{ }\frac{E\left[\lim_{i\to\infty} \sup_{0\le t\le \tau_{B_i}\land T} |\overline{S}(t)|^2\right]}{N^2}
\\
= &\text{ }\frac{\lim_{i\to\infty}E\left[ \sup_{0\le t\le \tau_{B_i}\land T} |\overline{S}(t)|^2\right]}{N^2}
\\
= &\text{ }\frac{\lim_{i\to\infty}E\left[ \sup_{0\le t\le T} |\overline{S}(t\land\tau_{B_i})|^2\right]}{N^2}
\\
\le &\text{ }\frac{C}{N^2}
\\
\to &\text{ }0, \text{ as } N\to\infty,
\end{align*}
where the constant $C>0$, by Lemma \ref{lemma_wp}, does not depend on $N$ or the choice of sequence $\{B_i\}_{i\in\bN^+}$. Letting $N$ tend to infinity incurs a contradiction. The proof is complete.
\end{proof}
\begin{rmk}
It is worthwhile to point out that although the Lipschitz constant of $X$ introduced by \eqref{X_Lipschitz} relies on the choice of $B$, the mean behaviour of the log-price path is bounded by some constant that is independent of the choice of $B$, as indicated by Lemma~\ref{lemma_wp} $(ii)$. This fact allows us to extend the well-posedness of the solution to \eqref{S_new} - \eqref{V_new} from a localized region $[0,\tau_B]$, with a fixed threshold $B$, to a global case $[0,T]$ for each $T>0$.
\end{rmk}

\subsection{Proof of Theorem \ref{thm_dpp}}
\begin{proof}
Denote the right hand side by $\overline{\mathcal{V}}(\tau_i,Q(\tau_i))$. 
First, when $i = M,M-1$, the relation \eqref{DPP} holds true. Indeed for any $q\in U$ with $q\le Q(\tau_M)$, we have
\begin{equation}\label{easy}
\mathcal{V}(\tau_M,Q(\tau_{M})) 
\ge E_{\sF^S_{\tau_{M}}} \Big[ q(K-S(\tau_{M}))^+ + e^{-\mu(T-\tau_{M})}G(S(T),Q(T)) \Big].
\end{equation}
Since set $U$ has finite elements, the maximum may be reached at some $U$-valued $\sF_{\tau_M}$-measurable $q^*$ with $q^*\le Q(\tau_M)$. Taking essential supremum w.r.t. $q$ on both hand sides of \eqref{easy} leads to
\begin{align}
\mathcal{V}(\tau_M,Q(\tau_{M})) 
&=
E_{\sF^S_{\tau_{M}}} \Big[ q^*(K-S(\tau_{M}))^+  + e^{-\mu(T-\tau_{M})}G(S(T),Q(T)) \Big]
\\
&= \esssup_{q\in U,q\le Q(\tau_M)}E_{\sF^S_{\tau_{M}}} \Big[ q(K-S(\tau_{M}))^+ + e^{-\mu(T-\tau_{M})}G(S(T),Q(T)) \Big]\label{equality}
\\
&= \overline{\mathcal{V}}(\tau_{M},Q(\tau_{M})),
\end{align}
where the equality in \eqref{equality} is achieved by measurable selection theorem. Thus for each $q\in U$ with $q\le Q(\tau_{M-1})$, there exists $q^*_M$ that is $U$-valued $\sF_{\tau_M}$-measurable such that $q^*_M\le Q(\tau_M)$ and
\begin{align*}
\mathcal{V}(\tau_M-1,Q(\tau_{M-1})) 
\ge &E_{\sF^S_{\tau_{M-1}}} \Big[ q(K-S(\tau_{M-1}))^+ +  e^{-\mu(\tau_{M}-\tau_{M-1})}q^*_M(K-S(\tau_{M}))^+
\\
&\text{ }\text{ }\text{ }\text{ }\text{ }\text{ }\text{ }\text{ }\text{ }\text{ }\text{ }\text{ }+ e^{-\mu(\tau_{M}-\tau_{M-1})}e^{-\mu(T-\tau_{M})}G(S(T),Q(\tau_{M})+ q^*_M) \Big]
\\
= &E_{\sF^S_{\tau_{M-1}}} \Big[ q(K-S(\tau_{M-1}))^+ + e^{-\mu(\tau_{M}-\tau_{M-1})}\mathcal{V}(\tau_{M},Q(\tau_{M})) \Big].
\end{align*}
Again since $U$ has finite elements, there exist some $U$-valued $\sF_{\tau_{M-1}}$-measurable $q^*$ such that $q^*\le Q(\tau_{M-1})$ and by taking essential supremums, we have
\begin{align}
\mathcal{V}(\tau_M-1,Q(\tau_{M-1})) 
= &E_{\sF^S_{\tau_{M-1}}} \Big[ q^*(K-S(\tau_{M-1}))^+ + e^{-\mu(\tau_{M}-\tau_{M-1})}\mathcal{V}(\tau_{M},Q(\tau_{M})) \Big]
\\
= &\esssup_{q\in U,q\le Q(\tau_{M-1})}E_{\sF^S_{\tau_{M-1}}} \Big[ q(K-S(\tau_{M-1}))^+ + e^{-\mu(\tau_{M}-\tau_{M-1})}\mathcal{V}(\tau_{M},Q(\tau_{M})) \Big] \label{measurable_M-1}
\\
= &\overline{\mathcal{V}}(\tau_{M-1},Q(\tau_{M-1})),
\end{align}
where the equality \eqref{measurable_M-1} is again achieved under measurable selection theorem. Then recursively, for any $i\in\{0,\cdots,M\}$, any $q\in U$ with $q\le Q(\tau_i)$ and each $j\in\{i+1,\cdots,M\}$, there exist $U$-valued $\sF_{\tau_j}$-measurable $q^*_j$ such that $q^*_j\le Q(\tau_j)$ and
\begin{align*}
\mathcal{V}(\tau_i,Q(\tau_i)) 
\ge &E_{\sF^S_{\tau_i}} \left[ q(K-S(\tau_i))^+  + \sum_{j=i+1}^{M} e^{-\mu(\tau_{j}-\tau_i)}q^*_{j}(K-S(\tau_{j}))^+ \right.
\\
&\text{ }\text{ }\text{ }\text{ }\text{ }\text{ }\text{ }\text{ }\left.+ e^{-\mu(\tau_{M}-\tau_i)}e^{-\mu(T-\tau_{M})}G\left(S(T),Q(\tau_{i+1})+\sum_{j=i+1}^M q^*_j\right) \right]
\\
= &E_{\sF^S_{\tau_i}} \left[ q(K-S(\tau_i))^+ + e^{-\mu(\tau_{i+1}-\tau_i)}\mathcal{V}(\tau_{i+1},Q(\tau_{i+1})) \right].
\end{align*}
Since $U$ has finite elements, there exist some $U$-valued $\sF_{\tau_{i}}$-measurable $q^*$ such that $q^*\le Q(\tau_{i})$ and by taking essential supremums, we have
\begin{align}
\mathcal{V}(\tau_i,Q(\tau_i)) 
= &E_{\sF^S_{\tau_i}} \left[ q^*(K-S(\tau_i))^+ + e^{-\mu(\tau_{i+1}-\tau_i)}\mathcal{V}(\tau_{i+1},Q(\tau_{i+1})) \right]
\\
= &\esssup_{q\in U,q\le Q(\tau_i)}E_{\sF^S_{\tau_i}} \left[ q(K-S(\tau_i))^+ + e^{-\mu(\tau_{i+1}-\tau_i)}\mathcal{V}(\tau_{i+1},Q(\tau_{i+1})) \right]\label{measurable_i}
\\
= &\overline{\mathcal{V}}(\tau_i,Q(\tau_i)),
\end{align}
where measurable selection theorem ensures the equality \eqref{measurable_i}. This finishes the proof. 
\end{proof}

\pagebreak
\section{Appendix B}

\begin{table}[!htbp]
\begin{center}
\fontsize{6}{8}\selectfont
\begin{tabular}{|cc|cc|cc|cc|cc|cc|}
\hline
\multicolumn{2}{|c|}{01/2019 - 10/2019}               & \multicolumn{2}{c|}{11/2019 - 03/2020}               & \multicolumn{2}{c|}{03/2020 - 12/2020}               & \multicolumn{2}{c|}{01/2021 - 06/2021}               & \multicolumn{2}{c|}{06/2021 - 02/2022}               & \multicolumn{2}{c|}{02/2022 - 12/2022}               \\ \hline
\multicolumn{1}{|c|}{$\vec\gamma_1$} & $\vec\gamma_2$ & \multicolumn{1}{c|}{$\vec\gamma_1$} & $\vec\gamma_2$ & \multicolumn{1}{c|}{$\vec\gamma_1$} & $\vec\gamma_2$ & \multicolumn{1}{c|}{$\vec\gamma_1$} & $\vec\gamma_2$ & \multicolumn{1}{c|}{$\vec\gamma_1$} & $\vec\gamma_2$ & \multicolumn{1}{c|}{$\vec\gamma_1$} & $\vec\gamma_2$ \\ \hline
\multicolumn{1}{|c|}{-4.2495}        & 31.7320        & \multicolumn{1}{c|}{20.7150}        & 21.5680        & \multicolumn{1}{c|}{15.6280}        & 20.8810        & \multicolumn{1}{c|}{20.9070}        & 20.6320        & \multicolumn{1}{c|}{19.9690}        & 19.6080        & \multicolumn{1}{c|}{16.3130}        & 20.9270        \\ \hline
\multicolumn{1}{|c|}{-10.7360}       & 20.9280        & \multicolumn{1}{c|}{18.8930}        & 21.2050        & \multicolumn{1}{c|}{18.8600}        & -77.8260       & \multicolumn{1}{c|}{18.4710}        & 19.7990        & \multicolumn{1}{c|}{20.2110}        & 15.8310        & \multicolumn{1}{c|}{-6.2300}        & 20.6460        \\ \hline
\multicolumn{1}{|c|}{-2.9971}        & 21.3150        & \multicolumn{1}{c|}{15.7520}        & 268.8400       & \multicolumn{1}{c|}{15.2010}        & 15.3130        & \multicolumn{1}{c|}{-111.6000}      & 20.4740        & \multicolumn{1}{c|}{27.7220}        & 130.2300       & \multicolumn{1}{c|}{1.9062}         & 26.6100        \\ \hline
\multicolumn{1}{|c|}{-0.1601}        & 25.5400        & \multicolumn{1}{c|}{12.6940}        & 20.9760        & \multicolumn{1}{c|}{10.1770}        & 23.6480        & \multicolumn{1}{c|}{20.6760}        & 19.4480        & \multicolumn{1}{c|}{23.4750}        & 19.2540        & \multicolumn{1}{c|}{3.9236}         & 23.4890        \\ \hline
\multicolumn{1}{|c|}{8.8117}         & 16.6190        & \multicolumn{1}{c|}{-263.3300}      & 20.2650        & \multicolumn{1}{c|}{30.4140}        & 26.5670        & \multicolumn{1}{c|}{20.4800}        & 15.1830        & \multicolumn{1}{c|}{23.7360}        & 19.3920        & \multicolumn{1}{c|}{17.1000}        & 23.2990        \\ \hline
\multicolumn{1}{|c|}{20.8840}        & -0.5193        & \multicolumn{1}{c|}{19.5870}        & 26.6720        & \multicolumn{1}{c|}{19.7120}        & 25.2260        & \multicolumn{1}{c|}{19.1430}        & 12.2350        & \multicolumn{1}{c|}{21.7210}        & 23.5780        & \multicolumn{1}{c|}{11.8900}        & 24.0980        \\ \hline
\multicolumn{1}{|c|}{17.3950}        & 1.5014         & \multicolumn{1}{c|}{-0.7404}        & 20.1410        & \multicolumn{1}{c|}{25.2020}        & 18.9310        & \multicolumn{1}{c|}{19.3260}        & 14.3120        & \multicolumn{1}{c|}{21.2520}        & 21.1270        & \multicolumn{1}{c|}{16.0990}        & 15.9910        \\ \hline
\multicolumn{1}{|c|}{11.9020}        & 4.6034         & \multicolumn{1}{c|}{-4.1755}        & 21.5140        & \multicolumn{1}{c|}{26.5590}        & -4.5096        & \multicolumn{1}{c|}{23.2480}        & 19.9810        & \multicolumn{1}{c|}{21.3090}        & 20.4390        & \multicolumn{1}{c|}{20.7090}        & -1.1195        \\ \hline
\multicolumn{1}{|c|}{1.5376}         & 4.0445         & \multicolumn{1}{c|}{-20.2900}       & 20.7530        & \multicolumn{1}{c|}{18.1340}        & 1.5650         & \multicolumn{1}{c|}{27.1530}        & 18.5710        & \multicolumn{1}{c|}{21.7060}        & 18.3190        & \multicolumn{1}{c|}{22.6760}        & 4.2732         \\ \hline
\multicolumn{1}{|c|}{4.8134}         & 19.5610        & \multicolumn{1}{c|}{}               &                & \multicolumn{1}{c|}{24.9480}        & 15.2760        & \multicolumn{1}{c|}{26.4370}        & 20.2510        & \multicolumn{1}{c|}{21.8460}        & 9.4550         & \multicolumn{1}{c|}{20.7610}        & 15.5140        \\ \hline
\multicolumn{1}{|c|}{23.4300}        & 2.1935         & \multicolumn{1}{c|}{}               &                & \multicolumn{1}{c|}{16.8490}        & 28.9340        & \multicolumn{1}{c|}{22.7680}        & 20.3120        & \multicolumn{1}{c|}{21.7480}        & -88.1770       & \multicolumn{1}{c|}{19.9600}        & -1.4674        \\ \hline
\multicolumn{1}{|c|}{20.3570}        & -1.5164        & \multicolumn{1}{c|}{}               &                & \multicolumn{1}{c|}{7.1234}         & 19.8380        & \multicolumn{1}{c|}{23.5280}        & 19.8620        & \multicolumn{1}{c|}{22.6720}        & 18.2720        & \multicolumn{1}{c|}{25.5680}        & 1.7704         \\ \hline
\multicolumn{1}{|c|}{10.3870}        & 8.6681         & \multicolumn{1}{c|}{}               &                & \multicolumn{1}{c|}{0.0276}         & 20.1730        & \multicolumn{1}{c|}{21.0110}        & 21.2860        & \multicolumn{1}{c|}{21.3800}        & 20.8700        & \multicolumn{1}{c|}{2.0777}         & 4.0722         \\ \hline
\multicolumn{1}{|c|}{1.0714}         & 3.5023         & \multicolumn{1}{c|}{}               &                & \multicolumn{1}{c|}{13.9190}        & 22.1140        & \multicolumn{1}{c|}{20.2640}        & 19.2370        & \multicolumn{1}{c|}{21.2880}        & 20.0720        & \multicolumn{1}{c|}{-0.2361}        & 22.1470        \\ \hline
\multicolumn{1}{|c|}{6.8071}         & 6.7794         & \multicolumn{1}{c|}{}               &                & \multicolumn{1}{c|}{20.1250}        & 3.2929         & \multicolumn{1}{c|}{}               &                & \multicolumn{1}{c|}{17.3520}        & 22.2240        & \multicolumn{1}{c|}{21.3850}        & 1.5424         \\ \hline
\multicolumn{1}{|c|}{15.9020}        & 10.1810        & \multicolumn{1}{c|}{}               &                & \multicolumn{1}{c|}{18.6590}        & -1.1758        & \multicolumn{1}{c|}{}               &                & \multicolumn{1}{c|}{-198.7900}      & 21.8480        & \multicolumn{1}{c|}{21.0160}        & 1.3097         \\ \hline
\multicolumn{1}{|c|}{25.3100}        & -0.3850        & \multicolumn{1}{c|}{}               &                & \multicolumn{1}{c|}{29.1280}        & 2.6777         & \multicolumn{1}{c|}{}               &                & \multicolumn{1}{c|}{15.9400}        & 19.9870        & \multicolumn{1}{c|}{5.1443}         & 23.5100        \\ \hline
\multicolumn{1}{|c|}{21.0790}        & 1.4863         & \multicolumn{1}{c|}{}               &                & \multicolumn{1}{c|}{27.5180}        & 11.7750        & \multicolumn{1}{c|}{}               &                & \multicolumn{1}{c|}{12.6180}        & 22.5630        & \multicolumn{1}{c|}{-2.6051}        & 19.8450        \\ \hline
\multicolumn{1}{|c|}{16.4650}        & 3.6534         & \multicolumn{1}{c|}{}               &                & \multicolumn{1}{c|}{24.6790}        & 2.1986         & \multicolumn{1}{c|}{}               &                & \multicolumn{1}{c|}{}               &                & \multicolumn{1}{c|}{-1.6737}        & 22.3880        \\ \hline
\multicolumn{1}{|c|}{-1.1339}        & 21.7600        & \multicolumn{1}{c|}{}               &                & \multicolumn{1}{c|}{25.5440}        & 7.0787         & \multicolumn{1}{c|}{}               &                & \multicolumn{1}{c|}{}               &                & \multicolumn{1}{c|}{12.8060}        & 22.5110        \\ \hline
\multicolumn{1}{|c|}{3.1252}         & 18.9210        & \multicolumn{1}{c|}{}               &                & \multicolumn{1}{c|}{3.3441}         & 3.4713         & \multicolumn{1}{c|}{}               &                & \multicolumn{1}{c|}{}               &                & \multicolumn{1}{c|}{}               &                \\ \hline
\multicolumn{1}{|c|}{}               &                & \multicolumn{1}{c|}{}               &                & \multicolumn{1}{c|}{21.2200}        & 19.3200        & \multicolumn{1}{c|}{}               &                & \multicolumn{1}{c|}{}               &                & \multicolumn{1}{c|}{}               &                \\ \hline
\multicolumn{1}{|c|}{}               &                & \multicolumn{1}{c|}{}               &                & \multicolumn{1}{c|}{19.9840}        & 20.1230        & \multicolumn{1}{c|}{}               &                & \multicolumn{1}{c|}{}               &                & \multicolumn{1}{c|}{}               &                \\ \hline
\end{tabular}
\end{center}
\caption{CBO results for storage calibration($\mathcal{T}=14$)} \label{st_14}
\end{table}

\begin{table}[!htbp]
\begin{center}
\fontsize{6}{8}\selectfont
\begin{tabular}{|cc|cc|cc|cc|cc|cc|}
\hline
\multicolumn{2}{|c|}{01/2019 - 10/2019}               & \multicolumn{2}{c|}{11/2019 - 03/2020}               & \multicolumn{2}{c|}{03/2020 - 12/2020}               & \multicolumn{2}{c|}{01/2021 - 06/2021}               & \multicolumn{2}{c|}{06/2021 - 02/2022}               & \multicolumn{2}{c|}{02/2022 - 12/2022}               \\ \hline
\multicolumn{1}{|c|}{$\vec\gamma_1$} & $\vec\gamma_2$ & \multicolumn{1}{c|}{$\vec\gamma_1$} & $\vec\gamma_2$ & \multicolumn{1}{c|}{$\vec\gamma_1$} & $\vec\gamma_2$ & \multicolumn{1}{c|}{$\vec\gamma_1$} & $\vec\gamma_2$ & \multicolumn{1}{c|}{$\vec\gamma_1$} & $\vec\gamma_2$ & \multicolumn{1}{c|}{$\vec\gamma_1$} & $\vec\gamma_2$ \\ \hline
\multicolumn{1}{|c|}{-8.6233}        & 35.2140        & \multicolumn{1}{c|}{11.5500}        & 32.3600        & \multicolumn{1}{c|}{23.4110}        & -16.0250       & \multicolumn{1}{c|}{18.8140}        & 20.9910        & \multicolumn{1}{c|}{21.1570}        & 18.6690        & \multicolumn{1}{c|}{-3.1088}        & 19.6630        \\ \hline
\multicolumn{1}{|c|}{-0.0311}        & 19.4360        & \multicolumn{1}{c|}{1.1846}         & 28.0060        & \multicolumn{1}{c|}{13.1190}        & 25.6320        & \multicolumn{1}{c|}{-126.8500}      & 20.3600        & \multicolumn{1}{c|}{25.2120}        & 123.8500       & \multicolumn{1}{c|}{4.1150}         & 26.2770        \\ \hline
\multicolumn{1}{|c|}{22.9270}        & -2.2590        & \multicolumn{1}{c|}{-279.7700}      & 21.5660        & \multicolumn{1}{c|}{29.3820}        & 23.5530        & \multicolumn{1}{c|}{19.7340}        & 13.1320        & \multicolumn{1}{c|}{21.4340}        & 19.2540        & \multicolumn{1}{c|}{12.0770}        & 22.1940        \\ \hline
\multicolumn{1}{|c|}{8.5972}         & 4.4128         & \multicolumn{1}{c|}{3.2502}         & 20.2980        & \multicolumn{1}{c|}{22.7660}        & -2.5110        & \multicolumn{1}{c|}{22.9030}        & 16.9040        & \multicolumn{1}{c|}{20.4500}        & 20.9090        & \multicolumn{1}{c|}{20.3730}        & 2.1452         \\ \hline
\multicolumn{1}{|c|}{-0.2189}        & -0.5735        & \multicolumn{1}{c|}{18.1130}        & 21.9360        & \multicolumn{1}{c|}{21.1830}        & 6.1780         & \multicolumn{1}{c|}{48.2680}        & 21.1660        & \multicolumn{1}{c|}{20.7430}        & -39.4780       & \multicolumn{1}{c|}{16.1730}        & 2.7387         \\ \hline
\multicolumn{1}{|c|}{2.2780}         & 0.1127         & \multicolumn{1}{c|}{}               &                & \multicolumn{1}{c|}{1.2317}         & 20.7470        & \multicolumn{1}{c|}{22.3340}        & 19.8730        & \multicolumn{1}{c|}{23.2830}        & 17.4280        & \multicolumn{1}{c|}{0.0867}         & 0.0685         \\ \hline
\multicolumn{1}{|c|}{2.5506}         & 3.8125         & \multicolumn{1}{c|}{}               &                & \multicolumn{1}{c|}{13.4640}        & 4.0841         & \multicolumn{1}{c|}{18.9210}        & 20.0050        & \multicolumn{1}{c|}{14.3740}        & 21.5680        & \multicolumn{1}{c|}{2.3317}         & -0.9659        \\ \hline
\multicolumn{1}{|c|}{2.3516}         & 0.1747         & \multicolumn{1}{c|}{}               &                & \multicolumn{1}{c|}{24.3570}        & -0.8005        & \multicolumn{1}{c|}{}               &                & \multicolumn{1}{c|}{-94.7270}       & 26.7850        & \multicolumn{1}{c|}{5.0555}         & 0.7944         \\ \hline
\multicolumn{1}{|c|}{4.5296}         & 0.3976         & \multicolumn{1}{c|}{}               &                & \multicolumn{1}{c|}{47.6550}        & 7.1555         & \multicolumn{1}{c|}{}               &                & \multicolumn{1}{c|}{21.4110}        & 20.6590        & \multicolumn{1}{c|}{-3.1753}        & 21.5150        \\ \hline
\multicolumn{1}{|c|}{0.5081}         & 23.6580        & \multicolumn{1}{c|}{}               &                & \multicolumn{1}{c|}{20.0320}        & 5.1180         & \multicolumn{1}{c|}{}               &                & \multicolumn{1}{c|}{}               &                & \multicolumn{1}{c|}{19.2140}        & 22.2210        \\ \hline
\multicolumn{1}{|c|}{}               &                & \multicolumn{1}{c|}{}               &                & \multicolumn{1}{c|}{18.1560}        & 21.7750        & \multicolumn{1}{c|}{}               &                & \multicolumn{1}{c|}{}               &                & \multicolumn{1}{c|}{}               &                \\ \hline
\end{tabular}
\end{center}
\caption{CBO results for storage calibration($\mathcal{T}=30$)} \label{st_30}
\end{table}

\pagebreak
\begin{figure}[!htbp]
\begin{center}
\includegraphics[width=0.49\linewidth]{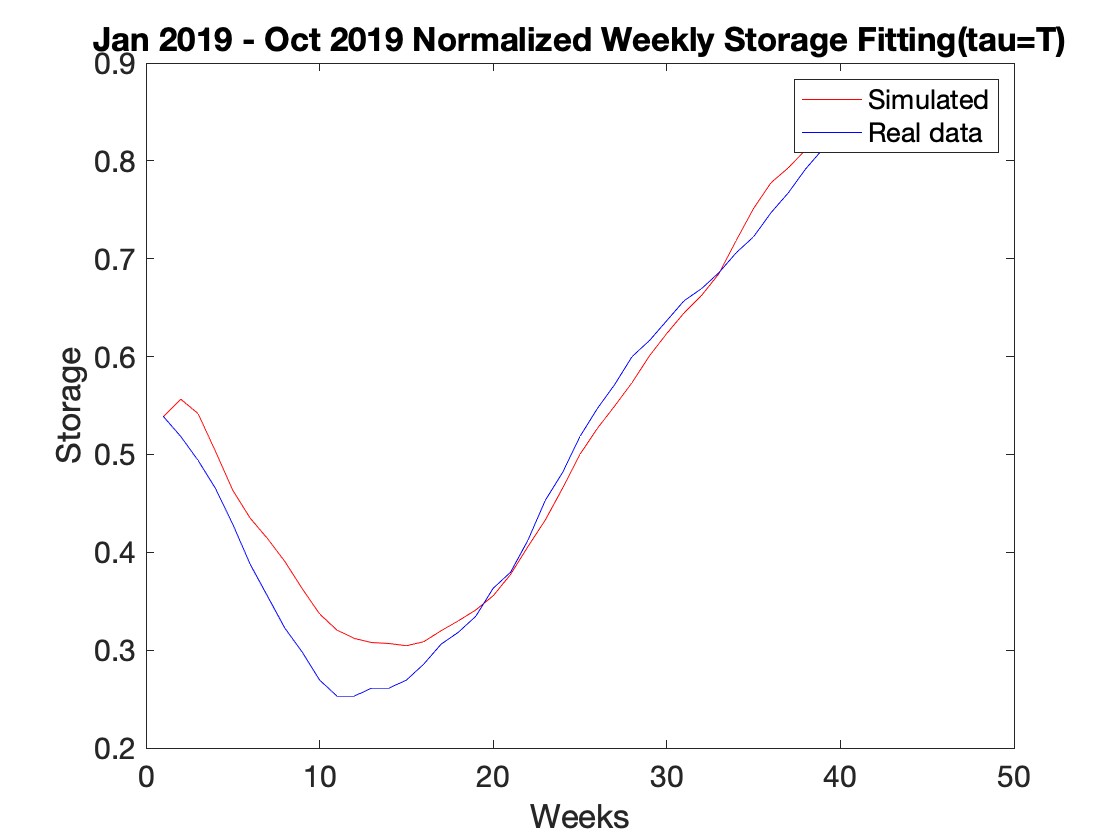}
\includegraphics[width=0.49\linewidth]{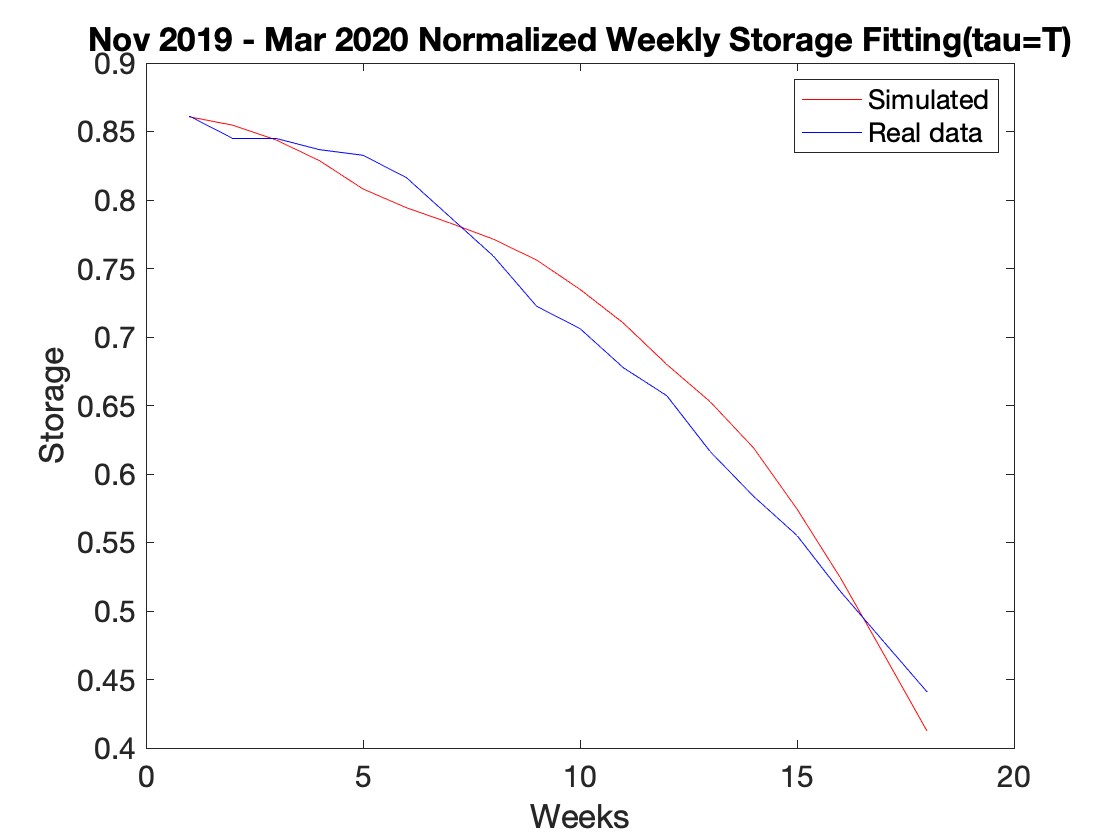}
\includegraphics[width=0.49\linewidth]{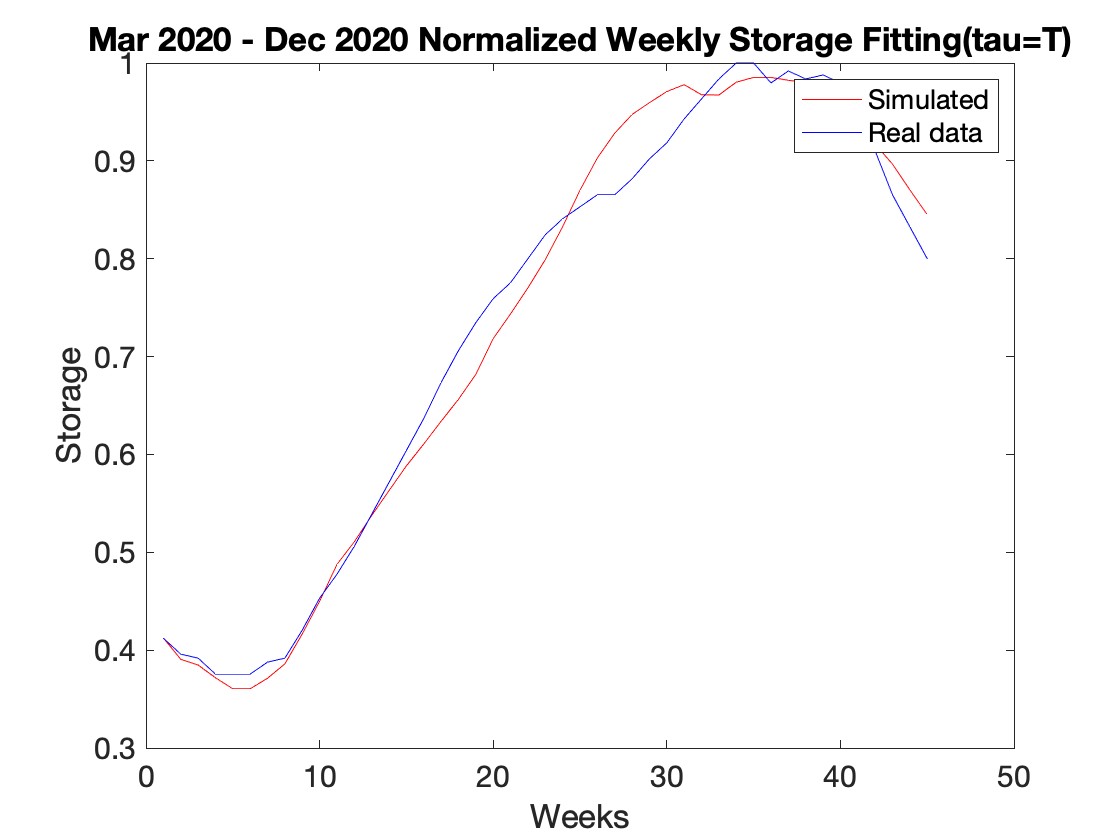}
\includegraphics[width=0.49\linewidth]{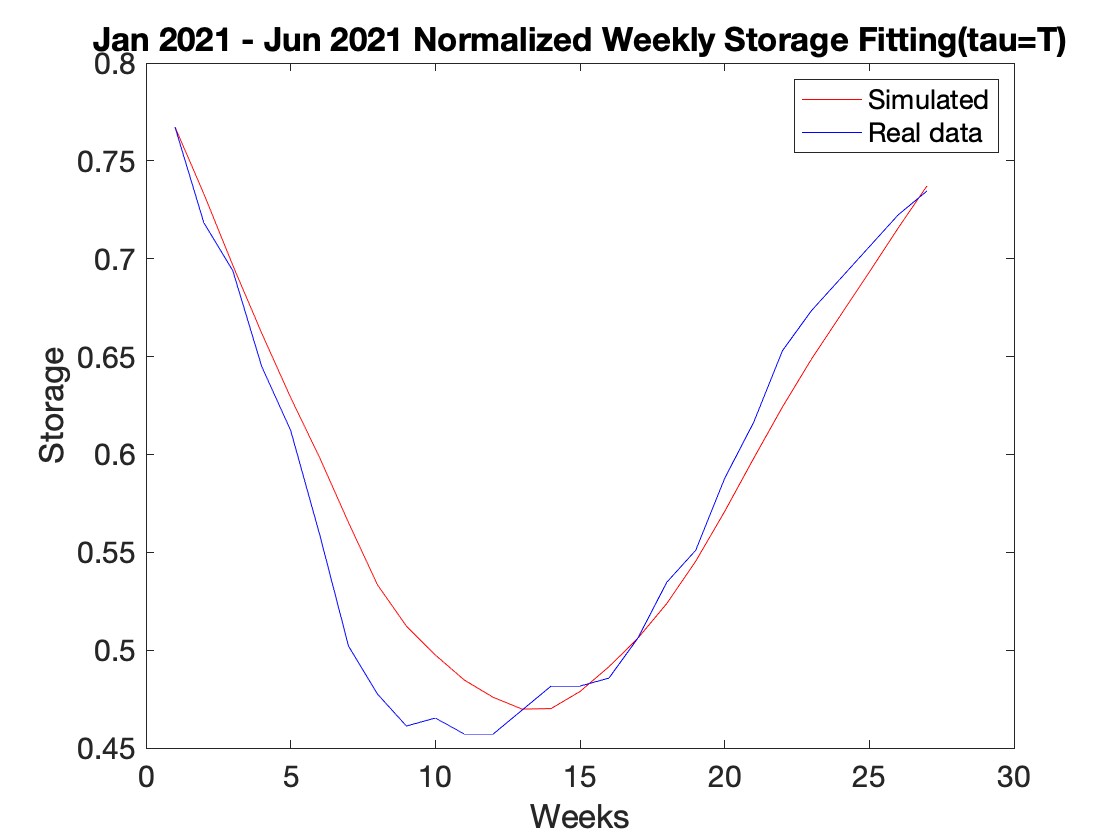}
\includegraphics[width=0.49\linewidth]{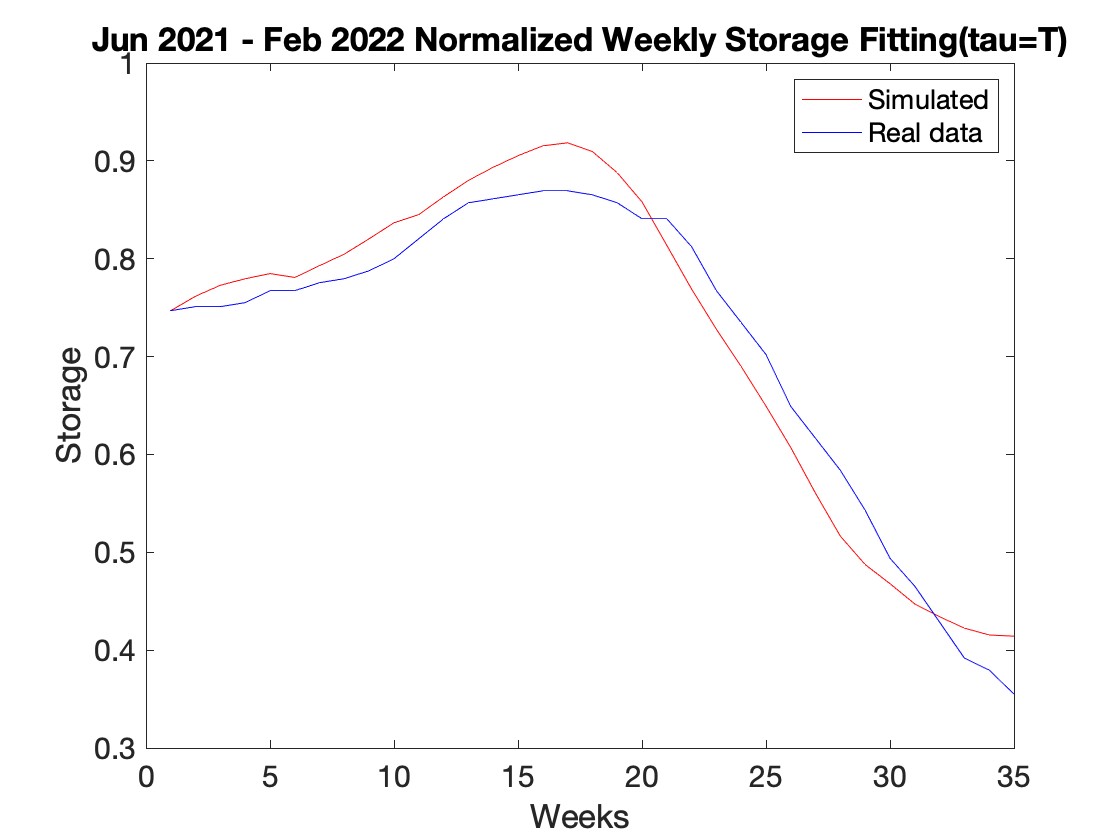}
\includegraphics[width=0.49\linewidth]{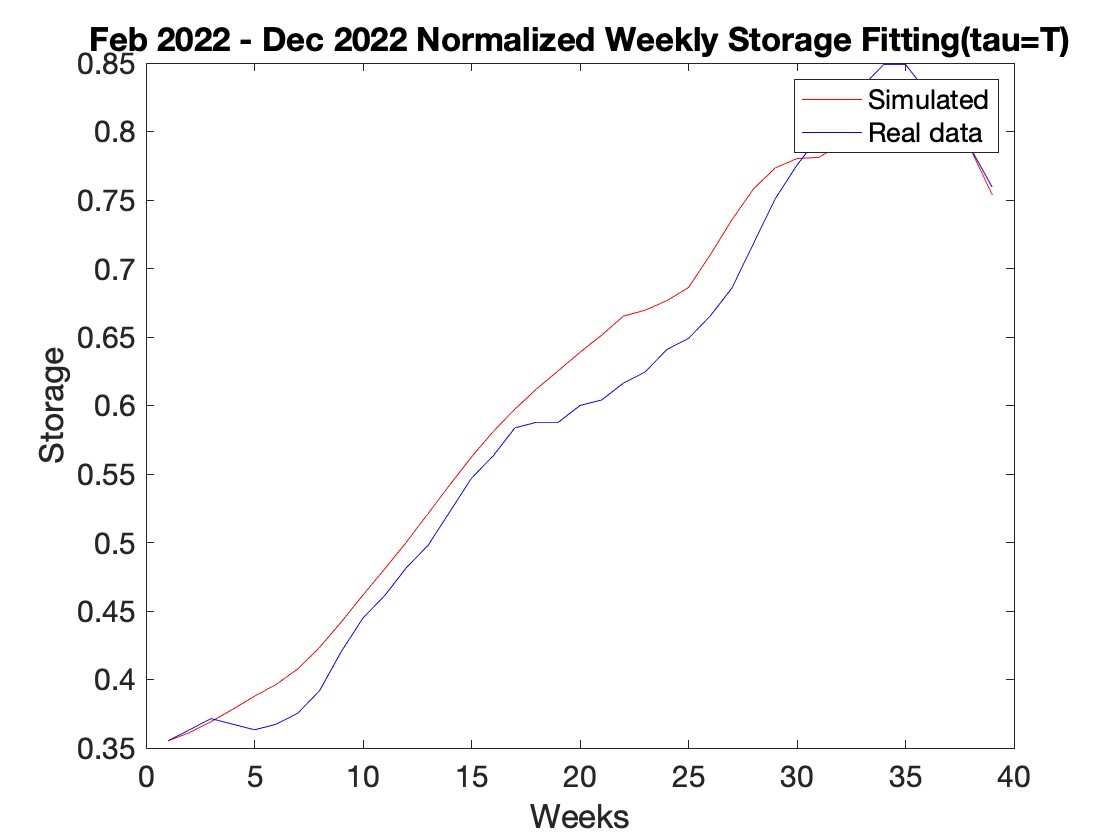}
\end{center}
\caption{Piecewise fit of normalized storage with $\mathcal{T}=T$}
\label{X_fit}
\end{figure}

\pagebreak

\begin{figure}[!htbp]
\begin{center}
\includegraphics[width=0.49\linewidth]{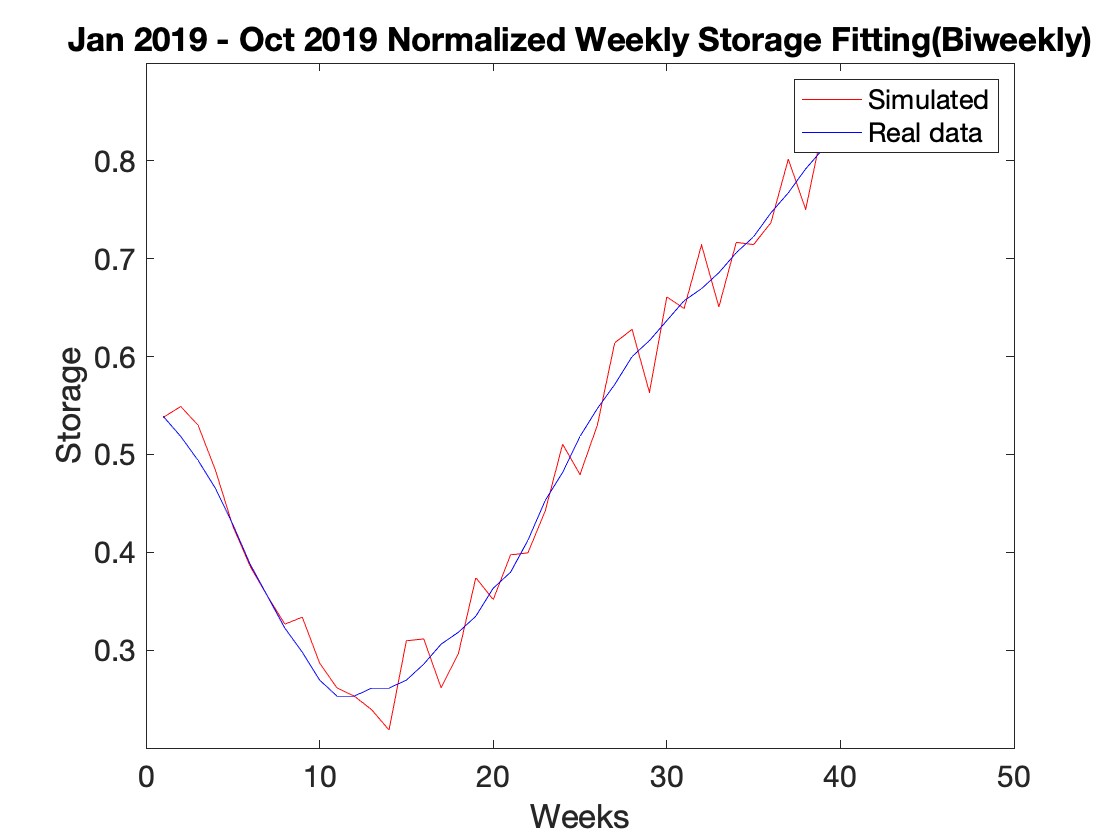}
\includegraphics[width=0.49\linewidth]{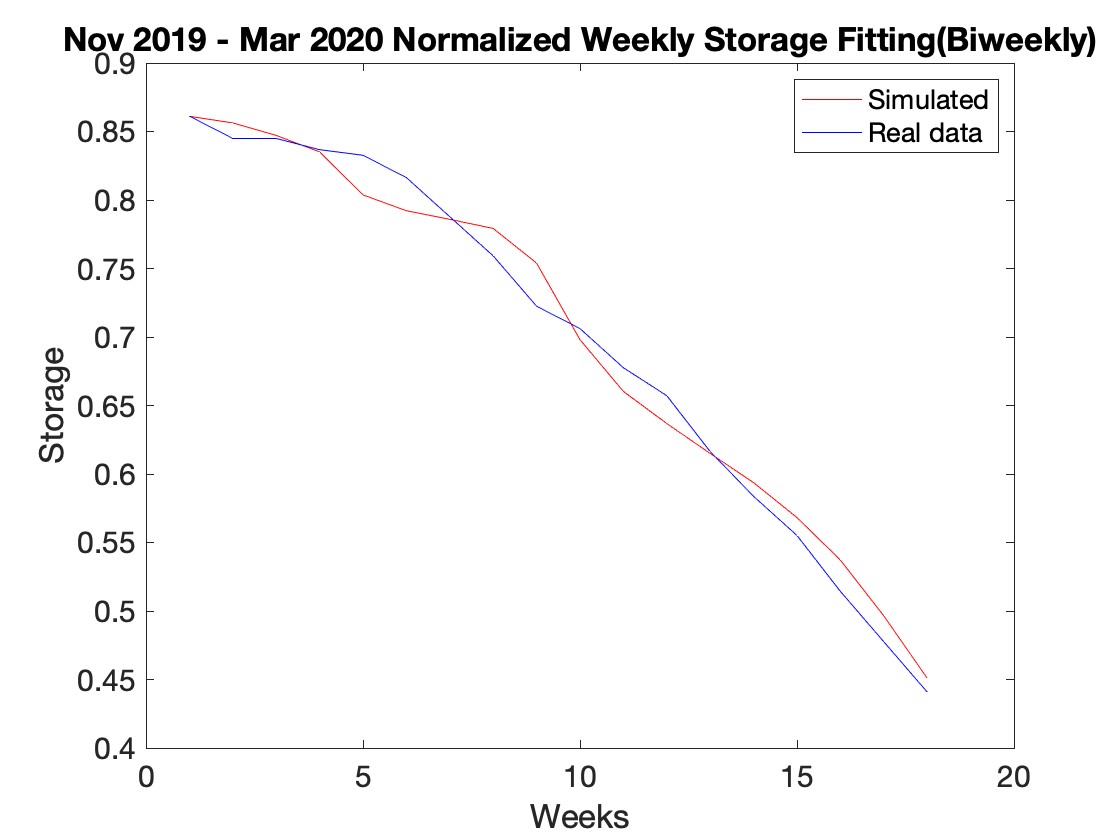}
\includegraphics[width=0.49\linewidth]{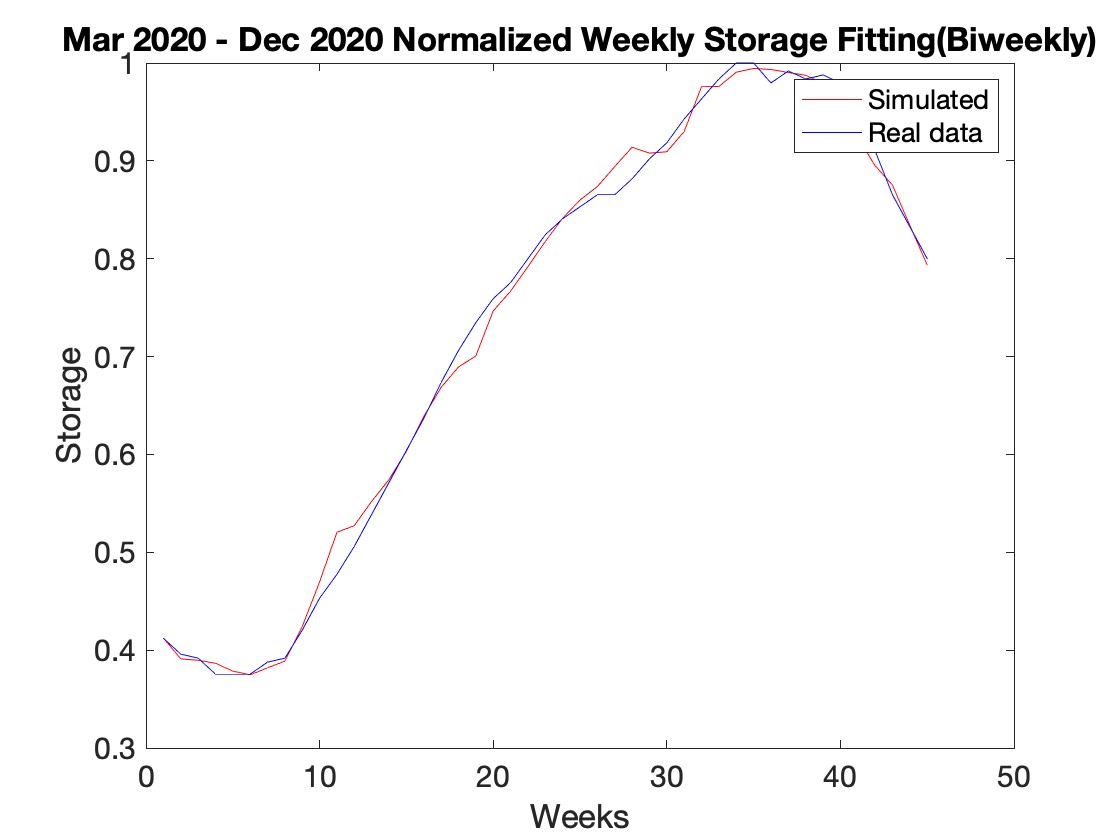}
\includegraphics[width=0.49\linewidth]{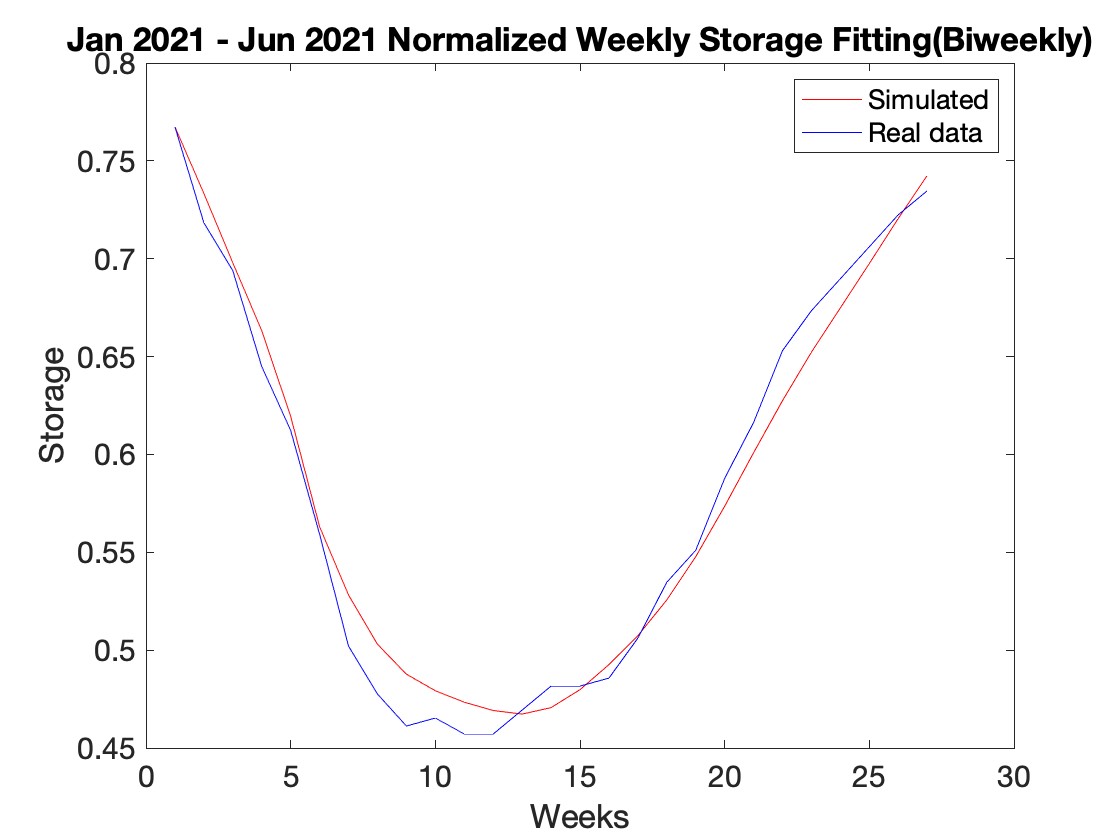}
\includegraphics[width=0.49\linewidth]{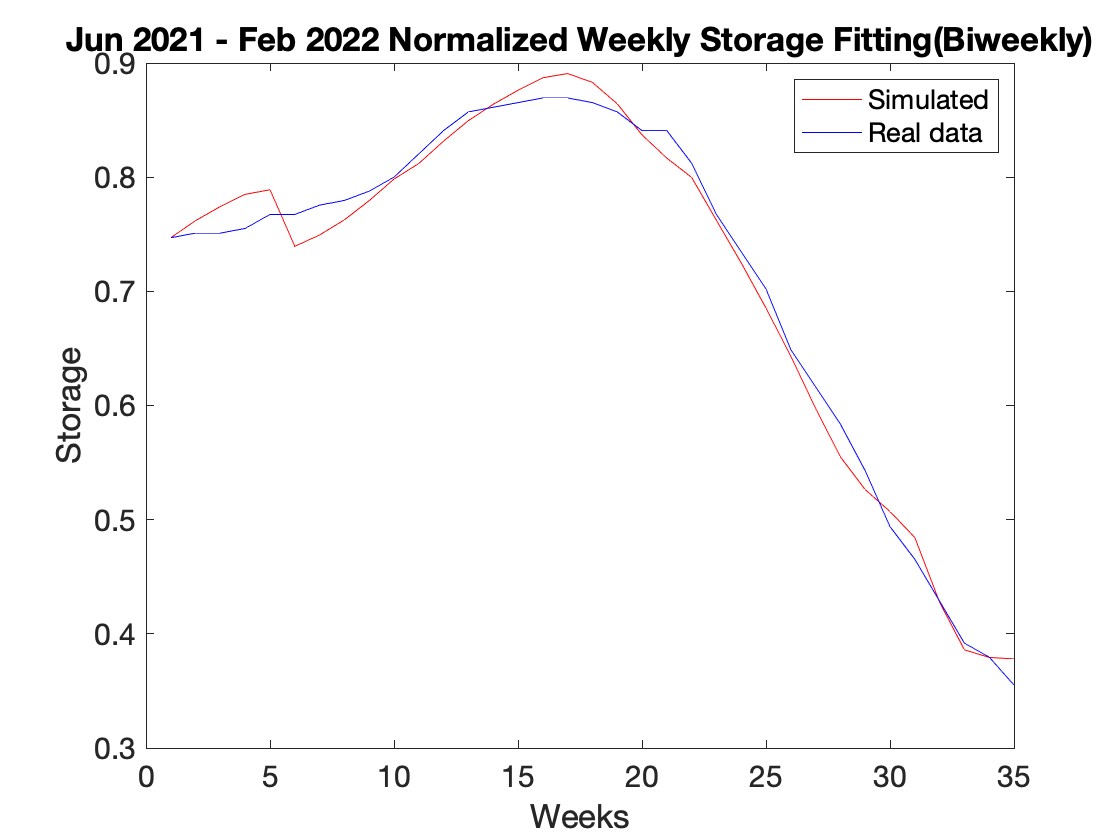}
\includegraphics[width=0.49\linewidth]{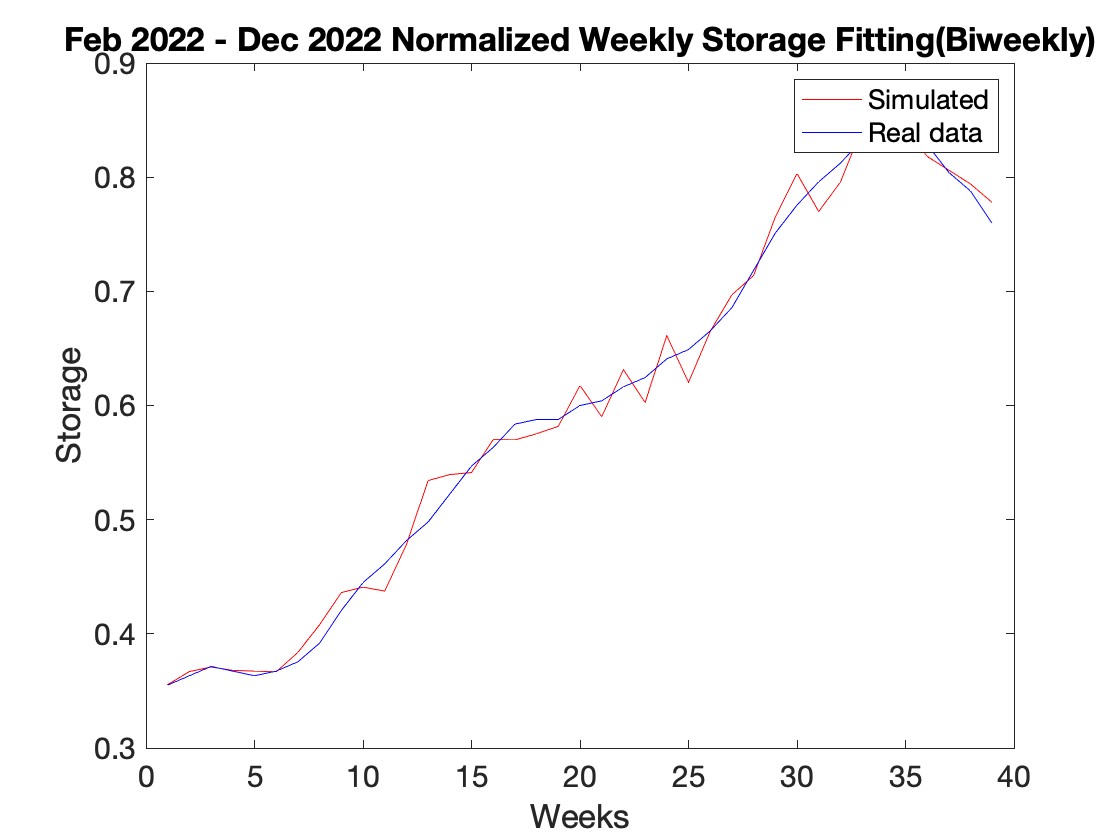}
\end{center}
\caption{Piecewise fit of normalized storage with $\mathcal{T}=14$}
\label{X_fit_14}
\end{figure}

\pagebreak

\begin{figure}[!htbp]
\begin{center}
\includegraphics[width=0.49\linewidth]{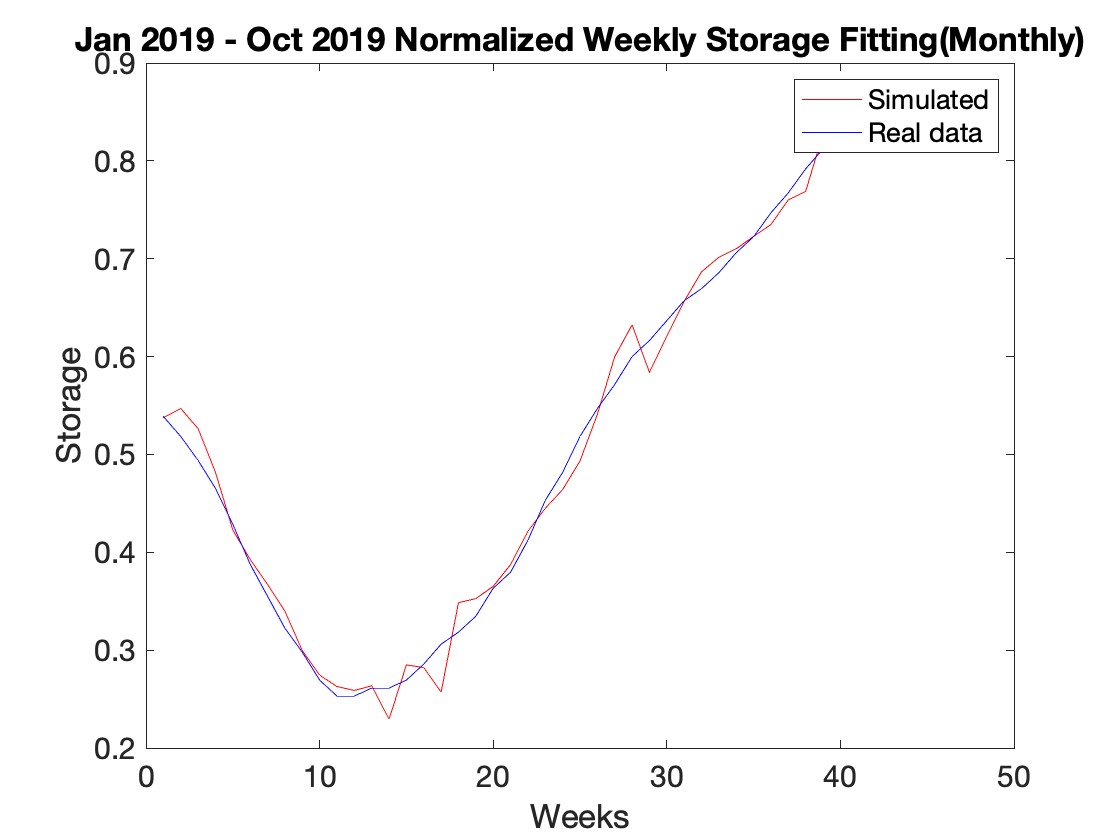}
\includegraphics[width=0.49\linewidth]{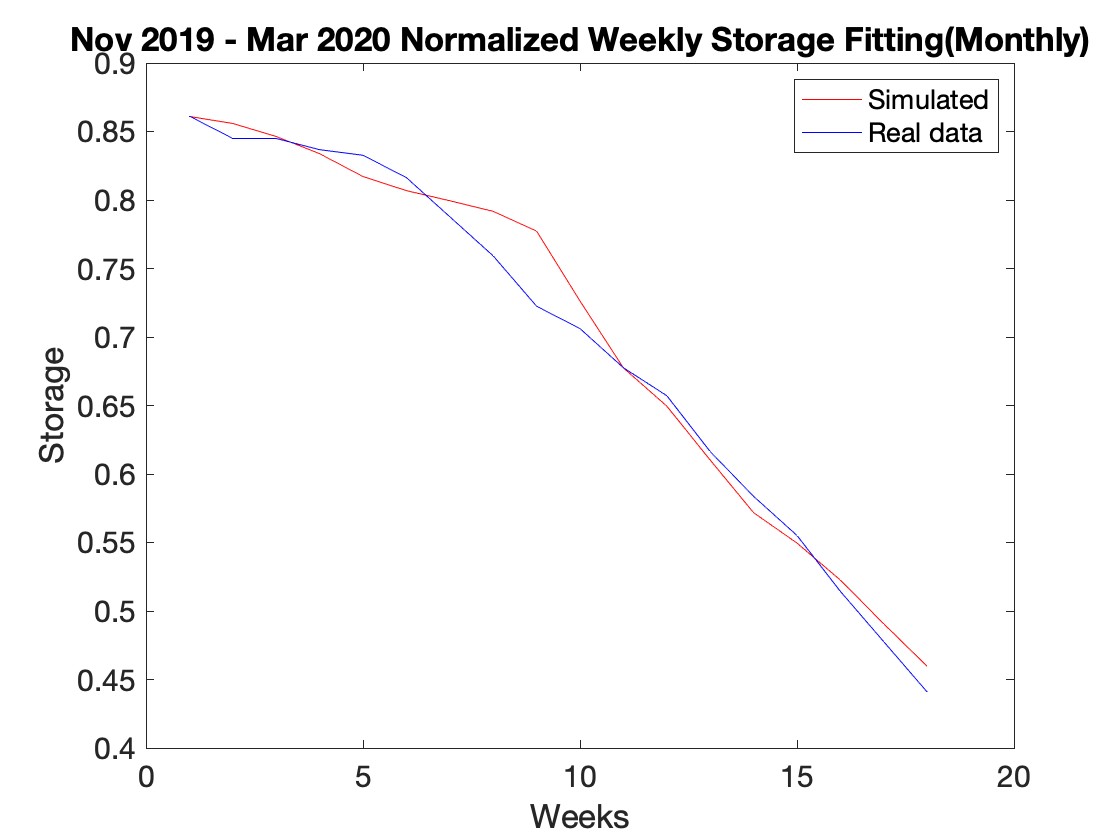}
\includegraphics[width=0.49\linewidth]{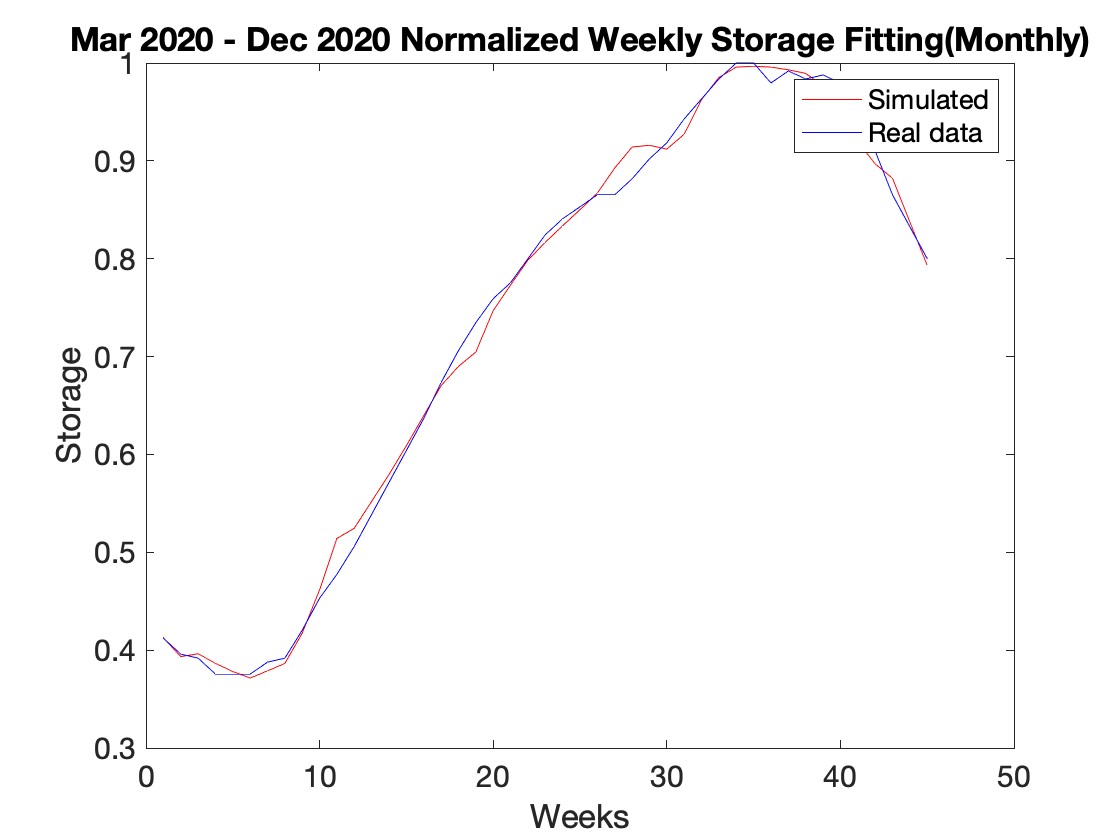}
\includegraphics[width=0.49\linewidth]{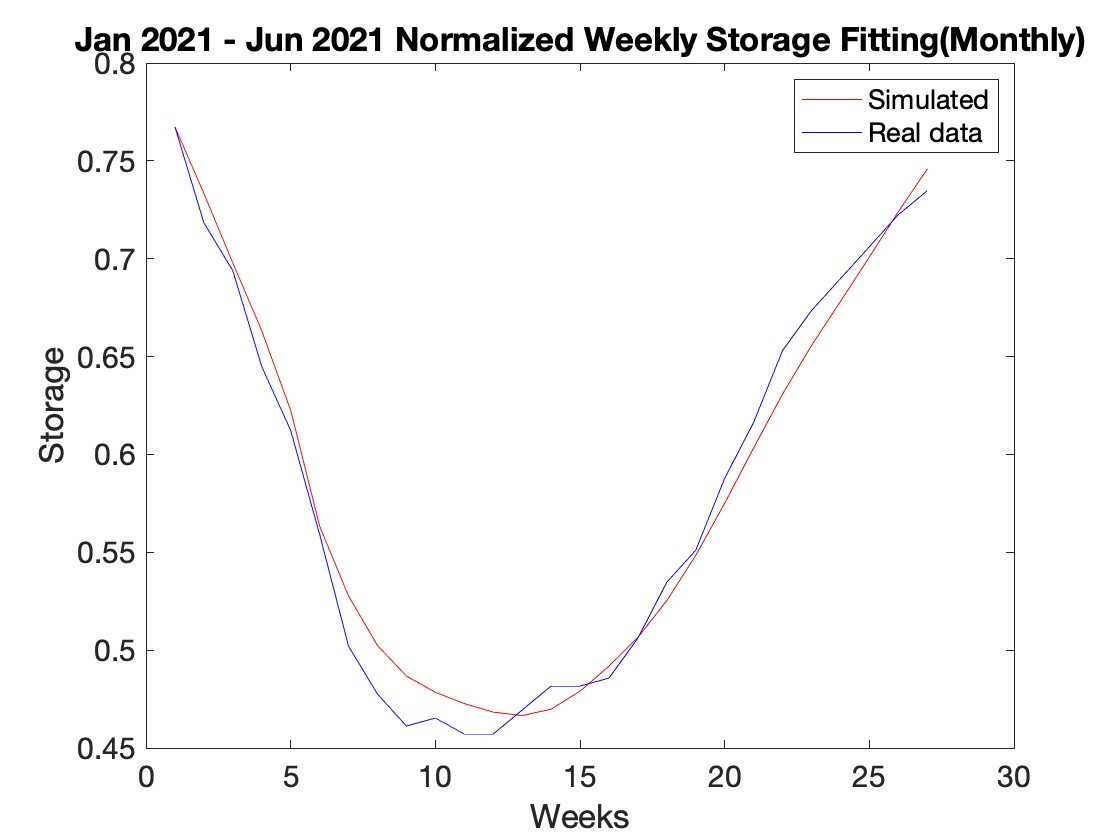}
\includegraphics[width=0.49\linewidth]{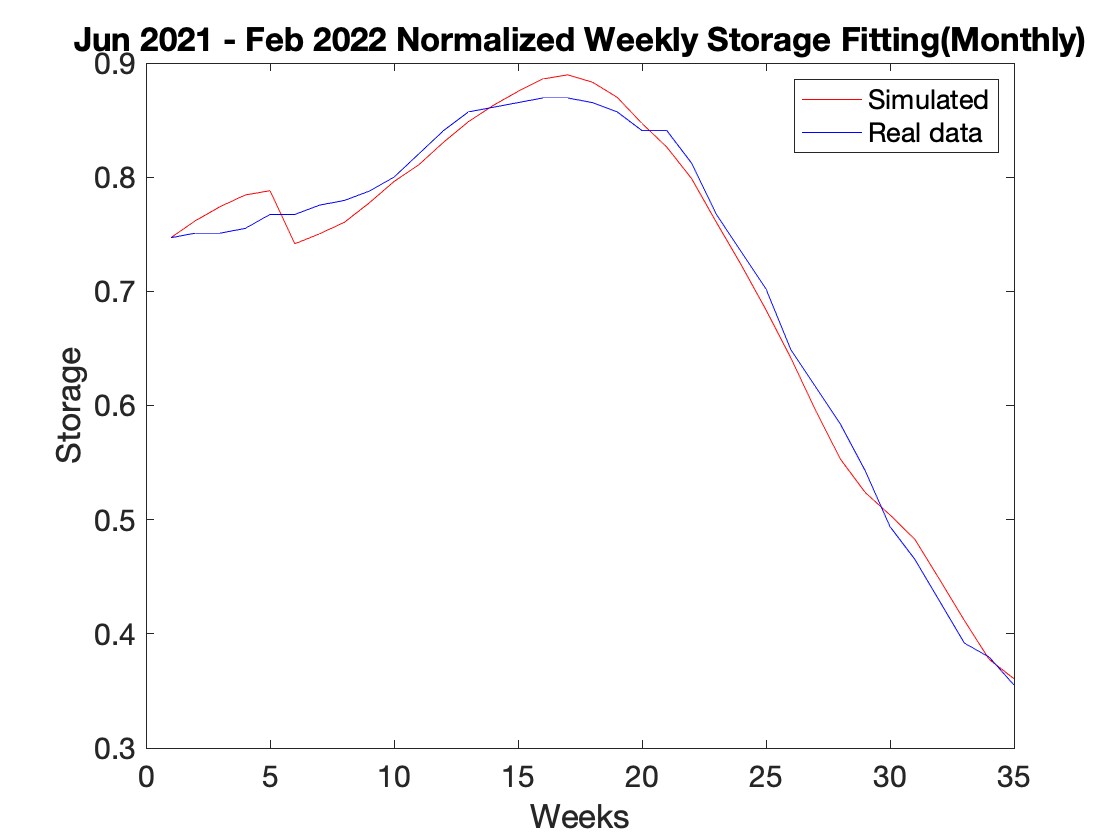}
\includegraphics[width=0.49\linewidth]{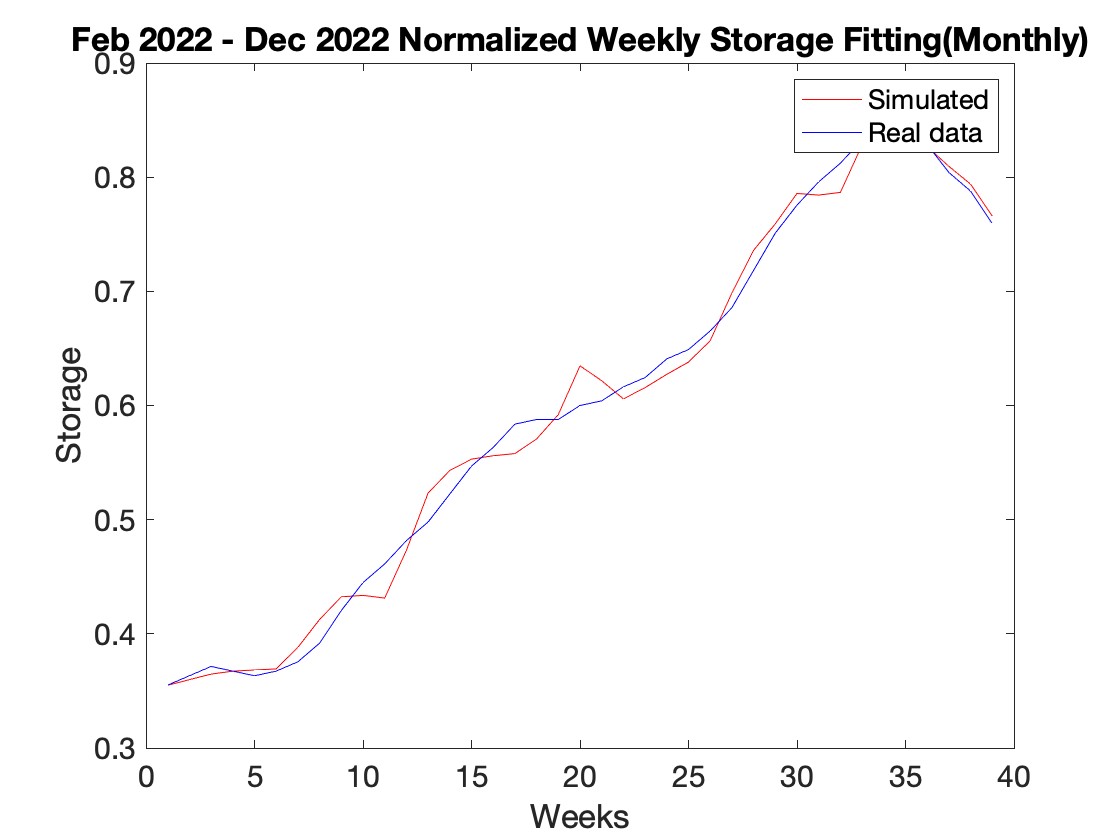}
\end{center}
\caption{Piecewise fit of normalized storage with $\mathcal{T}=30$}
\label{X_fit_30}
\end{figure}
 


\pagebreak

\bibliographystyle{siam}
\bibliography{ref_Yang}

\begin{thebibliography}{10}

\bibitem{abi2019affine}
{\sc E.~Abi~Jaber, M.~Larsson, and S.~Pulido}, {\em Affine volterra processes}, Ann. Appl. Probab., 29 (2019), pp.~3155--3200.

\bibitem{bardou2009optimal}
{\sc O.~Bardou, S.~Bouthemy, and G.~Pag{\`e}s}, {\em Optimal quantization for the pricing of swing options}, Applied Mathematical Finance, 16 (2009), pp.~183--217.

\bibitem{barrera2006numerical}
{\sc C.~Barrera-Esteve, F.~Bergeret, C.~Dossal, E.~Gobet, A.~Meziou, R.~Munos, and D.~Reboul-Salze}, {\em Numerical methods for the pricing of swing options: a stochastic control approach}, Methodology and computing in applied probability, 8 (2006), pp.~517--540.

\bibitem{basei2014optimal}
{\sc M.~Basei, A.~Cesaroni, and T.~Vargiolu}, {\em Optimal exercise of swing contracts in energy markets: an integral constrained stochastic optimal control problem}, SIAM Journal on Financial Mathematics, 5 (2014), pp.~581--608.

\bibitem{bayer2016pricing}
{\sc C.~Bayer, P.~Friz, and J.~Gatheral}, {\em Pricing under rough volatility}, Quantitative Finance, 16 (2016), pp.~887--904.

\bibitem{bayer2022pricing}
{\sc C.~Bayer, J.~Qiu, and Y.~Yao}, {\em Pricing options under rough volatility with backward spdes}, SIAM Journal on Financial Mathematics, 13 (2022), pp.~179--212.

\bibitem{bergomi2005smile}
{\sc L.~Bergomi}, {\em Smile dynamics ii}, Available at SSRN 1493302,  (2005).

\bibitem{borovkova2006seasonal}
{\sc S.~Borovkova and H.~Geman}, {\em Seasonal and stochastic effects in commodity forward curves}, Review of Derivatives Research, 9 (2006), pp.~167--186.

\bibitem{brown2008drives}
{\sc S.~P. Brown and M.~K. Yttcel}, {\em What drives natural gas prices?}, The Energy Journal, 29 (2008), pp.~45--60.

\bibitem{capinski2012black}
{\sc M.~Capi{\'n}ski and E.~Kopp}, {\em The Black--{S}choles Model}, Cambridge University Press, 2012.

\bibitem{carrillo2018analytical}
{\sc J.~A. Carrillo, Y.-P. Choi, C.~Totzeck, and O.~Tse}, {\em An analytical framework for consensus-based global optimization method}, Mathematical Models and Methods in Applied Sciences, 28 (2018), pp.~1037--1066.

\bibitem{carrillo2021consensus}
{\sc J.~A. Carrillo, S.~Jin, L.~Li, and Y.~Zhu}, {\em A consensus-based global optimization method for high dimensional machine learning problems}, ESAIM: Control, Optimisation and Calculus of Variations, 27 (2021), p.~S5.

\bibitem{chen2022stochastic}
{\sc S.-H. Chen, S.-Z. Chiou-Wei, and Z.~Zhu}, {\em Stochastic seasonality in commodity prices: the case of {US} natural gas}, Empirical Economics, 62 (2022), pp.~2263--2284.

\bibitem{chen2010implications}
{\sc Z.~Chen and P.~A. Forsyth}, {\em Implications of a regime-switching model on natural gas storage valuation and optimal operation}, Quantitative Finance, 10 (2010), pp.~159--176.

\bibitem{chevalier2022american}
{\sc E.~Chevalier, S.~Pulido, and E.~Z{\'u}{\~n}iga}, {\em American options in the volterra heston model}, SIAM Journal on Financial Mathematics, 13 (2022), pp.~426--458.

\bibitem{cipriani2022zero}
{\sc C.~Cipriani, H.~Huang, and J.~Qiu}, {\em Zero-inertia limit: from particle swarm optimization to consensus-based optimization}, SIAM Journal on Mathematical Analysis, 54 (2022), pp.~3091--3121.

\bibitem{da2014stochastic}
{\sc G.~Da~Prato and J.~Zabczyk}, {\em Stochastic equations in infinite dimensions}, Cambridge university press, 2014.

\bibitem{dandapani2021quadratic}
{\sc A.~Dandapani, P.~Jusselin, and M.~Rosenbaum}, {\em From quadratic hawkes processes to super-heston rough volatility models with {Z}umbach effect}, Quantitative Finance, 21 (2021), pp.~1235--1247.

\bibitem{dembo1998zeitouni}
{\sc A.~Dembo}, {\em Zeitouni, 0. large deviations techniques and applications}, Applications of Mathematics, 38 (1998).

\bibitem{dupire1994pricing}
{\sc B.~Dupire et~al.}, {\em Pricing with a smile}, Risk, 7 (1994), pp.~18--20.

\bibitem{fornasier2021consensus}
{\sc M.~Fornasier, H.~Huang, L.~Pareschi, and P.~S{\"u}nnen}, {\em Consensus-based optimization on the sphere: Convergence to global minimizers and machine learning}, The Journal of Machine Learning Research, 22 (2021), pp.~10722--10776.

\bibitem{fukasawa2019volatility}
{\sc M.~Fukasawa, T.~Takabatake, and R.~Westphal}, {\em Is volatility rough?}, arXiv preprint arXiv:1905.04852,  (2019).

\bibitem{gatheral2022volatility}
{\sc J.~Gatheral, T.~Jaisson, and M.~Rosenbaum}, {\em Volatility is rough}, in Commodities, Chapman and Hall/CRC, 2022, pp.~659--690.

\bibitem{gouel2017estimating}
{\sc C.~Gouel and N.~Legrand}, {\em Estimating the competitive storage model with trending commodity prices}, Journal of Applied Econometrics, 32 (2017), pp.~744--763.

\bibitem{guyon2014path}
{\sc J.~Guyon}, {\em Path-dependent volatility}, risk,  (2014).

\bibitem{guyon2021volatility}
{\sc J.~Guyon and J.~Lekeufack}, {\em Volatility is (mostly) path-dependent}, Quantitative Finance,  (2021), pp.~1--38.

\bibitem{guyon2023volatility}
\leavevmode\vrule height 2pt depth -1.6pt width 23pt, {\em Volatility is (mostly) path-dependent}, Quantitative Finance, 23 (2023), pp.~1221--1258.

\bibitem{heston1993closed}
{\sc S.~L. Heston}, {\em A closed-form solution for options with stochastic volatility with applications to bond and currency options}, The {R}eview of {F}inancial {S}tudies, 6 (1993), pp.~327--343.

\bibitem{hornik1989multilayer}
{\sc K.~Hornik, M.~Stinchcombe, and H.~White}, {\em Multilayer feedforward networks are universal approximators}, Neural {N}etworks, 2 (1989), pp.~359--366.

\bibitem{hornik1990universal}
\leavevmode\vrule height 2pt depth -1.6pt width 23pt, {\em Universal approximation of an unknown mapping and its derivatives using multilayer feedforward networks}, Neural {N}etworks, 3 (1990), pp.~551--560.

\bibitem{hull1987pricing}
{\sc J.~Hull and A.~White}, {\em The pricing of options on assets with stochastic volatilities}, The {J}ournal of {F}inance, 42 (1987), pp.~281--300.

\bibitem{hulshof2016market}
{\sc D.~Hulshof, J.-P. Van Der~Maat, and M.~Mulder}, {\em Market fundamentals, competition and natural-gas prices}, Energy {P}olicy, 94 (2016), pp.~480--491.

\bibitem{jaillet2004valuation}
{\sc P.~Jaillet, E.~I. Ronn, and S.~Tompaidis}, {\em Valuation of commodity-based swing options}, Management science, 50 (2004), pp.~909--921.

\bibitem{karatzas2012brownian}
{\sc I.~Karatzas and S.~Shreve}, {\em Brownian motion and stochastic calculus}, vol.~113, Springer Science \& Business Media, 2012.

\bibitem{lari2001discrete}
{\sc A.~Lari-Lavassani, M.~Simchi, and A.~Ware}, {\em A discrete valuation of swing options}, Canadian applied mathematics quarterly, 9 (2001), pp.~35--74.

\bibitem{levenberg1944method}
{\sc K.~Levenberg}, {\em A method for the solution of certain non-linear problems in least squares}, Quarterly of applied mathematics, 2 (1944), pp.~164--168.

\bibitem{longstaff2001valuing}
{\sc F.~A. Longstaff and E.~S. Schwartz}, {\em Valuing {A}merican options by simulation: a simple least-squares approach}, The review of financial studies, 14 (2001), pp.~113--147.

\bibitem{marquardt1963algorithm}
{\sc D.~W. Marquardt}, {\em An algorithm for least-squares estimation of nonlinear parameters}, Journal of the society for Industrial and Applied Mathematics, 11 (1963), pp.~431--441.

\bibitem{miller2006applied}
{\sc P.~D. Miller}, {\em Applied asymptotic analysis}, vol.~75, American Mathematical Soc., 2006.

\bibitem{mohammed1984stochastic}
{\sc S.-E.~A. Mohammed}, {\em Stochastic functional differential equations}, (No Title),  (1984).

\bibitem{moreno2019long}
{\sc M.~Moreno, A.~Novales, and F.~Platania}, {\em Long-term swings and seasonality in energy markets}, European Journal of Operational Research, 279 (2019), pp.~1011--1023.

\bibitem{mu2007weather}
{\sc X.~Mu}, {\em Weather, storage, and natural gas price dynamics: Fundamentals and volatility}, Energy Economics, 29 (2007), pp.~46--63.

\bibitem{oksendal2003stochastic}
{\sc B.~{\O}ksendal}, {\em Stochastic differential equations}, Springer, 2003.

\bibitem{pinnau2017consensus}
{\sc R.~Pinnau, C.~Totzeck, O.~Tse, and S.~Martin}, {\em A consensus-based model for global optimization and its mean-field limit}, Mathematical Models and Methods in Applied Sciences, 27 (2017), pp.~183--204.

\bibitem{qiu2020stochastic}
{\sc J.~Qiu}, {\em Stochastic path-dependent {H}amilton-{J}acobi-{B}ellman equations and controlled stochastic differential equations with random path-dependent coefficients}, arXiv preprint arXiv:2006.13043,  (2020).

\bibitem{revuz1999continuous}
{\sc D.~Revuz and M.~Yor}, {\em Continuous martingales and {B}rownian motion}, vol.~293, Springer Science \& Business Media, 1999.

\bibitem{rubaszek2020role}
{\sc M.~Rubaszek and G.~S. Uddin}, {\em The role of underground storage in the dynamics of the {US} natural gas market: A threshold model analysis}, Energy {E}conomics, 87 (2020), p.~104713.

\bibitem{thompson2009natural}
{\sc M.~Thompson, M.~Davison, and H.~Rasmussen}, {\em Natural gas storage valuation and optimization: A real options application}, Naval Research Logistics (NRL), 56 (2009), pp.~226--238.

\bibitem{vucel1994fuel}
{\sc M.~K. V{\"u}cel and S.~Guo}, {\em Fuel taxes and cointegration of energy prices}, Contemporary Economic Policy, 12 (1994), pp.~33--41.

\bibitem{white1984nonlinear}
{\sc H.~White and I.~Domowitz}, {\em Nonlinear regression with dependent observations}, Econometrica: Journal of the Econometric Society,  (1984), pp.~143--161.

\bibitem{yong1999stochastic}
{\sc J.~Yong and X.~Y. Zhou}, {\em Stochastic controls: {H}amiltonian systems and {HJB} equations}, vol.~43, Springer Science \& Business Media, 1999.

\bibitem{zumbach2010volatility}
{\sc G.~Zumbach}, {\em Volatility conditional on price trends}, Quantitative Finance, 10 (2010), pp.~431--442.

\end{thebibliography}

\end{document}